\newcommand{\tabitem}{~~\llap{\textbullet}~~}
\theoremstyle{plain}
\newtheorem{theorem}{Theorem}
\newtheorem{lemma}[theorem]{Lemma}
\theoremstyle{definition}
\newtheorem{remark}{Remark}[section]
\newtheorem{definition}{Definition}[section]
\newtheoremstyle{colon}%
{}
{}
{\itshape}
{}
{\bfseries}
{:}
{ }
{}
\theoremstyle{colon}
\newtheorem{lego}{Lego Brick}[section]
\newcommand{\cS}{\mathcal{S}}
\newcommand{\cY}{\mathcal{Y}}
\newcommand{\cX}{\mathcal{X}}
\newcommand{\cI}{\mathcal{I}}
\newcommand{\cC}{\mathcal{C}}
\newcommand{\cM}{\mathcal{M}}
\newcommand{\cN}{\mathcal{N}}
\newcommand{\cD}{\mathcal{D}}
\newcommand{\cK}{\mathcal{K}}
\newcommand{\cL}{\mathcal{L}}
\newcommand{\IID}{\overset{\mathrm{iid}}{\sim}}
\newcommand{\BERN}{\mathrm{Bern}}
\newcommand{\BSC}{\mathrm{BSC}}
\newcommand{\Unif}{\mathrm{Unif}}
\newcommand{\Binom}{\mathrm{Binom}}
\newcommand{\weight}{\mathrm{wt}}
\newcommand{\wtilde}[1]{\widetilde{#1}}
\newcommand{\td}[1]{\tilde{#1}}
\newcommand{\abs}[1]{{\left\lvert #1 \right\rvert}}
\tikzstyle{box} = [rectangle, rounded corners, minimum width=3.5cm, minimum height=0.9cm, text centered, text width=3.5cm, draw=black, fill=black!15]
\tikzstyle{smallbox} = [rectangle, rounded corners, minimum width=2.5cm, minimum height=0.9cm, text centered, text width=2.5cm, draw=black, fill=black!15]
\tikzstyle{p2p-box} = [rectangle, rounded corners, minimum width=4cm, minimum height=0.9cm, text centered, text width=4cm, draw=black, fill=black!15]
\tikzstyle{wyner-box} = [rectangle, rounded corners, minimum width=2.5cm, minimum height=0.9cm, text centered, text width=2.5cm, draw=black, fill=black!15]
\tikzstyle{mac-box} = [rectangle, rounded corners, minimum width=3.2cm, minimum height=0.9cm, text centered, text width=3.2cm, draw=black, fill=black!15]
\tikzstyle{marton-box} = [rectangle, rounded corners, minimum width=3.3cm, minimum height=0.9cm, text centered, text width=3.3cm, draw=black, fill=black!15]
\tikzstyle{mdc-box} = [rectangle, rounded corners, minimum width=3cm, minimum height=0.9cm, text centered, text width=3cm, draw=black, fill=black!15]
\tikzstyle{downlink-cran-box} = [rectangle, rounded corners, minimum width=3cm, minimum height=0.9cm, text centered, text width=3cm, draw=black, fill=black!15]
\tikzstyle{uplink-cran-box} = [rectangle, rounded corners, minimum width=2.7cm, minimum height=0.9cm, text centered, text width=2.7cm, draw=black, fill=black!15]
\tikzstyle{slepian-box} = [rectangle, rounded corners, minimum width=2.3cm, minimum height=0.9cm, text centered, text width=2.3cm, draw=black, fill=black!15]
\tikzstyle{lossless-box} = [rectangle, rounded corners, minimum width=2.3cm, minimum height=0.9cm, text centered, text width=2.3cm, draw=black, fill=black!15]
\tikzstyle{arrow} = [thick,->,>=stealth]
\let\save@mathaccent\mathaccent
\newcommand*\if@single[3]{%
  \setbox0\hbox{${\mathaccent"0362{#1}}^H$}%
  \setbox2\hbox{${\mathaccent"0362{\kern0pt#1}}^H$}%
  \ifdim\ht0=\ht2 #3\else #2\fi
  }
\newcommand*\rel@kern[1]{\kern#1\dimexpr\macc@kerna}
\newcommand*\widebar[1]{\@ifnextchar^{{\wide@bar{#1}{0}}}{\wide@bar{#1}{1}}}
\newcommand*\wide@bar[2]{\if@single{#1}{\wide@bar@{#1}{#2}{1}}{\wide@bar@{#1}{#2}{2}}}
\newcommand*\wide@bar@[3]{%
  \begingroup
  \def\mathaccent##1##2{%
    \let\mathaccent\save@mathaccent
    \if#32 \let\macc@nucleus\first@char \fi
    \setbox\z@\hbox{$\macc@style{\macc@nucleus}_{}$}%
    \setbox\tw@\hbox{$\macc@style{\macc@nucleus}{}_{}$}%
    \dimen@\wd\tw@
    \advance\dimen@-\wd\z@
    \divide\dimen@ 3
    \@tempdima\wd\tw@
    \advance\@tempdima-\scriptspace
    \divide\@tempdima 10
    \advance\dimen@-\@tempdima
    \ifdim\dimen@>\z@ \dimen@0pt\fi
    \rel@kern{0.6}\kern-\dimen@
    \if#31
      \overline{\rel@kern{-0.6}\kern\dimen@\macc@nucleus\rel@kern{0.4}\kern\dimen@}%
      \advance\dimen@0.4\dimexpr\macc@kerna
      \let\final@kern#2%
      \ifdim\dimen@<\z@ \let\final@kern1\fi
      \if\final@kern1 \kern-\dimen@\fi
    \else
      \overline{\rel@kern{-0.6}\kern\dimen@#1}%
    \fi
  }%
  \macc@depth\@ne
  \let\math@bgroup\@empty \let\math@egroup\macc@set@skewchar
  \mathsurround\z@ \frozen@everymath{\mathgroup\macc@group\relax}%
  \macc@set@skewchar\relax
  \let\mathaccentV\macc@nested@a
  \if#31
    \macc@nested@a\relax111{#1}%
  \else
    \def\gobble@till@marker##1\endmarker{}%
    \futurelet\first@char\gobble@till@marker#1\endmarker
    \ifcat\noexpand\first@char A\else
      \def\first@char{}%
    \fi
    \macc@nested@a\relax111{\first@char}%
  \fi
  \endgroup
}
\newcommand*{\rom}[1]{\expandafter\@slowromancap\romannumeral #1@}
\chardef\mathlig@atcode\count255
\def\actively#1#2{\begingroup\uccode`\~=`#2\relax\uppercase{\endgroup#1~}}
\def\mathlig@gobble{\afterassignment\mathlig@next@cmd\let\mathlig@next= }
\def\mathlig@delim{\mathlig@delim}
\def\mathlig@defcs#1{\expandafter\def\csname#1\endcsname}
\def\mathlig@let@cs#1#2{\expandafter\let\expandafter#1\csname#2\endcsname}
\def\mathlig@appendcs#1#2{\expandafter\edef\csname#1\endcsname{\csname#1\endcsname#2}}
\def\mathlig#1#2{\mathlig@checklig#1\mathlig@end\mathlig@defcs{mathlig@back@#1}{#2}\ignorespaces}
\def\mathlig@checklig#1#2\mathlig@end{%
 \expandafter\ifx\csname mathlig@forw@#1\endcsname\relax
 \expandafter\mathchardef\csname mathlig@back@#1\endcsname=\mathcode`#1%
 \mathcode`#1"8000\actively\def#1{\csname mathlig@look@#1\endcsname}%
 \mathlig@dolig#1\mathlig@delim
\fi
\mathlig@checksuffix#1#2\mathlig@end
}
\def\mathlig@checksuffix#1#2\mathlig@end{%
\ifx\mathlig@delim#2\mathlig@delim\relax\else\mathlig@checksuffix@{#1}#2\mathlig@end\fi
}
\def\mathlig@checksuffix@#1#2#3\mathlig@end{%
\expandafter\ifx\csname mathlig@forw@#1#2\endcsname\relax\mathlig@dosuffix{#1}{#2}\fi
\mathlig@checksuffix{#1#2}#3\mathlig@end
}
\def\mathlig@dosuffix#1#2{%
\mathlig@appendcs{mathlig@toks@#1}{#2}%
\mathlig@dolig{#1}{#2}\mathlig@delim
}
\def\mathlig@dolig#1#2\mathlig@delim{%
 \mathlig@defcs{mathlig@look@#1#2}{%
 \mathlig@let@cs\mathlig@next{mathlig@forw@#1#2}\futurelet\mathlig@next@tok\mathlig@next}%
 \mathlig@defcs{mathlig@forw@#1#2}{%
  \mathlig@let@cs\mathlig@next{mathlig@back@#1#2}%
  \mathlig@let@cs\checker{mathlig@chck@#1#2}%
  \mathlig@let@cs\mathligtoks{mathlig@toks@#1#2}%
  \expandafter\ifx\expandafter\mathlig@delim\mathligtoks\mathlig@delim\relax\else
  \expandafter\checker\mathligtoks\mathlig@delim\fi
  \mathlig@next
 }%
 \mathlig@defcs{mathlig@toks@#1#2}{}%
 \mathlig@defcs{mathlig@chck@#1#2}##1##2\mathlig@delim{%
  \ifx\mathlig@next@tok##1%
   \mathlig@let@cs\mathlig@next@cmd{mathlig@look@#1#2##1}\let\mathlig@next\mathlig@gobble
  \fi 
  \ifx\mathlig@delim##2\mathlig@delim\relax\else
   \csname mathlig@chck@#1#2\endcsname##2\mathlig@delim
  \fi
 }%
%
 \ifx\mathlig@delim#2\mathlig@delim\else
  \mathlig@defcs{mathlig@back@#1#2}{\csname mathlig@back@#1\endcsname #2}%
 \fi
}%
\newcommand{\muspace}{\mspace{1mu}}
\DeclareRobustCommand{\scond}{\mathchoice{\muspace\vert\muspace}{\vert}{\vert}{\vert}}
\newcommand{\cond}{\mathchoice{\,\vert\,}{\mspace{2mu}\vert\mspace{2mu}}{\vert}{\vert}}
\DeclareRobustCommand{\discint}{\mathchoice{\mspace{-1.5mu}:\mspace{-1.5mu}}{\mspace{-1.5mu}:\mspace{-1.5mu}}{:}{:}}
\def\P{{\mathsf P}}
\def\eps{\epsilon}
\DeclareMathOperator\E{\textsf{E}}
\let\P\relax
\DeclareMathOperator\P{\textsf{P}}
\newcommand{\Bern}{\mathrm{Bern}}
\def\textiid{i.i.d.\@\xspace}
\newcommand\iid{\ifmmode\text{ i.i.d. } \else \textiid \fi}
\def\mathllap{\mathpalette\mathllapinternal}
\def\mathllapinternal#1#2{%
  \llap{$\mathsurround=0pt#1{#2}$}}
\def\clap#1{\hbox to 0pt{\hss#1\hss}}
\def\mathclap{\mathpalette\mathclapinternal}
\def\mathclapinternal#1#2{%
  \clap{$\mathsurround=0pt#1{#2}$}}
\let\oldstackrel\stackrel
\renewcommand{\stackrel}[2]{\oldstackrel{\mathclap{#1}}{#2}}
\renewcommand{\hbar}{h\mathllap{\overline{\vphantom{h}\hphantom{\rule{4.6pt}{0pt}}}\mspace{0.77mu}}}
\newcommand{\urltilde}{\kern -.06em\lower -.06em\hbox{~}\kern .02em}
\begin{document}
\allowdisplaybreaks

\title{A Lego-Brick Approach to Coding \\ for Network Communication}

\author{Nadim~Ghaddar,~\IEEEmembership{Student Member,~IEEE,}
        Shouvik~Ganguly,~\IEEEmembership{Member,~IEEE,}
        Lele~Wang,~\IEEEmembership{Member,~IEEE,}
        and~Young-Han~Kim,~\IEEEmembership{Fellow,~IEEE}
\thanks{This paper was presented, in part, at the Annual Allerton Conference on Communication, Control, and Computing 2015, the International Symposium on Information Theory (ISIT) 2021, and the Canadian Workshop on Information Theory (CWIT) 2022.}
\thanks{N. Ghaddar is with the Department of Electrical and Computer Engineering, University of California San Diego, La Jolla, CA 92093, USA (e-mail: nghaddar@ucsd.edu).}
\thanks{S. Ganguly is with Samsung Research America, Plano, TX 75023, USA (email: shouvik.g@samsung.com)}
\thanks{L. Wang is with the Department of Electrical and Computer Engineering, University of British Columbia, Vancouver, BC V6T1Z4, Canada (email: lelewang@ece.ubc.ca)}
\thanks{Y.-H. Kim is with the Department of Electrical and Computer Engineering, University of California San Diego, La Jolla, CA 92093, USA (email: yhk@ucsd.edu)}}

%

\maketitle

\begin{abstract}
Coding schemes for several problems in network information theory are constructed starting from point-to-point channel codes that are designed for symmetric channels. Given that the point-to-point codes satisfy certain properties pertaining to the rate, the error probability, and the distribution of decoded sequences, bounds on the performance of the coding schemes are derived and shown to hold irrespective of other properties of the codes. In particular, we consider the problems of lossless and lossy source coding, Slepian--Wolf coding, Wyner--Ziv coding, Berger--Tung coding, multiple description coding, asymmetric channel coding, Gelfand--Pinsker coding, coding for multiple access channels, Marton coding for broadcast channels, and coding for cloud radio access networks (C-RAN's). We show that the coding schemes can achieve the best known inner bounds for these problems, provided that the constituent point-to-point channel codes are rate-optimal. This would allow one to leverage commercial off-the-shelf codes for point-to-point symmetric channels in the practical implementation of codes over networks. Simulation results demonstrate the gain of the proposed coding schemes compared to existing practical solutions to these problems.
\end{abstract}

\begin{IEEEkeywords}
Network information theory, channel coding, Slepian--Wolf coding, Gelfand--Pinsker coding, Wyner--Ziv coding, Berger--Tung coding, multiple description coding, Marton coding, multiple access channels, cloud radio access networks.
\end{IEEEkeywords}

\IEEEpeerreviewmaketitle

\section{Introduction} \label{sec:intro}
\IEEEPARstart{T}{oday's} modern infrastructure is becoming increasingly interconnected through information networks. Emerging applications in transportation systems, power systems, smart cities, cloud computing and digital healthcare, call for exceptionally efficient coding schemes to process, store and communicate the massive amounts of network data. In practice, the common paradigm of designing coding schemes over networks is to decompose the network into separate point-to-point links, where a point-to-point channel code is used over each link. Despite the practical convenience, such an approach is known to be sub-optimal from an information theoretic perspective, even when each link is utilized at its full capacity.

Network information theory studies the fundamental limits of network communication and the optimal coding schemes that achieve those limits. At a conceptual level, this theory has been hugely successful, with several basic coding schemes that are applicable to a variety of network models, and some optimal in certain special cases. Except for a few simple use cases, however, the coding schemes developed in network information theory have barely had any impact on the design of communication systems over networks. Even basic coding schemes such as Gelfand--Pinsker coding~\cite{Gelfand1980}, Marton coding~\cite{Marton1979} and compress-and-forward relaying~\cite{Cover1979} have not been used in practice in any meaningful manner in the forty years since their inception. The main reason behind this noticeable gap between theory and practice is that most of these coding schemes, albeit being conceptually beautiful, are not in an easily-implementable form, which is exemplified by the ubiquitous use of high-complexity coding techniques such as joint typicality encoding and decoding, and maximum likelihood decoding.

This paper is an attempt to close the aforementioned gap between theory and practice, so that all the beautiful coding schemes in network information theory (e.g., all the ones described in~\cite{NIT}) can be implemented in real systems to their full potential. To achieve this goal, we take a modular approach to transform the conceptual coding schemes into practical implementations. More specifically, we seek to identify basic coding schemes that are designed for one (or more) communication setting and satisfy certain properties, and combine them together to build a more complex coding scheme for a different communication setting. Under such a framework, we ask: 
\begin{itemize}
    \item What are the most primitive properties that the basic coding schemes should satisfy while being versatile in building coding schemes for network communication?
    \item How can such coding blocks be assembled together in different network communication scenarios?
    \item How do the performance guarantees and the achievable rate regions translate in different communication settings?
\end{itemize}%
We refer to such an approach to coding as a ``Lego-brick approach'' and to the basic coding blocks as ``Lego bricks'', taking a literal analogy to building complex Lego objects starting from simple components. As we shall see, such a Lego-brick approach to coding allows one to leverage commercial off-the-shelf codes that are designed for single-user symmetric channels (e.g., turbo codes~\cite{Berrou1993}, low-density parity-check (LDPC) codes~\cite{Gallager1962, Urbanke2001, Urbanke2013}, polar codes~\cite{Arikan2009}, or even hypothetical codes to be invented in the future) to build practical coding schemes for multi-user communication. 

\subsection{Related Work}
Many attempts to design practical coding schemes for multi-terminal scenarios have closely followed the footsteps of point-to-point channel coding. Polar codes, for example, have been specialized to several problems in network information theory, including, but not limited to, the Slepian--Wolf problem~\cite{Arikan2012}, the lossy source coding problem~\cite{Korada2010}, Gelfand--Pinsker problem~\cite{Korada2009}, the multiple description coding problem~\cite{Shi2015, Ghaddar2017}, multiple-access channels~\cite{Sasoglu2010,Abbe2012}, broadcast channels~\cite{Mondelli2015}, interference channels~\cite{Wang2014}, and relay channels~\cite{Serrano2012,Wang2015_2}. Sparse graph codes with logarithmic check node degrees have also been shown to achieve the optimal rates for various coding problems under maximum likelihood decoding, including the lossy source coding problem~\cite{Matsunaga2003}, the Gelfand--Pinsker problem and the Wyner--Ziv problem~\cite{Muramatsu2010}. Alternatively, low-density generator-matrix (LDGM) codes were shown to achieve the rate-distortion bound for the lossy source coding problem when optimal encoding is employed~\cite{Wainwright2010}, and to approach that bound when a variant of the low-complexity belief propagation algorithm is used~\cite{Wainwright2005}. Spatial coupling of low-density graph codes was also shown to be useful for approaching the optimal rates of the lossy source coding problem~\cite{Aref2015} and the Wyner--Ziv and Gelfand--Pinsker coding problems~\cite{Kumar2014} using the belief propagation algorithm with guided decimation. For Gaussian channels with Gaussian state that is known noncausally at the encoder (i.e., the dirty paper coding problem~\cite{Costa1983}), lattice codes have been shown to achieve capacity~\cite{Erez2005}, and variants of these codes with practical decoders have been proposed in the literature (e.g.,~\cite{Sommer2008,Shalvi2011}). On a separate note, randomly-constructed point-to-point channel codes were shown to be sufficient for achieving optimal rates for network coding problems; for example, random nested linear codes were shown to be optimal for asymmetric channels, Gelfand--Pinkser channels, and general broadcast channels~\cite{Pradhan2011}, and random point-to-point codes were used in~\cite{Lele2020} along with rate splitting, block-Markov coding and successive cancellation decoding to achieve the best known achievable rate over interference channels. While these constructions either are specific to certain families of point-to-point channel codes or involve nonconstructive and asymptotic code design, the Lego-brick approach to network coding, which we espouse in this paper, aims to distinguish the sufficient properties that would allow point-to-point channel codes to be used as building blocks in an \emph{explicit} construction of a coding scheme for a network setting.

Such an approach to coding for network communication is not new in general. In~\cite{Wyner1974}, Wyner constructed a Slepian--Wolf code for a doubly symmetric binary source starting from a point-to-point channel code designed for the binary symmetric channel (BSC). The duality between general Slepian--Wolf problems and point-to-point channel coding problems was further explored in~\cite{Ancheta1976,Yang2009}, where a maximum-likelihood channel decoder was assumed. In~\cite{Lele2015}, linear Slepian--Wolf codes were constructed starting from linear point-to-point channel codes designed for symmetric channels, where the exact relation of the rates and probability of error between the two problems was established. Further, a general method for constructing codes for asymmetric point-to-point channels was described in~\cite[Section~\rom{5}]{Mondelli2018}; the coding scheme uses a lossless source code and a channel code designed for a symmetric channel as its constituent building blocks. In~\cite{Muramatsu2014,Muramatsu2019}, codes for general (i.e., possibly asymmetric) channels were constructed using Slepian--Wolf codes, where constrained random number generators were used to construct the encoders and the decoders. Moreover, Marton coding for broadcast channels and successive cancellation decoding of multiple access channels were constructed in~\cite{Ganguly2020} starting from point-to-point codes that satisfy some basic properties. In particular, in our recent works of~\cite{Ghaddar2022} and~\cite{Ghaddar2021}, coding schemes for lossy source coding, asymmetric channels and channels with state were constructed starting from point-to-point channel codes designed for symmetric channels. 


\subsection{Summary of Proposed Coding Schemes}
This paper is an attempt to establish a systematic framework for constructing coding schemes for several problems in network information theory starting from simple building blocks (i.e., Lego bricks). Instead of restricting our attention to a particular channel code, we identify basic properties that point-to-point channel codes should satisfy so that they can be used in the construction of a coding scheme for a given network information theory problem. Under such a framework, we view the channel codes that satisfy those properties as ``black boxes'' which, when assembled together according to the guidelines that we will present, can be used as building blocks in the implementations of practical low-complexity coding schemes for network communication. The performance of the coding schemes in the new setting is derived in terms of the properties of the constituent point-to-point channel codes. Moreover, the proposed coding schemes can achieve rate regions that are strictly larger compared to the naive approach of coding over the point-to-point links of a network. In particular, we consider the problems of Slepian--Wolf coding~\cite{Slepian1973}, lossless and lossy source coding~\cite{Shannon1948, Shannon1959}, Wyner--Ziv coding~\cite{Wyner1976}, asymmetric channel coding~\cite{Shannon1948}, Gelfand--Pinkser coding~\cite{Gelfand1980}, coding for multiple access channels~\cite{Shannon1961}, Marton coding for broadcast channels~\cite{Marton1979},  Berger--Tung coding for distributed lossy compression~\cite{Berger1978}, multiple description coding~\cite{Witsenhausen1980}, and coding for cloud radio access networks~\cite{Simeone2016}. 

We identify two primitive properties of codes designed for point-to-point symmetric channels which allows to translate the performance of these codes to network communication settings. These two properties are the \emph{probability of error} of the code and its \emph{shaping distance}. Whereas the probability of error is a measure of the error correction capability of the code when simulated over the symmetric channel, the shaping distance is a measure of the shaping capability of the decoding function\footnote{See Section~\ref{sec:properties} for a formal definition of the shaping distance.}. Bounds on the performance of our code constructions will be derived in terms of these two properties regardless of other properties of the code. Such flexibility allows us to be tightly coupled with the most recent development in coding theory for point-to-point communication (in terms of performance), but at the same time to be completely decoupled from it (in terms of architecture). 


Fig.~\ref{fig:summary} illustrates the different coding problems considered in this paper, along with their constituent Lego bricks. Arrows pointing from a set of coding problems towards another coding problem means that codes for these problems can be used as building blocks in the design of a code for the designated problem. More precisely, coding schemes for the following problems will be constructed starting from the following codes.

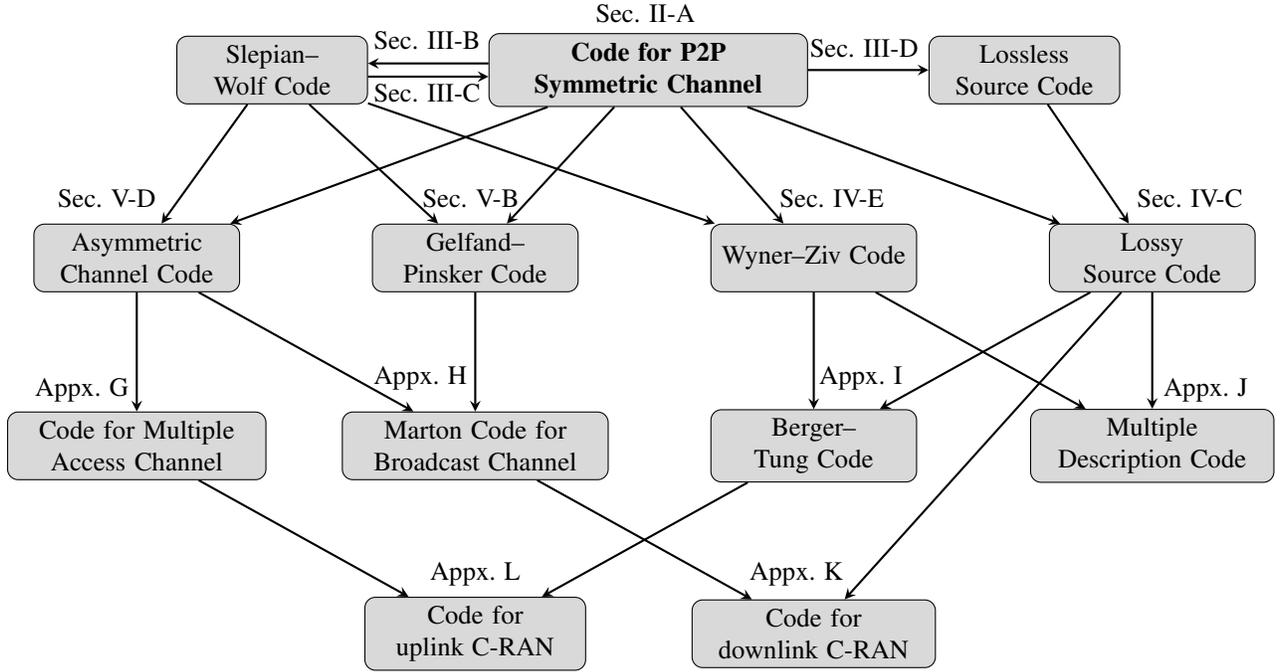
\begin{figure*}[t]
	\centering
    \hspace*{1em}
	\begin{tikzpicture}[node distance=2.5cm]
		\node (p2p) [p2p-box] {\textbf{Code for P2P Symmetric Channel}};
		\node (sw) [slepian-box, left of=p2p, xshift=-2.5cm] {Slepian--Wolf Code};
		\node (lossless) [lossless-box, right of=p2p, xshift=2.5cm] {Lossless Source Code};
		\node (gp) [smallbox, below of=p2p, xshift=-2.3cm] {Gelfand--Pinsker Code};
		\node (asym) [smallbox, left of=gp, xshift=-2cm] {Asymmetric Channel Code};
		\node (wz) [wyner-box, right of=gp, xshift=2cm] {Wyner--Ziv Code};
		\node (lossy) [smallbox, right of=wz, xshift=2cm] {Lossy Source Code};
		\node (mac) [mac-box, below of=asym] {Code for Multiple Access Channel};
		\node (marton) [marton-box, below of=gp] {Marton Code for Broadcast Channel};
		\node (bt) [wyner-box, below of=wz] {Berger--Tung Code};
		\node (mdc) [mdc-box, below of=lossy] {Multiple\\Description Code};
		\node (downlink-cran) [downlink-cran-box, below of=bt] {Code for\\ downlink C-RAN};
		\node (uplink-cran) [uplink-cran-box, below of=marton] {Code for\\ uplink C-RAN};
		\node[text width=2cm,fill=none] at (0.3,0.75) {Sec.~\ref{sec:p2p}};
		\node[text width=2cm,fill=none] at (-2.65,0.4) {Sec.~\ref{sec:p2p_slepian}};
		\node[text width=2cm,fill=none] at (-2.65,-0.3) {Sec.~\ref{sec:slepian_p2p}};
		\node[text width=2cm,fill=none] at (3.15,0.3) {Sec.~\ref{sec:lossless}};
		\node[text width=2cm,fill=none] at (-6.85,-1.7) {Sec.~\ref{sec:asym}};
		\node[text width=2cm,fill=none] at (-2,-1.7) {Sec.~\ref{sec:gelfand_coding_scheme}};
		\node[text width=2cm,fill=none] at (2.75,-1.7) {Sec.~\ref{sec:wyner}};
		\node[text width=2cm,fill=none] at (7.5,-1.7) {Sec.~\ref{sec:lossy_asym}};
		\node[text width=2cm,fill=none] at (-7.15,-4.25) {Appx.~\ref{sec:mac}};
		\node[text width=2cm,fill=none] at (-2.65,-4.1) {Appx.~\ref{sec:marton}};
		\node[text width=2cm,fill=none] at (3.275,-4.1) {Appx.~\ref{sec:berger}};
		\node[text width=2cm,fill=none] at (7.85,-4.25) {Appx.~\ref{sec:mdc}};
		\node[text width=2cm,fill=none] at (-1.9,-6.7) {Appx.~\ref{sec:uplink-cran}};
		\node[text width=2cm,fill=none] at (2.35,-6.7) {Appx.~\ref{sec:cran}};
		
		\draw[transform canvas={yshift=0.25em}] [arrow] (p2p) -- (sw);
		\draw[transform canvas={yshift=-0.25em}] [arrow] (sw) -- (p2p);
		\draw [arrow] (p2p) -- (lossless);
		\draw [arrow] (p2p) -- (gp);
		\draw [arrow] (sw) -- (gp);
		\draw [arrow] (p2p) -- (wz);
		\draw [arrow] (sw) -- (wz);
		\draw [arrow] (p2p) -- (asym);
		\draw [arrow] (sw) -- (asym);
		\draw [arrow] (p2p) -- (lossy);
		\draw [arrow] (lossless) -- (lossy);
		\draw [arrow] (lossy) -- (bt);
		\draw [arrow] (wz) -- (bt);
		\draw [arrow] (lossy) -- (mdc);
		\draw [arrow] (wz) -- (mdc);
		\draw [arrow] (asym) -- (mac);
		\draw [arrow] (asym) -- (marton);
		\draw [arrow] (gp) -- (marton);
		\draw [arrow] (marton) -- (downlink-cran);
		\draw [arrow] (lossy) -- (downlink-cran);
		\draw [arrow] (mac) -- (uplink-cran);
		\draw [arrow] (bt) -- (uplink-cran);
	\end{tikzpicture}
	\caption{Overview of the proposed coding schemes and their constituent building blocks, along with the section number in which each coding scheme is presented.}
	\label{fig:summary}
\end{figure*}

\begin{enumerate}
	\item Code for point-to-point symmetric channel $\to$ Lossless source code
	\item Code for point-to-point symmetric channel $\to$ Slepian--Wolf code
	\item Slepian--Wolf code $\to$ Code for point-to-point symmetric channel
	\item Code for point-to-point symmetric channel + Lossless source code $\to$ Lossy source code
	\item Code for point-to-point symmetric channel + Slepian--Wolf code $\to$ Wyner--Ziv code
	\item Code for point-to-point symmetric channel + Slepian--Wolf code $\to$ Gelfand--Pinsker code
	\item Code for point-to-point symmetric channel + Slepian--Wolf code $\to$ Asymmetric channel code
	\item Two asymmetric channel codes $\to$ Code for the two-user multiple access channel
	\item Asymmetric channel code + Gelfand--Pinsker code $\to$ Marton code for the two-user broadcast channel
	\item Lossy source code + Wyner--Ziv code $\to$ Berger--Tung code
	\item Two lossy source codes + Wyner--Ziv code $\to$ Multiple description code
	\item Two lossy source codes + Marton code for broadcast channel $\to$ Code for the downlink C-RAN problem
	\item Multiple access channel code + Berger--Tung code $\to$ Code for the uplink C-RAN problem
\end{enumerate}
It follows that all the coding schemes can be constructed starting from codes for point-to-point symmetric channels. As an example, since a Marton code for the two-user broadcast channel can be constructed starting from an asymmetric channel code and a Gelfand--Pinsker code, each of which can be constructed from a Slepian--Wolf code and a point-to-point channel code for a symmetric channel, and since the Slepian--Wolf code can be constructed from a point-to-point symmetric channel code, it follows that a Marton code can be constructed using four channel codes that are designed for point-to-point symmetric channels. Similarly, all other coding schemes follow from point-to-point channel coding. Such an approach enables one to leverage commercial off-the-shelf codes (such as those studied in~\cite{Berrou1993,Gallager1962, Urbanke2001, Urbanke2013, Arikan2009}) for point-to-point channels to build codes for the aforementioned coding problems in network information theory. Provided that the point-to-point channel codes are rate-optimal, the proposed constructions can achieve the best known inner bounds for these coding problems.


In several of the constructions that we will present, it will be assumed that two of the constituent linear point-to-point channel codes are \emph{nested}. Such a nested linear structure is helpful to construct low-complexity coding schemes for network communication, as previously pointed out in the literature (e.g.,~\cite{Pradhan2011,Zamir2002}). Nonetheless, it turns out that the nestedness requirement is not necessary, as we shall see later on in the paper (Section~\ref{sec:block_markov}), where, at the price of slightly larger implementation complexity, the constructions will be modified to account for the relaxed assumptions on the constituent point-to-point channel codes. One important implication of these modified constructions is the following: \emph{any} set of sequences of point-to-point channel codes that have a vanishing error probability over a symmetric channel can be combined with \emph{any} set of sequences of point-to-point channel codes that have a vanishing shaping distance to construct coding schemes that achieve the best known inner bounds for the coding problems considered in this paper. As a consequence of this observation, it follows that polar codes with successive cancellation decoding~\cite{Arikan2009} can achieve the best known inner bounds for the aforementioned coding problems using the proposed constructions.

We conclude this section with two interesting observations regarding the proposed constructions. First, one can see some form of duality between some of the constructions. For example, the encoder of the asymmetric channel coding scheme looks almost identical to the decoder of the lossy source coding scheme, and vice versa. Such a relation is well-known in the literature and has been pointed out in~\cite{Korada2010} and~\cite{Berger1998}. Nonetheless, note that, in our construction, the asymmetric channel encoder has a ``shaping role'', whereas the lossy source decoder has an ``error correction role''. Although not mathematically precise, this suggests some form of duality between error correction and shaping in our constructions, which can be seen as analogous to the well-understood packing-and-covering duality in network information theory. Similar observations can be made between the Gelfand--Pinkser coding scheme and the Wyner--Ziv coding scheme, between the Marton coding scheme and the Berger--Tung coding scheme, as well as between the downlink and uplink C-RAN code constructions. Secondly, as will be apparent from most of our constructions, randomization plays an important role in the design of coding schemes for network communication. The Lego bricks will often be used alongside a ``dithering'' brick, which generates a binary sequence uniformly at random, or an ``interleaver'' brick, which applies a permutation chosen uniformly at random to a binary sequence. Such randomness can shape signals into a desired structure, and will often be assumed to be shared between the encoders and the decoders. Note that sharing randomness between the transmitter and the receiver is a common practice in wireless communications, e.g., sharing the (pseudo)random spreading code in code-division multiple access (CDMA) systems~\cite{3gpp.cdma} or sharing the interleaving pattern in 5G NR systems~\cite{3gpp.38.212}.

\subsection{Paper Organization and Notation}
The rest of the paper is organized as follows. Section~\ref{sec:preliminaries} describes the problem of point-to-point channel coding, along with two primitive properties of point-to-point channel codes that we will focus on throughout the paper, namely, the error probability and the shaping distance. Section~\ref{sec:slepian} discusses the Slepian--Wolf coding problem, along with its specialization to the lossless source coding problem. More specifically, an ``equivalence'' is established between designing a code for the binary Slepian--Wolf problem and designing a point-to-point channel code for a symmetric binary-input channel. In Section~\ref{sec:lossy}, coding schemes for the symmetric and asymmetric lossy source coding problems are described, which can be extended to the case when side information is available at the decoder (i.e., the Wyner--Ziv problem). Section~\ref{sec:gelfand} describes a coding scheme for the binary-input Gelfand--Pinsker problem, along with its specialization to asymmetric point-to-point channel coding. Simulation results are presented for lossy source coding of an asymmetric source and for Gelfand--Pinsker coding in Sections~\ref{sec:lossy_simulation} and~\ref{sec:gelfand_simulation}, respectively. In Section~\ref{sec:block_markov}, modified constructions of the coding schemes are presented, which not only do not require any nested structure between the constituent Lego bricks, but also involve properties that are easily verifiable in practice. The modified constructions have a block-Markov structure, i.e., they are defined based upon the encoding and decoding of multiple message blocks, where the input to one coding block can depend on the outputs of previous/subsequent blocks. To complete the picture, code constructions for the remaining multiterminal source and channel coding problems are provided in the appendices. Appendix~\ref{sec:mac} considers coding for multiple access channels, and Appendix~\ref{sec:marton} describes a Marton coding scheme for broadcast channels. As for multiterminal lossy compression, Appendices~\ref{sec:berger} and~\ref{sec:mdc} present coding schemes for the Berger--Tung problem and the multiple description coding problem, respectively. Appendices~\ref{sec:cran} and~\ref{sec:uplink-cran} focus on the downlink and uplink C-RAN problems respectively, which are multihop networks entailing both source and channel coding counterparts. The code constructions in the appendices are largely inspired by their random coding counterparts as well as the previous constructions in the main sections of the paper, yet we choose to include them in the paper as they help to provide a comprehensive treatment of the Lego-brick approach to coding over networks. Simulation results are also provided in Appendices~\ref{sec:marton}, \ref{sec:cran} and \ref{sec:uplink-cran} for the Marton coding problem, and the downlink and uplink C-RAN problems, respectively, which can be of independent interest for both theoreticians and practitioners in the field. Section~\ref{sec:conclusion} concludes the paper and gives some insights towards potential future research directions. 

Notation: The following notation will be used. We write $x_1^n$ to denote a column vector $(x_1, \ldots, x_n)^T$, and $x_i^j$, for $i < j$, to denote the subvector $(x_i,\ldots,x_j)^T$. Sets are denoted in calligraphic letters, random variables in upper-case letters, and realizations of random variables in lower-case letters, i.e., $\cX$ and $\cY$ are two sets, $X$ and $Y$ are random variables, and $x$ and $y$ are their respective realizations. The distribution of a random variable $X$ will be denoted as $p_X = p_X(x)$, $x \in \cX$. When clear from the context, we will also often use the shorthand notation $p(x)$ to mean $p_X(x)$. The probability of an event $A$ is denoted as $\P\{A\}$. For a positive integer $n$, we use $[n]$ to denote the set $\{1,\ldots,n\}$. For a finite set $\cS$, we write $\abs{\cS}$ to denote the cardinality of $\cS$. $H(X)$ is used to denote the entropy of a random variable $X$ (in base 2). When $X\sim \BERN(\theta)$, we write $H(\theta)$ to denote $H(X)$. With a slight abuse of notation, we will also use $H$ to denote a parity-check matrix of a linear code. The mutual information of two random variables $X$ and $Y$ is written as $I(X;Y)$. The symmetric capacity of a binary-input channel $p(y \cond x)$ is written as $I(\BERN(1/2), p(y \cond x))$.

\section{Preliminaries} \label{sec:preliminaries}
\subsection{Basic Lego Brick: Code for a Point-to-Point Symmetric Channel} \label{sec:p2p}
Consider a binary-input discrete memoryless channel $p(y\cond x)$ with an input alphabet $\cX=\{0,1\}$, an output alphabet $\cY$ and a collection of conditional probability mass functions (pmf's)
$p(y\cond x)$ on $\cY$ for each $x\in \cX$. A $(k,n)$ point-to-point channel code $(f,\phi)$ for the channel $p(y\cond x)$ consists of
\begin{itemize}
	\item a codebook $\cC \subseteq \{0,1\}^n$ of size $|\cC| = 2^{k}$,
	\item an encoder $f: [2^{k}] \to \cC$ that maps each message $m \in [2^{k}]$ to a codeword $x^n = f(m)$,
	\item a decoder $\phi: \cY^n\to \cC$ that assigns a codeword estimate $\hat{x}^n = \phi(y^n)$ to each received sequence $y^n$.
\end{itemize}
The rate of the code is $R=k/n$. We say that the channel code is \emph{linear} if for any two codewords $c^n, \tilde{c}^n \in \cC$, we have $c^n \oplus \tilde{c}^n \in \cC$. A linear code can be alternatively defined by its parity-check matrix $H_{(n-k) \times n}$ and its decoding function $\phi$. In this case, the codebook can be written as $\cC = \{c^n: Hc^n = 0^{n-k}\}$. With a slight abuse of notation, when the code is linear and its parity-check matrix is $H$, we will refer to it as a $(k,n)$ linear point-to-point channel code $(H,\phi)$.

\begin{remark} \label{remark:H_tilde}
Given any $(n-k)\times n$ parity-check matrix $H$, we will assume throughout the paper that the last $n-k$ columns of $H$ are linearly independent, i.e., $H = \begin{bmatrix}A & B\end{bmatrix}$ for some nonsingular $(n-k)\times (n-k)$ matrix $B$. As it will be deemed useful in the sequel, we also introduce the following linear transformation of the parity-check matrix,
\begin{equation} \label{eqn:H_tilde}
	\wtilde{H} = \begin{bmatrix}
		\bf 0 \\
		B^{-1}H
	\end{bmatrix} = \begin{bmatrix} {\bf 0} & {\bf 0} \\
	B^{-1}A & I
	\end{bmatrix},
\end{equation}
where $\bf 0$ denotes the all-zero matrix of the appropriate dimension.
\end{remark}

The classical result of Shannon~\cite{Shannon1948} states that the maximum achievable rate for reliable communication over a discrete memoryless channel $p(y\cond x)$, known as the capacity $C$ of the channel, is given by the maximum mutual information, 
\begin{equation} \label{eqn:capacity_p2p}
    C = \underset{p(x)}{\max} \, I(X;Y).
\end{equation}
Shannon showed that this fundamental limit can be attained using a random coding scheme that generates the codebook i.i.d. according to the capacity-achieving distribution.

\begin{definition}[BMS Channel]
We say that a binary-input memoryless channel $p(y\cond x)$ is \emph{symmetric} (abbreviated, a BMS channel) if there exists a permutation $\pi:\cY\to\cY$ such that $\pi^{-1} = \pi$ and $p(y\cond x) = p(\pi(y)\cond x\oplus 1)$ for all $y\in \cY$ and $x\in\{0,1\}$. The channel $p(y\cond x)$ is \emph{asymmetric} if it is not symmetric.
\end{definition}

\begin{remark}
The capacity-achieving input distribution for a BMS channel is the uniform distribution~\cite[Theorem 4.5.2]{Gallager1968}. Note that this is the only input distribution that can be attained using linear code ensembles.
\end{remark}

\subsection{Symmetrized Channel} \label{sec:symmetrized}
Given any binary-input channel $p(y\cond x)$ (that is not necessarily symmetric) and any input distribution $p(x)$ (that is not necessarily uniform), a technique that will be crucial for us in our code constructions is the concept of a \emph{symmetrized channel} corresponding to the joint distribution $p(x,y) = p(x)p(y\cond x)$, defined as follows~\cite{Yang2009}.

\begin{definition}[Symmetrized Channel] \label{def:symmetrized}
	Given a binary-input channel $p(y \cond x)$ and an input distribution $p(x)$ (not necessarily uniform), the \emph{symmetrized channel} corresponding to $p(x,y) = p(x)p(y \cond x)$ is defined as the channel $\bar{p}$ with input alphabet $\cX = \{0,1\}$, output alphabet $\cY\times \{0,1\}$ and transition probabilities
	\begin{equation*}
		\bar{p}(y,v \cond x) = p_{X,Y}(x\oplus v, y).
	\end{equation*}
\end{definition}

\begin{remark} \label{remark:symmetrized}
	The following are immediate consequences of the definition of a symmetrized channel.
	\begin{enumerate}[label=(\roman*)]
		\item The channel $\bar{p}$ is symmetric under permutation $\pi\big((y,v)\big) = (y,v \oplus 1)$. In other words, for any $s \in \{0,1\}$,
		\[
		\bar{p}(y,v \cond x) = \bar{p}(y,v\oplus s \cond x \oplus s),
		\]
		which extends naturally when considering length-$n$ sequences.
		\item The channel $\bar{p}$ is, in particular, the conditional distribution $p_{\widebar{Y}\cond \widebar{X}}$ when $\widebar{X} = X\oplus V$, $\widebar{Y} = (Y,V)$, $(X,Y)$ is distributed according to $p(x,y)$, and $V\sim \BERN(1/2)$ is independent of $(X,Y)$.
		\item  Since $\bar{X} \sim \BERN(1/2)$ and the channel $\bar{p}$ is symmetric, it follows that the capacity of the symmetrized channel $\bar{p}$ is
        \begin{equation*}
            \begin{aligned}
                I(\widebar{X};\widebar{Y}) &= H(\widebar{X}) - H(\widebar{X}\cond \widebar{Y}) \\
                &= 1 - H(X\oplus V\cond Y,V) = 1-H(X\cond Y).
            \end{aligned}
        \end{equation*}
	\end{enumerate}
\end{remark}

Observations (i) and (ii) made above are illustrated in Figure~\ref{fig:symmetrized_channel}. If $(X^n,Y^n)$ are i.i.d. according to $p(x,y)$, then the symmetrized channel $\bar{p}$ is exactly the channel between $\widebar{X}^n = X^n\oplus V^n$ and $(Y^n,V^n)$, where $V^n$ is an i.i.d $\BERN(1/2)$ sequence that is independent of $(X^n,Y^n)$. Furthermore, the symmetric property of the channel $\bar{p}$ implies that, for any arbitrary sequence $S^n$, the channel between $\wtilde{X}^n = \widebar{X}^n \oplus S^n$ and $(Y^n,V^n\oplus S^n)$ is also described by $\bar{p}$. This property will turn out to be useful in several of our constructions in the coming chapters.

\begin{figure}[t]
	\centering
	\hspace*{2em}
	\def\svgscale{1.25}
	\input{./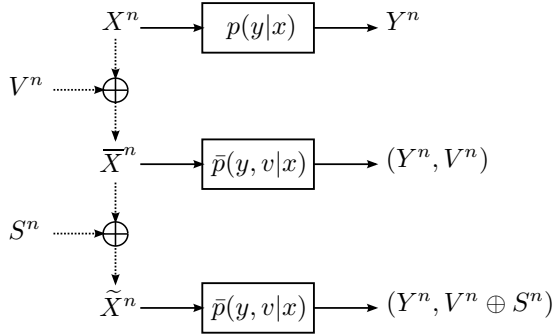_tex}
	\caption{The symmetrized channel.}
	\label{fig:symmetrized_channel}
\end{figure}

\subsection{Two Primitive Properties: Error Probability and Shaping Distance} \label{sec:properties}
Given a $(k,n)$ linear point-to-point channel code $(H,\phi)$ designed for a BMS channel $p(y \cond x)$, we will focus throughout this paper on two properties of the code, namely, the \emph{error probability} and the \emph{shaping distance}, which we define precisely below.
\begin{enumerate}[label=(\arabic*), leftmargin=1.75em]
    \item \textbf{Error probability:} Let $X^n \sim \Unif(\cC)$, where $\cC$ is the codebook corresponding to $H$, and let $Y^n$ be the output of the channel $p(y \cond x)$ when the input is $X^n$. The error probability $\eps$ of the code $(H,\phi)$ when used over the channel $p(y\cond x)$ is defined as the probability of block error of the decoding function, i.e., 
    \[
        \epsilon \triangleq \sum_{x^n \in \cC}\Big(2^{-k} \P\{\phi(Y^n)\neq X^n \,\cond \, X^n = x^n\}\Big).
    \]
    Shannon's point-to-point channel coding theorem~\cite{Shannon1948}, along with its achievability proof using linear codes~\cite{Elias1955}, state that a sequence of $(nR,n)$ linear codes having a vanishing error probability over the BMS channel $p(y \cond x)$ exists if and only if the rate $R < I\big(\BERN(1/2), p(y \cond x)\big)$, where $I\big(\Bern(1/2),p(y \cond x)\big)$ means the mutual information $I(X;Y)$ of the channel $p(y|x)$ when the input $X \sim \Bern(1/2)$. Since the channel is symmetric, $I\big(\Bern(1/2),p(y \cond x)\big)$ is, in fact, the capacity of the channel.
    \item \textbf{Shaping distance:} Let $Y^n$ be an i.i.d. $p(y)$ sequence, where $p(y) \triangleq \sum_{x}\frac{1}{2}p(y \cond x)$ is the marginal output distribution when the input distribution is $\BERN(1/2)$. Define $X^n = \phi(Y^n)$. The shaping distance $\delta$ of the code $(H,\phi)$ with respect to the channel $p(y \cond x)$ is defined as the total variation distance between the distribution of $(X^n,Y^n)$ (let's denote that by $q(x^n,y^n)$) and the i.i.d. $\frac{1}{2}p(y \cond x)$ distribution, i.e.,
    \[
        \delta \triangleq \frac{1}{2}\sum_{x^n,y^n}\left|q(x^n,y^n) - \frac{1}{2^n}\prod_{i=1}^{n}p(y_i \cond x_i)\right|.
    \]
    The shaping distance can be understood as a measure of the stochastic behavior of the decoding function $\phi$ in comparison to the ``backward'' channel $p(x\cond y)$ corresponding to the capacity-achieving input distribution. A sequence of $(nR,n)$ linear codes with a vanishing shaping distance exists if and only if the rate $R > I\big(\BERN(1/2), p(y \cond x)\big)$. To see this, one can refer to results on the distributed channel synthesis problem, introduced by Bennett \emph{et al.} in~\cite{Bennett2002} and further characterized by Cuff in~\cite{Cuff2013}. In this problem, an i.i.d. source $\widebar{Y}^n$ distributed according to $p(\bar{y})$ is encoded by an index $M \in [2^{nR_{\mathrm{s}}}]$ to a decoder that wishes to produce an output $\widebar{X}^n$ such that the joint distribution of $(\widebar{X}^n,\widebar{Y}^n)$ is indistinguishable (in total variation distance) from a given joint i.i.d distribution $p(\bar{x},\bar{y})$. This is referred to as ``synthesizing'' the channel $p(\bar{x} \cond \bar{y})$. The main result in the distributed channel synthesis literature~\cite{Cuff2013} is that there exists a construction of an encoder-decoder pair to synthesize $p(\bar{x} \cond \bar{y})$ if and only if the channel synthesis rate $R_{\mathrm{s}}$ is larger than $I(\widebar{X};\widebar{Y})$. We point out that, when $\widebar{X} \sim \BERN(1/2)$, the construction in~\cite{Cuff2013} can be generalized to a \emph{linear} construction\footnote{This holds because Theorem VII.1 in~\cite{Cuff2013} only uses the \emph{pairwise} independence of the codewords.}. Therefore, one can see that the condition of a small shaping distance on the point-to-point channel code $(H,\phi)$ is similar in nature to the condition imposed in the distributed channel synthesis problem, where the information bits corresponding to the output of $\phi$ can be seen as the index shared to the decoder in channel synthesis. Note that the encoder of the construction for the distributed channel synthesis problem would be the decoder of our point-to-point channel code, while the decoder of the channel synthesis construction is the encoder of the point-to-point code.
\end{enumerate}
In what follows, we will construct coding schemes for various problems in network information theory starting from one (or more) point-to-point channel codes. The properties of the point-to-point channel codes (either error probability or shaping distance) will be instrumental in translating the performance guarantees from one communication setting to another. For reference, Table~\ref{table:codes} shows a (possibly non-exhaustive) list of families of point-to-point channel codes (along with their decoding algorithms) that can achieve a vanishing error probability and a vanishing shaping distance asymptotically over any BMS channel.

\begin{table*}[t]
    \centering
    \begin{tabular}{ l|l } 
        \toprule
        \multicolumn{1}{c}{\textbf{Codes with a vanishing error probability}} & \multicolumn{1}{c}{\textbf{Codes with a vanishing shaping distance}} \\[.25\normalbaselineskip]
        \midrule
        \tabitem Random linear codes with joint typicality decoding~\cite{Elias1955} & \tabitem Random linear codes with likelihood decoding~\cite{Cuff2013} \\[.25\normalbaselineskip]
        \tabitem Polar codes with successive cancellation decoding~\cite{Arikan2009} & \tabitem Polar codes with successive cancellation decoding~\cite{Korada2010} \\[.25\normalbaselineskip]
        \tabitem Spatially-coupled codes with belief propagation decoding~\cite{Urbanke2013} & \\[.25\normalbaselineskip]
        \tabitem Reed--Muller codes with maximum-a-posteriori (MAP) decoding~\cite{Kudekar2017,Reeves2021,Abbe2023} & \\
        \bottomrule
    \end{tabular}
    \caption{Families of codes over BMS channels that have a vanishing error probability and a vanishing shaping distance asymptotically.}
	\label{table:codes}
\end{table*}

\section{Slepian--Wolf Coding} \label{sec:slepian}
In this section, we establish an ``equivalence'' between constructing a code for a point-to-point BMS channel and constructing a binary Slepian--Wolf code. In other words, we show that a code for any binary Slepian--Wolf problem can be designed starting from a code for a suitable BMS channel. Conversely, a code for any BMS channel can be constructed starting from a Slepian--Wolf code. In both cases, the optimal rate can be achieved asymptotically provided that the constituent code is rate-optimal. As a special case, the Slepian--Wolf coding scheme can be specialized to lossless source coding of a binary source. The duality between Slepian--Wolf coding and coding for a BMS channel has been previously noted in~\cite{Wyner1974,Ancheta1976,Yang2009,Lele2015}; in particular, the constructions provided in this section are equivalent to the ones in~\cite{Lele2015}. 

\subsection{Problem Statement} \label{sec:slepian-problem}
A binary Slepian--Wolf problem $p(x,y)$ consists of a source alphabet $\cX=\{0,1\}$, an arbitrary side information alphabet $\cY$, and a joint pmf $p(x,y)$ over $\cX \times \cY$~\cite{Slepian1973}. A discrete memoryless source $X$ with side information $Y$ generates a jointly i.i.d. random process $\{(X_i,Y_i)\}$ with $(X_i,Y_i)\sim p(x,y)$. An $(\ell,n)$ code $(g,\psi)$ for the Slepian--Wolf problem $p(x,y)$ consists of
\begin{itemize}
	\item an index set $\cI \subseteq \{0,1\}^n$ such that $|\cI| = 2^{\ell}$,
	\item an encoder $g:\cX^n \to \cI$ that maps each source sequence $x^n$ to an index $s^n = g(x^n)$, and
	\item a decoder $\psi:\cI\times\cY^n \to \cX^n$ that assigns a source estimate $\hat{x}^n=\psi(s^n,y^n)$ to each index $s^n$ and side information sequence $y^n$.
\end{itemize}
The rate of the code is $R=\ell/n$. The average probability of error of the code is $\epsilon = \P\{\widehat{X}^n \neq X^n\}$. A rate $R$, $0\leq R\leq 1$, is said to be \emph{achievable} for the binary Slepian--Wolf coding problem if there exists a sequence of $(nR,n)$ codes with vanishing error probability asymptotically. The classical result of Slepian and Wolf~\cite{Slepian1973} states that any rate $R > H(X \cond Y)$ is achievable for the binary Slepian--Wolf coding problem.

The Slepian--Wolf code is linear when the encoding function $g$ is linear, i.e., if for any $x^n, \tilde{x}^n \in \{0,1\}^n$, we have $g(x^n \oplus \tilde{x}^n) = g(x^n) \oplus g(\tilde{x}^n)$. When an $(\ell,n)$ linear Slepian--Wolf code $(g,\psi)$ has an encoding function that is defined as a matrix multiplication $g(x^n) = \begin{bmatrix}\bf 0 \\
Hx^n \end{bmatrix}$, where $H$ is an $\ell \times n$ matrix, we will refer to the code as an $(\ell,n)$ linear Slepian--Wolf code $(H,\psi)$.

\subsection{Code for P2P BMS Channel $\to$ Slepian--Wolf Code} \label{sec:p2p_slepian}
Consider a binary Slepian--Wolf problem $p(x,y)$, as defined in the previous section. We will construct a linear Slepian--Wolf code for this problem starting from a linear point-to-point channel code for a BMS channel. The BMS channel of interest is the symmetrized channel corresponding to $p(x,y)$, as defined in Section~\ref{sec:symmetrized}. The following lemma will be helpful to describe the Slepian--Wolf coding scheme.

\begin{lemma}\label{lemma:codify}
	Let $H$ be a parity-check matrix for a codebook $\cC$, and let $\wtilde{H}$ be as defined in~(\ref{eqn:H_tilde}). Define $\cS = \{s^n \in \{0,1\}^n: s^k = 0^k\}$. Then,
	\begin{enumerate}[label=(\roman*)]
		\item For any $x^n \in \{0,1\}^n$, there exists a unique $s^n \in \cS$ such that $x^n \oplus s^n \in \cC$. In particular, $s^n = \wtilde{H}x^n$.
		\item If $X^n$ is i.i.d. $\BERN(1/2)$, then $X^n\oplus \wtilde{H}X^n \sim \Unif(\cC)$.
		\item If $C^n \sim \Unif(\cC)$ and $S^n \sim \Unif(\cS)$ are independent, then $C^n \oplus S^n$ is i.i.d. $\BERN(1/2)$.
	\end{enumerate}
\end{lemma}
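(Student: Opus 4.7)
The plan is to prove the three parts in order, exploiting the block structure of $\widetilde{H}$ together with the invertibility of $B$. By construction, the first $k$ rows of $\widetilde{H}$ are zero, so $\widetilde{H}x^n \in \cS$ for every $x^n$, which is the key structural fact I will use throughout.

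For part (i), I would observe that the condition $x^n \oplus s^n \in \cC$ is equivalent to $Hs^n = Hx^n$. Writing $s^n = \begin{bmatrix} 0^k \\ s'^{n-k} \end{bmatrix}$ and $H = [A \,|\, B]$, this reduces to $Bs'^{n-k} = Hx^n$, which has a unique solution $s'^{n-k} = B^{-1}Hx^n$ because $B$ is nonsingular. Reassembling gives $s^n = \widetilde{H}x^n$, establishing both existence and uniqueness.

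For part (ii), the plan is to show that the map $T\colon x^n \mapsto x^n \oplus \widetilde{H}x^n$ is a $2^{n-k}$-to-$1$ surjection onto $\cC$. Using the block expression from Remark~\ref{remark:H_tilde}, I would compute
\[
(I \oplus \widetilde{H})x^n = \begin{bmatrix} I_k & 0 \\ B^{-1}A & 0 \end{bmatrix} x^n = \begin{bmatrix} x^k \\ B^{-1}A\, x^k \end{bmatrix}.
\]
For a fixed $c^n \in \cC$, the relation $Hc^n = 0$ forces $c_{k+1}^n = B^{-1}A\, c^k$, so the equation $Tx^n = c^n$ is equivalent to $x^k = c^k$ with $x_{k+1}^n$ free. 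Hence each $c^n \in \cC$ has exactly $2^{n-k}$ preimages under $T$, and since the domain has size $2^n = |\cC| \cdot 2^{n-k}$, the image of a uniform $X^n$ on $\{0,1\}^n$ is uniform on $\cC$.

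Part (iii) then follows quickly from (i): the map $\cC \times \cS \to \{0,1\}^n$ given by $(c^n,s^n) \mapsto c^n \oplus s^n$ is a bijection, since part~(i) says that every $x^n$ has a \emph{unique} decomposition as codeword plus element of $\cS$. Both domain and codomain have size $2^n$, so pushing forward the uniform distribution on $\cC \times \cS$ (the product of the independent uniform marginals) gives the uniform distribution on $\{0,1\}^n$, which is the i.i.d. $\BERN(1/2)$ law. The only mildly technical step is the block computation in part~(ii); once the structure of $I \oplus \widetilde{H}$ is unpacked and matched against the parity-check constraint, the counting argument is immediate.
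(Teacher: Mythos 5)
Your proposal is correct and follows essentially the same route as the paper's proof: part (i) via the invertibility of $B$, part (iii) via the unique decomposition $\{0,1\}^n = \cC \oplus \cS$ from part (i). The only (cosmetic) difference is in part (ii), where you explicitly compute the map $I \oplus \wtilde{H}$ and count preimages of each codeword, whereas the paper conditions on the value of $\wtilde{H}X^n$ and sums over $\cS$; both are the same elementary counting argument and your block computation checks out.
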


\begin{proof}
	See Appendix~\ref{appendix:codify}.
\end{proof}

\noindent Intuitively, part (ii) of Lemma~\ref{lemma:codify} gives a general way of generating a codeword uniformly at random starting from a uniformly distributed binary sequence, as illustrated in Fig.~\ref{fig:codify_diagram}. Conversely, part (iii) generates a uniformly distributed binary sequence starting from a codeword chosen uniformly at random.

\begin{figure}[t]
	\centering
	\def\svgscale{1.25}
\begingroup%
  \makeatletter%
  \providecommand\color[2][]{%
    \errmessage{(Inkscape) Color is used for the text in Inkscape, but the package 'color.sty' is not loaded}%
    \renewcommand\color[2][]{}%
  }%
  \providecommand\transparent[1]{%
    \errmessage{(Inkscape) Transparency is used (non-zero) for the text in Inkscape, but the package 'transparent.sty' is not loaded}%
    \renewcommand\transparent[1]{}%
  }%
  \providecommand\rotatebox[2]{#2}%
  \newcommand*\fsize{\dimexpr\f@size pt\relax}%
  \newcommand*\lineheight[1]{\fontsize{\fsize}{#1\fsize}\selectfont}%
  \ifx\svgwidth\undefined%
    \setlength{\unitlength}{247.5bp}%
    \ifx\svgscale\undefined%
      \relax%
    \else%
      \setlength{\unitlength}{\unitlength * \real{\svgscale}}%
    \fi%
  \else%
    \setlength{\unitlength}{\svgwidth}%
  \fi%
  \global\let\svgwidth\undefined%
  \global\let\svgscale\undefined%
  \makeatother%
  \begin{picture}(1,0.22811286)%
    \lineheight{1}%
    \setlength\tabcolsep{0pt}%
    \put(0,0){\includegraphics[width=\unitlength]{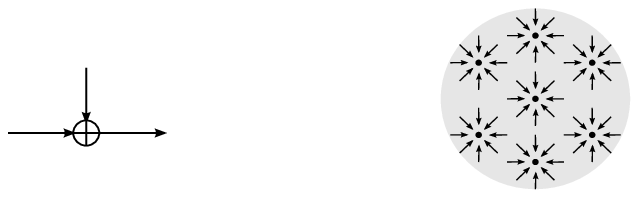}}%
    \put(0.00726232,0.0719912){\color[rgb]{0,0,0}\makebox(0,0)[lt]{\lineheight{1.25}\smash{\begin{tabular}[t]{l}$X^n$\end{tabular}}}}%
    \put(0.25271704,0.07199414){\color[rgb]{0,0,0}\makebox(0,0)[lt]{\lineheight{1.25}\smash{\begin{tabular}[t]{l}$C^n\sim\mathrm{Unif}(\mathcal{C})$\end{tabular}}}}%
    \put(0.11841453,0.16679961){\color[rgb]{0,0,0}\makebox(0,0)[lt]{\lineheight{1.25}\smash{\begin{tabular}[t]{l}$\widetilde{H}X^n$\end{tabular}}}}%
    \put(0.70487643,0.25535004){\color[rgb]{0,0,0}\makebox(0,0)[lt]{\begin{minipage}{0.6129125\unitlength}\raggedright  \end{minipage}}}%
  \end{picture}%
\endgroup%

	\caption{Illustration of a shift to $X^n \IID \Bern(1/2)$ by $\wtilde{H}X^n$ in $\{0,1\}^n$ space.}
	\label{fig:codify_diagram}
\end{figure}

The main idea of the construction of the Slepian--Wolf coding scheme is captured in the following lemma.

\begin{lemma} \label{lemma:p2p_slepian}
	Let $(X^n,Y^n)$ be i.i.d. according to $p(x,y)$, and $V^n$ be i.i.d. $\mathrm{Bern}(1/2)$ and independent of $(X^n,Y^n)$. Let $\bar{p}$ be the symmetrized channel corresponding to $p(x,y)$, as defined in Definition~\ref{def:symmetrized}. Consider a parity-check matrix $H$ for a codebook $\cC$, and let $\wtilde{H}$ be as defined in $(\ref{eqn:H_tilde})$. Consider the sequences
	\begin{equation} \label{eqn:codify}
		\begin{aligned}
			C^n &= X^n \oplus V^n \oplus \widetilde{H}X^n \oplus \wtilde{H} V^n, \\
			U^n &= V^n \oplus \widetilde{H}V^n \oplus \widetilde{H}X^n.
		\end{aligned}
	\end{equation}
	Then,
	\begin{equation*}
		\P\{C^n = c^n, U^n= u^n, Y^n = y^n\} = \frac{1}{2^k}\prod_{i=1}^{n}\bar{p}(y_i,u_i\cond c_i)
	\end{equation*}
	for every $c^n \in \cC$, $u^n \in \{0,1\}^n$ and $y^n \in \cY^n$.
\end{lemma}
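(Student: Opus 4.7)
The plan is to translate everything into the symmetrized-channel picture established in Section~\ref{sec:symmetrized} and then enumerate the preimages of $(c^n, u^n)$ using the codification map of Lemma~\ref{lemma:codify}(i). I will first rewrite $C^n$ and $U^n$ in terms of the symmetrized input $\widebar{X}^n := X^n \oplus V^n$. By linearity of $\wtilde{H}$, this yields
\[
C^n \;=\; \widebar{X}^n \oplus \wtilde{H}\widebar{X}^n, \qquad U^n \;=\; V^n \oplus \wtilde{H}\widebar{X}^n.
\]
Remark~\ref{remark:symmetrized} then tells us that $\widebar{X}^n$ is i.i.d.\ $\BERN(1/2)$ and that the channel from $\widebar{X}^n$ to $(Y^n, V^n)$ is the product symmetrized channel, so
\[
\P\{\widebar{X}^n = \bar{x}^n, V^n = v^n, Y^n = y^n\} \;=\; \frac{1}{2^n}\prod_{i=1}^n \bar{p}(y_i, v_i\cond \bar{x}_i).
\]
The desired probability can thus be obtained by summing this expression over all preimages $(\bar{x}^n, v^n)$ of $(c^n, u^n)$ under the map just described.

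Next, I would enumerate the preimages. Two auxiliary facts follow directly from Lemma~\ref{lemma:codify}(i): applying it to $x^n = c^n \in \cC$ yields $\wtilde{H}c^n = 0^n$ (since $0^n \in \cS$ satisfies $c^n \oplus 0^n \in \cC$), and applying it to $x^n = s^n \in \cS$ yields $\wtilde{H}s^n = s^n$ (since $s^n \in \cS$ satisfies $s^n \oplus s^n = 0^n \in \cC$). Consequently, for every $s^n \in \cS$ the choice $\bar{x}^n = c^n \oplus s^n$ satisfies $\wtilde{H}\bar{x}^n = s^n$ and hence $\bar{x}^n \oplus \wtilde{H}\bar{x}^n = c^n$, while the constraint $U^n = u^n$ forces $v^n = u^n \oplus \wtilde{H}\bar{x}^n = u^n \oplus s^n$. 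Conversely, Lemma~\ref{lemma:codify}(i) guarantees that every $\bar{x}^n$ with $\bar{x}^n \oplus \wtilde{H}\bar{x}^n = c^n$ arises in this way with $s^n = \wtilde{H}\bar{x}^n \in \cS$. The preimages of $(c^n, u^n)$ are therefore exactly
\[
\bigl\{(c^n \oplus s^n,\, u^n \oplus s^n) : s^n \in \cS\bigr\},
\]
a set of cardinality $|\cS| = 2^{n-k}$.

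Finally, I would invoke the symmetry of $\bar{p}$ (Remark~\ref{remark:symmetrized}(i)) to note that, for each such preimage and each coordinate $i$,
\[
\bar{p}(y_i, u_i \oplus s_i \cond c_i \oplus s_i) \;=\; \bar{p}(y_i, u_i \cond c_i),
\]
so that every preimage contributes the \emph{same} value $\frac{1}{2^n}\prod_i \bar{p}(y_i, u_i \cond c_i)$ to the sum. Collecting the $2^{n-k}$ identical contributions gives $\frac{1}{2^k}\prod_i \bar{p}(y_i, u_i \cond c_i)$, as claimed. The main obstacle I foresee is the preimage characterization in the middle paragraph, which hinges on recognizing that $\wtilde{H}$ annihilates codewords and acts as the identity on $\cS$; once this is in hand, the symmetry of the symmetrized channel makes every term in the sum collapse to the same one and the rest is just counting.
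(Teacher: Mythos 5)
Your proof is correct, but it takes a different route from the paper's. The paper's argument is a two-line probabilistic one: it observes that $C^n \oplus U^n = X^n$, so that conditioned on $C^n = c^n$ the event $\{U^n = u^n\}$ is exactly $\{X^n = c^n \oplus u^n\}$, and that $C^n$ is a function of $R^n = X^n \oplus V^n$, which (because $V^n$ is a uniform dither independent of $(X^n,Y^n)$) is itself independent of $(X^n,Y^n)$; hence $\P\{Y^n=y^n, U^n=u^n \cond C^n=c^n\} = \P\{Y^n=y^n, X^n=c^n\oplus u^n\} = \prod_i \bar p(y_i,u_i\cond c_i)$, and Lemma~\ref{lemma:codify}(ii) supplies $\P\{C^n=c^n\}=2^{-k}$. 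You instead work unconditionally: you identify the fibre of the map $(\bar x^n, v^n)\mapsto(C^n,U^n)$ over $(c^n,u^n)$ as the coset-indexed family $\{(c^n\oplus s^n,\, u^n\oplus s^n): s^n\in\cS\}$ — using the correct auxiliary facts that $\wtilde H$ annihilates $\cC$ and fixes $\cS$ — and then collapse the $2^{n-k}$ identical terms via the shift-invariance $\bar p(y, v\oplus s\cond x\oplus s)=\bar p(y,v\cond x)$ of the symmetrized channel. Both arguments are sound and of comparable length; the paper's buys brevity by exploiting the independence of $C^n$ from $(X^n,Y^n)$ and never needs the explicit symmetry of $\bar p$, while yours makes the coset structure and the role of the channel symmetry completely explicit, which arguably better illuminates why the symmetrized channel is the right object here. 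No gaps.
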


\begin{proof}
	Lemma~\ref{lemma:p2p_slepian} can be seen as a recast of Lemmas 2 and 3 in~\cite{Lele2015}. For completion, the proof is provided in Appendix~\ref{appendix:p2p_slepian}.
\end{proof}

\noindent Lemma~\ref{lemma:p2p_slepian} says that if $Y^n$ is the output of the channel $p(y\cond x)$ when the channel input is $X^n$, then, for a uniformly distributed binary sequence $V^n$, the sequences $(Y^n,U^n)$ are distributed as the outputs of the channel $\bar{p}$ when the channel input is $C^n$, a uniformly distributed codeword in the codebook $\cC$, where $C^n$ and $U^n$ are as defined in~(\ref{eqn:codify}). The relations between the different random variables is illustrated in Fig.~\ref{fig:slepian_codify}.

\begin{figure}[t]
	\centering
	\def\svgscale{1.3}
	\input{./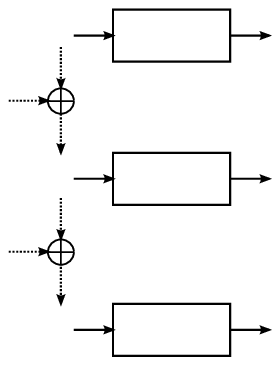_tex}
	\caption{The relations between the random variables $(X^n,Y^n,C^n,U^n)$ defined in Lemma~\ref{lemma:p2p_slepian}. Notice the similarity to Fig.~\ref{fig:symmetrized_channel} when $S^n$ in Fig.~\ref{fig:symmetrized_channel} is set to $\wtilde{H}(X^n\oplus V^n)$. To recover $X^n$ from $Y^n$, one can go through the path $(Y^n,U^n) \to C^n \to R^n \to X^n$. To get $C^n$ from $(Y^n,U^n)$, one can apply a decoder of a point-to-point channel code designed for the channel $\bar{p}$. This explains the Slepian--Wolf coding scheme shown in Fig.~\ref{fig:p2p_slepian}.}
	\label{fig:slepian_codify}
\end{figure}

Now, we are ready to construct a coding scheme for the Slepian--Wolf problem $p(x,y)$. The coding scheme uses the following point-to-point channel code.
\begin{lego}[\textbf{P2P} $\to$ \textbf{SW}]
	a $(k,n)$ linear point-to-point channel code $(H,\phi)$ with codebook $\cC$ for the symmetrized channel $\bar{p}$ corresponding to $p(x,y)$, which is defined over an input alphabet $\cX = \{0,1\}$ and output alphabet $\cY\times\{0,1\}$ by
	\begin{equation} \label{eqn:channelq}
		\bar{p}(y,v \cond x) = p_{X,Y}(x\oplus v,y).
	\end{equation}
	Let $\eps$ be the average probability of error of the code $(H,\phi)$ when used over the channel $\bar{p}$. 
\end{lego}

Fig.~\ref{fig:p2p_code} shows the channel code $(H,\phi)$ when used over the channel $\bar{p}$. The average probability of error $\eps$ of the code can be expressed as
\[
\eps = \P\{\phi(\widetilde{U}^n,\widetilde{Y}^n) \neq \widetilde{C}^n\}.
\]

\begin{figure}[t]
	\centering
	\hspace*{1em}
	\def\svgscale{1.25}
\begingroup%
  \makeatletter%
  \providecommand\color[2][]{%
    \errmessage{(Inkscape) Color is used for the text in Inkscape, but the package 'color.sty' is not loaded}%
    \renewcommand\color[2][]{}%
  }%
  \providecommand\transparent[1]{%
    \errmessage{(Inkscape) Transparency is used (non-zero) for the text in Inkscape, but the package 'transparent.sty' is not loaded}%
    \renewcommand\transparent[1]{}%
  }%
  \providecommand\rotatebox[2]{#2}%
  \newcommand*\fsize{\dimexpr\f@size pt\relax}%
  \newcommand*\lineheight[1]{\fontsize{\fsize}{#1\fsize}\selectfont}%
  \ifx\svgwidth\undefined%
    \setlength{\unitlength}{247.50002163bp}%
    \ifx\svgscale\undefined%
      \relax%
    \else%
      \setlength{\unitlength}{\unitlength * \real{\svgscale}}%
    \fi%
  \else%
    \setlength{\unitlength}{\svgwidth}%
  \fi%
  \global\let\svgwidth\undefined%
  \global\let\svgscale\undefined%
  \makeatother%
  \begin{picture}(1,0.15151515)%
    \lineheight{1}%
    \setlength\tabcolsep{0pt}%
    \put(0,0){\includegraphics[width=\unitlength]{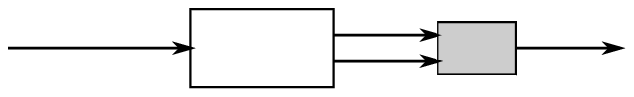}}%
    \put(0.0083719,0.08946315){\color[rgb]{0,0,0}\makebox(0,0)[lt]{\lineheight{1.25}\smash{\begin{tabular}[t]{l}$\widetilde{C}^n\sim \mathrm{Unif}(\mathcal{C})$\end{tabular}}}}%
    \put(0.66891111,0.09024457){\color[rgb]{0,0,0}\makebox(0,0)[lt]{\lineheight{1.25}\smash{\begin{tabular}[t]{l}$\widehat{C}^n$\end{tabular}}}}%
    \put(0.25565717,0.06841918){\color[rgb]{0,0,0}\makebox(0,0)[lt]{\lineheight{1.25}\smash{\begin{tabular}[t]{l}$\bar{p}(y,v|x)$\end{tabular}}}}%
    \put(0.55104806,0.06741878){\color[rgb]{0,0,0}\makebox(0,0)[lt]{\lineheight{1.25}\smash{\begin{tabular}[t]{l}$\phi$\end{tabular}}}}%
    \put(0.4206288,0.10563639){\color[rgb]{0,0,0}\makebox(0,0)[lt]{\lineheight{1.25}\smash{\begin{tabular}[t]{l}$\widetilde{Y}^n$\end{tabular}}}}%
    \put(0.42017445,0.02033093){\color[rgb]{0,0,0}\makebox(0,0)[lt]{\lineheight{1.25}\smash{\begin{tabular}[t]{l}$\widetilde{U}^n$\end{tabular}}}}%
  \end{picture}%
\endgroup%

	\caption{A code for the symmetric channel $\bar{p}$, defined in (\ref{eqn:channelq}).}
	\label{fig:p2p_code}
\end{figure}

Fig.~\ref{fig:p2p_slepian} illustrates the block diagram of the Slepian--Wolf coding scheme that uses the point-to-point channel code $(H,\phi)$. The coding scheme can be described as follows.

\begin{figure}[t]
	\centering
	\def\svgscale{1}
	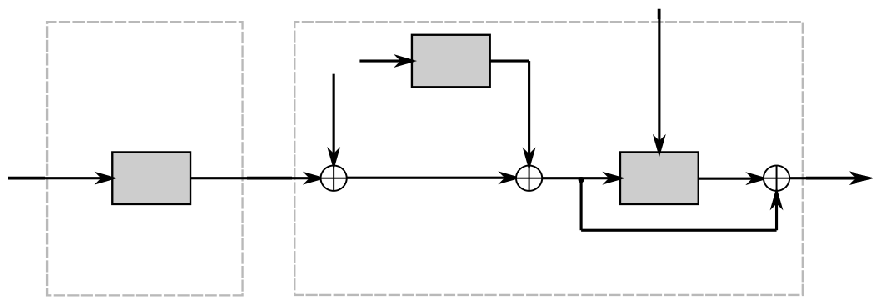
	\caption{A Slepian--Wolf code starting from a point-to-point channel code.}
	\label{fig:p2p_slepian}
\end{figure}

\vspace{0.25em}
\noindent \emph{Encoding:} Upon observing the source sequence $x^n$, the sender transmits $s^n = \widetilde{H}x^n$, where $\wtilde{H}$ is as defined in~(\ref{eqn:H_tilde}).

\vspace{0.25em}
\noindent \emph{Decoding:} Upon observing the side information sequence $y^n$ and receiving the index $s^n$, the decoder declares $\hat{x}^n = \phi(s^n \oplus v^n \oplus \widetilde{H}v^n, y^n) \oplus s^n \oplus v^n\oplus \widetilde{H}v^n$ as the source estimate, where $v^n$ is a realization of a random dither generated independently at the decoder. Notice the similarity of this decoding method with the observations made through Fig.~\ref{fig:slepian_codify}.

\vspace{0.25em}
\noindent \emph{Analysis of probability of error:} We have
\begin{align}
    &\P\{\widehat{X}^n \neq X^n\} \nonumber\\
    &= \P\{\phi(V^n \hspace*{-0.2em} \oplus \hspace*{-0.2em} \widetilde{H}V^n \hspace*{-0.2em} \oplus \hspace*{-0.2em} \widetilde{H}X^n, Y^n) \hspace*{-0.2em} \oplus \hspace*{-0.2em}  \widetilde{H}X^n \hspace*{-0.2em} \oplus \hspace*{-0.2em} \widetilde{H}V^n \hspace*{-0.2em} \oplus \hspace*{-0.2em} V^n \neq X^n \} \nonumber \\
    &= \P\{ \phi(V^n\hspace*{-0.2em} \oplus\hspace*{-0.2em} \widetilde{H}V^n \hspace*{-0.2em} \oplus \hspace*{-0.2em} \widetilde{H}X^n, Y^n)  \neq X^n \hspace*{-0.2em} \oplus \hspace*{-0.2em}  V^n \hspace*{-0.2em} \oplus \hspace*{-0.2em} \widetilde{H}(X^n \hspace*{-0.2em} \oplus\hspace*{-0.2em}  V^n)\} \nonumber \\
    &=\P\{\phi(U^n,Y^n) \neq C^n\}\nonumber \\
    &\overset{(a)}{=}\P\{\phi(\wtilde{U}^n,\wtilde{Y}^n) \neq \wtilde{C}^n\}= \epsilon,
\end{align}
where $(a)$ follows from Lemma~\ref{lemma:p2p_slepian}. Note that since the probability of error averaged over $V^n$ is $\eps$, there exists a deterministic $v^n$ sequence such that the error probability is bounded by $\eps$. 

\vspace{0.25em}
\noindent \emph{Rate:} By construction, the rate of the Slepian--Wolf code is $(n-k)/n$. 

\begin{remark} \label{remark:p2p_slepian_rates}
	Recall the definition of $(\widebar{X},\widebar{Y})$ in Remark~\ref{remark:symmetrized}. A sequence of codes for the channel $\bar{p}$ with a vanishing error probability exists if and only if the rate is smaller than
	\begin{equation*}
        \begin{aligned}
		  I(\widebar{X};\widebar{Y}) &= H(\widebar{X}) - H(\widebar{X}\cond \widebar{Y}) \\
            &= 1 - H(X\oplus U \cond Y,U) = 1-H(X\cond Y).
        \end{aligned}
	\end{equation*}
	It follows that, if the rate of the code for the channel $\bar{p}$ is $\frac{k}{n} = I(\widebar{X};\widebar{Y}) - \gamma$ for some $\gamma > 0$, then the rate of the Slepian--Wolf code is $\frac{n-k}{n} = H(X \cond Y) + \gamma$.
\end{remark}

\vspace{0.25em}
\noindent {\bf Conclusion:} From each linear $(k,n)$ code for the BMS channel $\bar{p}$ defined in~(\ref{eqn:channelq}) with average probability of error $\eps$, one can construct a linear $(n-k,n)$ code for the Slepian--Wolf problem $p(x,y)$ with average probability of error $\eps$.

\subsection{Slepian--Wolf Code $\to$ Code for P2P BMS Channel}  \label{sec:slepian_p2p}
Now, we consider a BMS channel $p(y \cond x)$. We show that a code for this channel can be constructed starting from the following Slepian--Wolf code.

\begin{lego}[\textbf{SW} $\to$ \textbf{P2P}]
an $(n-k,n)$ linear Slepian--Wolf code $(H,\psi)$ for the problem $p(x,y) = \frac{1}{2}p(y \cond x)$ with an average probability of error $\eps$.
\end{lego}

\begin{figure}[t]
	\centering
	\hspace*{0.5em}
	\def\svgscale{1}
\begingroup%
  \makeatletter%
  \providecommand\color[2][]{%
    \errmessage{(Inkscape) Color is used for the text in Inkscape, but the package 'color.sty' is not loaded}%
    \renewcommand\color[2][]{}%
  }%
  \providecommand\transparent[1]{%
    \errmessage{(Inkscape) Transparency is used (non-zero) for the text in Inkscape, but the package 'transparent.sty' is not loaded}%
    \renewcommand\transparent[1]{}%
  }%
  \providecommand\rotatebox[2]{#2}%
  \newcommand*\fsize{\dimexpr\f@size pt\relax}%
  \newcommand*\lineheight[1]{\fontsize{\fsize}{#1\fsize}\selectfont}%
  \ifx\svgwidth\undefined%
    \setlength{\unitlength}{397.50001442bp}%
    \ifx\svgscale\undefined%
      \relax%
    \else%
      \setlength{\unitlength}{\unitlength * \real{\svgscale}}%
    \fi%
  \else%
    \setlength{\unitlength}{\svgwidth}%
  \fi%
  \global\let\svgwidth\undefined%
  \global\let\svgscale\undefined%
  \makeatother%
  \begin{picture}(1,0.14150954)%
    \lineheight{1}%
    \setlength\tabcolsep{0pt}%
    \put(0,0){\includegraphics[width=\unitlength]{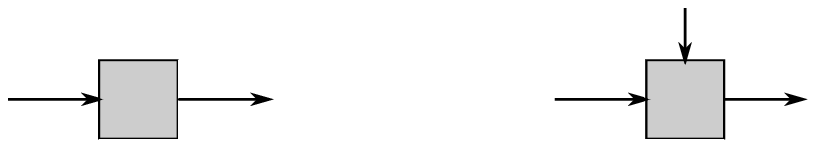}}%
    \put(0.55049539,0.04626952){\color[rgb]{0,0,0}\makebox(0,0)[lt]{\lineheight{1.25}\smash{\begin{tabular}[t]{l}$\widehat{X}^n$\end{tabular}}}}%
    \put(0.48647866,0.11365501){\color[rgb]{0,0,0}\makebox(0,0)[lt]{\lineheight{1.25}\smash{\begin{tabular}[t]{l}$\widetilde{Y}^n$\end{tabular}}}}%
    \put(0.31947326,0.0297477){\color[rgb]{0,0,0}\makebox(0,0)[lt]{\lineheight{1.25}\smash{\begin{tabular}[t]{l}$\begin{bmatrix}\bf 0 \\ H\widetilde{X}^n\end{bmatrix}$\end{tabular}}}}%
    \put(0.01275983,0.04626946){\color[rgb]{0,0,0}\makebox(0,0)[lt]{\lineheight{1.25}\smash{\begin{tabular}[t]{l}$\widetilde{X}^n$\end{tabular}}}}%
    \put(0.16370303,0.04626946){\color[rgb]{0,0,0}\makebox(0,0)[lt]{\lineheight{1.25}\smash{\begin{tabular}[t]{l}$H\widetilde{X}^n$\end{tabular}}}}%
    \put(0.09171351,0.02939306){\color[rgb]{0,0,0}\makebox(0,0)[lt]{\lineheight{1.25}\smash{\begin{tabular}[t]{l}$H$\end{tabular}}}}%
    \put(0.49262439,0.03254378){\color[rgb]{0,0,0}\makebox(0,0)[lt]{\lineheight{1.25}\smash{\begin{tabular}[t]{l}$\psi$\end{tabular}}}}%
  \end{picture}%
\endgroup%

	\caption{A linear Slepian--Wolf code, where $(\wtilde{X}^n, \wtilde{Y}^n)$ are i.i.d. according to $p(x,y)$.}
	\label{fig:slepian_code}
\end{figure}

Fig.~\ref{fig:slepian_code} shows the Slepian--Wolf code $(H,\psi)$, where $(\wtilde{X}^n, \wtilde{Y}^n)$ are i.i.d. sequences distributed according to $p(x,y)$. The average probability of error $\eps$ of the Slepian--Wolf code can be written as
\[
\eps = \P\left\{\psi\left(\begin{bmatrix}
    {\bf 0} \\
    H\wtilde{X}^n
\end{bmatrix}, \wtilde{Y}^n\right) \neq \wtilde{X}^n\right\}.
\]

To construct a code for the channel $p(y \cond x)$, let $V^n$ be an i.i.d. $\Bern(1/2)$ random dither shared between the encoder and the decoder, and let $C^n \in \cC$ represent the message to be transmitted, where $\cC$ is the codebook corresponding to $H$. Fig.~\ref{fig:slepian_p2p} illustrates the block diagram of the point-to-point channel code. The coding scheme can be summarized as follows:

\begin{figure}[t]
	\centering
	\def\svgscale{1}
	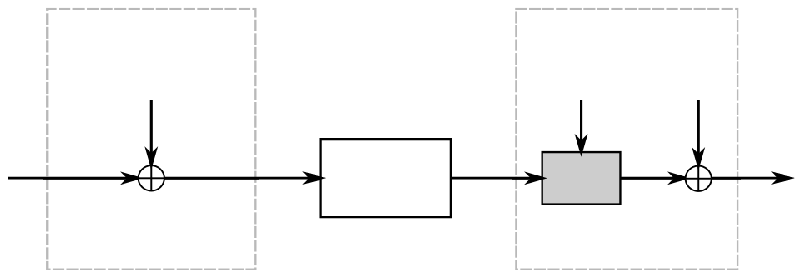
	\caption{A point-to-point channel code starting from a Slepian--Wolf code.}
	\label{fig:slepian_p2p}
\end{figure}

\vspace{0.25em}
\noindent \emph{Encoding:} To send the message $c^n \in \cC$, the sender transmits $x^n = c^n \oplus v^n$, where $v^n$ is a realization of a random dither shared between the encoder and the decoder.

\vspace{0.25em}
\noindent \emph{Decoding:} Upon observing $y^n$, the decoder declares $\hat{c}^n = \psi\left(\begin{bmatrix}
    {\bf 0} \\
    Hv^n
\end{bmatrix}, y^n\right) \oplus v^n$ as the message estimate.

\vspace{0.25em}
\noindent \emph{Analysis of probability of error:} We have
\begin{equation*}
	\begin{aligned}
		\P\{\widehat{C}^n \neq C^n\} &= \P\left\{\psi\left(\begin{bmatrix}
		    {\bf 0} \\
            HV^n
		\end{bmatrix},Y^n\right)\oplus V^n \neq C^n\right\} \\
		&\overset{(a)}{=} \P\left\{\psi\left(\begin{bmatrix}
		    {\bf 0} \\
            HX^n
		\end{bmatrix},Y^n\right)\neq X^n\right\} \\
		&\overset{(b)}{=} \P\left\{\psi\left(\begin{bmatrix}
		    {\bf 0} \\
            H\wtilde{X}^n
		\end{bmatrix},\wtilde{Y}^n\right)\neq \wtilde{X}^n\right\} \\
		&= \eps,
	\end{aligned}
\end{equation*}
where $(a)$ follows since $HX^n = HC^n \oplus HV^n = HV^n$, and $(b)$ follows since after dithering with the uniform $V^n$, $(X^n,Y^n)$ are identically distributed as $(\wtilde{X}^n, \wtilde{Y}^n)$ in the Slepian--Wolf problem. Note that since the probability of error averaged over $V^n$ is $\eps$, there exists a deterministic $v^n$ sequence such that the probability of error is bounded by $\eps$. 

\vspace{0.25em}
\noindent \emph{Rate:} Since $\abs{\cC}  = 2^k$, the rate of the point-to-point channel code is $k/n$.

\begin{remark}
If the rate of the Slepian--Wolf code is $\frac{n-k}{n} = H(X \cond Y) + \gamma$ for some $\gamma > 0$, then the rate of the point-to-point channel code is 
\begin{equation*}
    1-H(X \cond Y) - \gamma = I\big(\Bern(1/2),p(y \cond x)\big) - \gamma,
\end{equation*}
where $I\big(\Bern(1/2),p(y \cond x)\big)$ is the maximum achievable rate for the BMS channel $p(y \cond x)$.
\end{remark}

\vspace{0.25em}
\noindent {\bf Conclusion:} From each linear $(n-k,n)$ Slepian--Wolf code for the problem $p(x,y) = \frac{1}{2}p(y \cond x)$ with average probability of error $\eps$, one can construct a linear $(k,n)$ code for the BMS channel $p(y \cond x)$ with average probability of error $\eps$.

\subsection{Specialization to Lossless Source Coding} \label{sec:lossless}
As a special case of Slepian--Wolf coding, a lossless source code can be implemented using a point-to-point channel code that is designed for a binary symmetric channel. To see this, consider a binary memoryless source that generates an i.i.d. $\BERN(\theta)$ sequence $X^n$ for some $\theta \in (0,1/2)$. As in Slepian--Wolf coding, the goal is to represent the source sequence using as few bits as possible to a decoder that wishes to find an estimate $\widehat{X}^n$ of the sequence. The definition of a lossless source code, its rate and probability of error follow similarly as in Slepian--Wolf coding, with the exception that a lossless source decoder has no access to any side information sequence.

The coding scheme presented in Section~\ref{sec:p2p_slepian} can be specialized to lossless source coding when there is no side information. It is easy to check that, in the case of no side information, the symmetrized channel corresponding to $p(x)$ is a $\BSC(p_X(1))$. Therefore, a lossless source coding scheme for a $\BERN(\theta)$ source can be constructed starting from a point-to-point channel code designed for $\BSC(\theta)$, as given in the following Lego brick.
\begin{lego}[\textbf{P2P} $\to$ \textbf{Lossless}]
a $(k,n)$ linear point-to-point channel code $(H,\phi)$ designed for $\BSC(\theta)$ with average probability of error $\eps$ when used over the channel. 
\end{lego}

\begin{figure}[t]
	\centering
	\def\svgscale{1}
	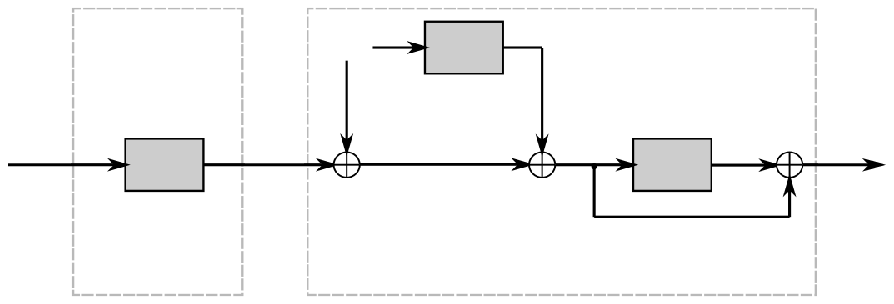
	\caption{A lossless source code starting from a point-to-point channel code.}
	\label{fig:lossless}
\end{figure}

Fig.~\ref{fig:lossless} shows the lossless source coding scheme, where $X^n$ is an i.i.d. $\BERN(\theta)$ source sequence, and $V^n$ is an i.i.d. $\BERN(1/2)$ sequence generated at the decoder independently of $X^n$. Notice that the coding scheme is very similar to the Slepian--Wolf coding scheme (Fig.~\ref{fig:p2p_slepian}), except that no side information sequence is used at the decoder side. As before, the rate of the lossless source code is $(n-k)/n$, and its average probability of error is given by $\eps$, which follows by specializing Lemma~\ref{lemma:p2p_slepian} to the case when $p(y)$ is deterministic and independent of $p(x)$.

\begin{remark}
If the rate of the channel code is $\frac{k}{n} = 1-H(\theta) - \gamma$ for some $\gamma > 0$, then the rate of the lossless source code is $H(\theta) + \gamma$.
\end{remark}

\begin{remark}
Conversely, a lossless source code can be used to construct a point-to-point channel code for a binary symmetric channel (BSC). This implies an ``equivalence'' between constructing a linear point-to-point channel code for the BSC and constructing a linear lossless source code for a binary source, which is well-understood in the literature~\cite{Weiss1962}.
\end{remark}

\section{Lossy Source Coding}  \label{sec:lossy}
In this section, we construct coding schemes for the lossy source coding problem starting from simple Lego bricks. We consider two cases: the first is the case of a symmetric source, and the second corresponds to a general asymmetric source. The distinction is made because a simpler construction is possible for the former case. More specifically, in the case of a symmetric source, our coding scheme is constructed starting from a single point-to-point symmetric channel code, whereas the coding scheme for the general asymmetric source uses both a point-to-point symmetric channel code and a lossless source code. In both cases, the proposed coding scheme is rate-optimal provided that the constituent Lego bricks are rate-optimal. First, let's start by defining the lossy source coding problem.

\subsection{Problem Statement} \label{sec:lossy_problem}
Introduced by Shannon in~\cite{Shannon1959}, the problem of lossy compression of a binary memoryless source consists of an i.i.d. random process $\{X_i\}$ with $X_i \sim \mathrm{Bern}(\theta)$ for some $\theta \in (0,1/2]$. The goal is to efficiently represent a source sequence $X^n$ when some distortion is allowed during reconstruction. More formally, an $(R,n)$ code for the lossy source coding problem consists of an encoder $g: \{0,1\}^n \to [2^{nR}]$ that assigns an index $M$ to the source sequence $X^n$, and a decoder $\psi: [2^{nR}] \to \{0,1\}^n$ that assigns an estimate $\hat{X}^n$ to the index $M$.
The rate of the code is $R$, and its expected distortion is $\frac{1}{n}\E\left[ d_H(X^n,\widehat{X}^n) \right]$,
where $d_H(.,.)$ denotes the Hamming distance metric. A rate-distortion pair $(R,D)$ is said to be \emph{achievable} if there exists a sequence of $(R,n)$ codes with
\[
\underset{n \to \infty}{\limsup} \frac{1}{n}\E\left[ d_H(X^n,\widehat{X}^n) \right] \leq D.
\]
The \emph{rate-distortion function} $R(D)$ is defined as the infimum of all rates $R$ such that $(R,D)$ is achievable.

Shannon~\cite{Shannon1959} showed that the rate-distortion function for a $\BERN(\theta)$ source can be expressed as
\[
R(D) = \begin{cases}
H(\theta) - H(D) &\quad \text{for } 0\leq D < \theta, \\
0 &\quad \text{for } D \geq \theta.
\end{cases}
\]
Shannon's random coding scheme assigns, for each typical source sequence $x^n$, a reconstruction sequence $\hat{x}^n$ that is jointly typical with $x^n$ for some desired conditional pmf $p(\hat{x}\cond x)$. For the case of a binary source, the desired conditional pmf $p(\hat{x}\cond x)$ corresponds to the case when the ``backward'' channel $p(x \cond \hat{x})$ is a $\BSC(D)$.

\subsection{Symmetric Source}
Consider a realization of a symmetric source $X^n \IID \Bern(1/2)$, and let $D\in (0,1/2)$ be some desired distortion level. We will construct a lossy source coding scheme for this source starting from the following point-to-point channel code.

\begin{lego}[\textbf{P2P} $\to$ \textbf{Sym. Lossy}]
a $(k,n)$ linear point-to-point channel code $(H,\phi)$ for $\mathrm{BSC}(D)$ with a shaping distance $\delta$.
\end{lego}

The main ingredient in the lossy source coding scheme is utilizing the shaping capability of the decoding function $\phi$ -- manifested by its shaping distance property $\delta$ -- in order to generate a sequence according to the desired distribution (or ``close'' to it). In the simple setting of a symmetric source, this can be done by simply declaring the output of the decoding function as the source reconstruction, and the corresponding information bits as the index shared to the decoder, as depicted in Fig.~\ref{fig:lossy_sym}, where $G$ is the generator matrix of the code $(H,\phi)$, and $\mathrm{Info(.)}$ is the function that takes as input a codeword in a linear code  and outputs the corresponding information bits. The coding scheme can be summarized as follows.

\vspace{0.25em}
\noindent \emph{Encoding:} Upon observing the source sequence $x^n$, the encoder stores the information sequence $u^k$ such that $\phi(x^n) = u^kG$.

\vspace{0.25em}
\noindent \emph{Decoding:} Upon observing the index $u^k$, the decoder declares the sequence $\hat{x}^n = u^kG$ as the source estimate.

\vspace{0.25em}
\noindent \emph{Analysis of the average distortion:} Let $q(x^n,\hat{x}^n)$ denote the distribution of $(X^n,\widehat{X}^n)$, and let $p(x^n,\hat{x}^n)$ be the desired i.i.d. distribution, i.e.,
\[
p(x^n,\hat{x}^n) = \frac{1}{2^n} D^{\weight(x^n \oplus \hat{x}^n)}(1-D)^{n-\weight(x^n \oplus \hat{x}^n)}
\]
Since $\widehat{X}^n = \phi(X^n)$ and $X^n \IID \Bern(1/2)$ (which is the channel output distribution of $\BSC(D)$ under the capacity-achieving input distribution), the average distortion is given by
\begin{align*}
    &\frac{1}{n}\E[d_H(X^n,\widehat{X}^n)] = \frac{1}{n}\sum_{x^n,\hat{x}^n}q(x^n,\hat{x}^n)d_H(x^n,\hat{x}^n)\\
    &= \frac{1}{n}\sum_{x^n,\hat{x}^n}p(x^n,\hat{x}^n)d_H(x^n,\hat{x}^n) \\
    &\quad + \frac{1}{n}\sum_{x^n,\hat{x}^n}\left(q(x^n,\hat{x}^n) - p(x^n,\hat{x}^n)\right) d_H(x^n,\hat{x}^n)\\
    &\overset{(a)}{\leq} D + \frac{1}{2}\sum_{x^n,\hat{x}^n}\left|q(x^n,\hat{x}^n) - p(x^n,\hat{x}^n)\right|\\
    &\overset{(b)}{=} D+\delta,
\end{align*}
where $(a)$ holds since $p(\hat{x}^n\cond x^n)$ is equivalent to $n$ independent uses of $\BSC(D)$ and the fact that $\sum_{i}c_i(a_i -b_i) \leq \frac{1}{2}\sum_{i}\left|a_i-b_i\right|$ whenever $0\leq c_i \leq 1$ and $\sum_{i}a_i = \sum_{i}b_i$, and $(b)$ follows by the definition of the shaping distance of the code $(H,\phi)$.

\vspace{0.25em}
\noindent \emph{Rate:} The rate of the coding scheme is $R = \frac{k}{n}$.

\vspace{0.25em}
\noindent Thus, we have obtained an $(R,n)$ lossy source code for a symmetric source with an expected distortion that is bounded by $D+\delta$ starting from a point-to-point channel code for $\BSC(D)$ with shaping distance $\delta$. 

\begin{figure}[t]
	\centering
	\hspace*{1em}
	\def\svgscale{1.25}
	\input{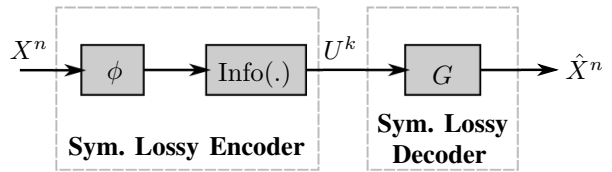}
	\caption{A lossy source coding scheme for a symmetric source starting from a point-to-point channel code.}
	\label{fig:lossy_sym}
\end{figure}

\begin{remark} \label{remark:lossy_sym}
Following the discussion in Section~\ref{sec:properties}, a sequence of linear point-to-point channel codes with a vanishing shaping distance $\delta$ exists if (and only if\footnote{The ``only if'' part clearly holds by our current construction and the lossy source coding theorem.}) the rate is larger than $1-H(D)$. 
\end{remark}

\begin{remark}
    Note that for this construction, it suffices to have a point-to-point channel code that satisfies 
    \[
        \abs{\frac{1}{n}\E[d_H(X^n,\phi(X^n))]-D} \leq \delta,
    \]
    rather than the more stringent condition of the shaping distance. Nonetheless, we show the shaping distance condition here for illustrative purposes, as it introduces the idea of using a decoding function for shaping a binary sequence, a theme that will be recurrent throughout the paper.
\end{remark}

\subsection{Asymmetric Source} \label{sec:lossy_asym}
Now, we consider the case of a general binary memoryless source that generates an i.i.d. $\BERN(\theta)$ sequence $X^n$ for some $\theta \in (0,1/2)$. Let $D \in (0,\theta)$ be some desired distortion level\footnote{Clearly, when $D \geq \theta$, a rate-zero coding scheme is possible by deterministically outputting the all-zero sequence as the source reconstruction.}, and define
\[
\alpha \triangleq \frac{\theta - D}{1-2D}.
\]
Note that $\widehat{X} \sim \Bern(\alpha)$ when the conditional distribution $p(x \cond \hat{x})$ is $\mathrm{BSC}(D)$, which is the desired conditional distribution of the source given the reconstruction as inspired by Shannon's random coding scheme~\cite{Shannon1959}. Let $p(x,\hat{x})$ denote the desired joint distribution between the source and the reconstruction, i.e.,
\begin{equation} \label{eqn:lossy_joint_distribution}
	p(x,\hat{x}) = \alpha^{\hat{x}}(1-\alpha)^{1-\hat{x}}D^{x \oplus \hat{x}}(1-D)^{1-x\oplus \hat{x}}.
\end{equation}
The proposed lossy source coding scheme in this general setting utilizes a point-to-point channel code  and a lossless source code. At the encoder side, the point-to-point channel code is used to generate a sequence according to the desired distribution of the reconstruction (i.e., the i.i.d. $\BERN(\alpha)$ distribution), and the lossless source code is used to compress that sequence to the decoder. Another key ingredient in the coding scheme is the assumption that the two codebooks are \emph{nested}. 

Before we describe the coding scheme, we state the following lemma, which will be useful in several constructions in this paper.
\begin{lemma} \label{lemma:shaping_distance}
	Let $\bar{p}(\td{y}, \td{v} \cond \td{x})$ be the symmetrized channel corresponding to a given joint distribution $p(\td{x},\td{y})$. Let $(H,\phi)$ be a point-to-point channel code designed for $\bar{p}$, and let $\delta$ be its shaping distance. Let $\td{Y}^n$ be i.i.d. according to $p(\td{y})$ and $\td{V}^n$ be i.i.d. $\BERN(1/2)$ such that $\td{Y}^n$ and $\td{V}^n$ are independent, and let $\td{U}^n = \phi(\td{Y}^n,\td{V}^n)\oplus \td{V}^n$. Then,
	\[
	\frac{1}{2}\sum_{\td{u}^n,\td{y}^n}\left| \P\{\td{U}^n=\td{u}^n,\td{Y}^n=\td{y}^n\}-\prod_{i=1}^n p_{\td{X},\td{Y}}(\td{u}_i,\td{y}_i) \right| \leq \delta.
	\]
\end{lemma}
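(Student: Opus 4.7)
The plan is to deduce the bound by viewing $(\tilde{U}^n,\tilde{Y}^n)$ as a deterministic, componentwise function of the triple $\bigl(\phi(\tilde{Y}^n,\tilde{V}^n),\tilde{Y}^n,\tilde{V}^n\bigr)$, and then invoking the data-processing inequality for total variation distance. Concretely, I would first unpack the shaping-distance hypothesis. Under the uniform input distribution, the marginal output of the symmetrized channel $\bar{p}$ is $\tfrac12 p(\tilde{y})$, so the ``channel output'' sequence used in the definition of $\delta$ is exactly a pair $(\tilde{Y}^n,\tilde{V}^n)$ with $\tilde{Y}^n\IID p(\tilde{y})$ and $\tilde{V}^n\IID\BERN(1/2)$, independent. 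Writing $\bar{X}^n=\phi(\tilde{Y}^n,\tilde{V}^n)$ and letting $q$ denote the joint law of $(\bar{X}^n,\tilde{Y}^n,\tilde{V}^n)$, the shaping-distance assumption then states
\[
    \tfrac12\sum_{\bar{x}^n,\tilde{y}^n,\tilde{v}^n}\Bigl|\,q(\bar{x}^n,\tilde{y}^n,\tilde{v}^n)-\prod_{i=1}^{n}\tfrac12\,p(\bar{x}_i\oplus \tilde{v}_i,\tilde{y}_i)\,\Bigr|\;\le\;\delta.
\]

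Next I would apply the componentwise map $f(\bar{x},\tilde{y},\tilde{v})=(\bar{x}\oplus \tilde{v},\tilde{y})$ to both distributions. On the random side, this transforms $(\bar{X}^n,\tilde{Y}^n,\tilde{V}^n)$ into $(\bar{X}^n\oplus \tilde{V}^n,\tilde{Y}^n)=(\tilde{U}^n,\tilde{Y}^n)$. On the product side, a one-line marginalization shows that applying $f$ to the i.i.d.\ law $\tfrac12 p(\bar{x}_i\oplus \tilde{v}_i,\tilde{y}_i)$ yields exactly the i.i.d.\ law $p(\tilde{u}_i,\tilde{y}_i)$, since for each coordinate $\sum_{\tilde{v}}\tfrac12 p(\tilde{u}\oplus \tilde{v}\oplus \tilde{v},\tilde{y})=p(\tilde{u},\tilde{y})$.

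Finally, because the total variation distance cannot increase under a deterministic map (equivalently, a channel), the TV distance between the pushforwards under $f^{\otimes n}$ is bounded by the TV distance of the original laws, which is $\le \delta$. This yields precisely the claimed inequality
\[
    \tfrac12\sum_{\tilde{u}^n,\tilde{y}^n}\bigl|\,\P\{\tilde{U}^n=\tilde{u}^n,\tilde{Y}^n=\tilde{y}^n\}-\prod_{i=1}^{n}p(\tilde{u}_i,\tilde{y}_i)\,\bigr|\;\le\;\delta.
\]

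There is no real obstacle here; the lemma is essentially a bookkeeping translation between the symmetrized-channel viewpoint (in which the shaping distance is natively defined) and the original joint distribution $p(\tilde{x},\tilde{y})$. The only subtle point is to identify the correct deterministic map $f$ and to verify that the product target distribution is preserved by it, so I would be careful to spell out the marginalization step before invoking data processing.
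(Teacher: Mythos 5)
Your proposal is correct and follows essentially the same route as the paper's proof: the paper also unpacks the shaping distance as the total variation between the law of $(\phi(\td{Y}^n,\td{V}^n),\td{Y}^n,\td{V}^n)$ and the i.i.d.\ $\tfrac{1}{2}\bar{p}$ law, performs the change of variables $\td{u}^n=\td{x}^n\oplus\td{v}^n$ using $\bar{p}(\td{y},\td{v}\cond\td{x})=p_{\td{X},\td{Y}}(\td{x}\oplus\td{v},\td{y})$, and then drops to the $(\td{u}^n,\td{y}^n)$ marginal via the triangle inequality, which is exactly your data-processing step. The only cosmetic difference is that the paper writes the marginalization out as a chain of equalities and one inequality rather than invoking the pushforward/DPI language.
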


\begin{proof}
	See Appendix~\ref{appendix:shaping_distance}.
\end{proof}

\noindent Intuitively, given an i.i.d sequence $\td{Y}^n$ distributed according to $p(\td{y})$, Lemma~\ref{lemma:shaping_distance} suggests a general method of constructing a sequence $\td{U}^n$ such that the joint distribution of $(\td{Y}^n,\td{U}^n)$ is $\delta$-away in total variable distance from a given joint i.i.d distribution $p(\td{x},\td{y})$. This can be done using a point-to-point channel code with a shaping distance $\delta$ over the symmetrized channel $\bar{p}$ corresponding to $p(\td{x},\td{y})$. This technique will be used in several constructions in the paper.

Now, we are ready to describe the lossy source coding scheme of a general asymmetric source. The coding scheme can be constructed from the following point-to-point channel code and lossless source code.

\begin{lego}[\textbf{P2P} $\to$ \textbf{Lossy}]
	a $(k_1,n)$ linear point-to-point channel code $(H_1,\phi_1)$ with codebook $\cC_1$ for the channel
	\begin{equation} \label{eqn:p2p_lossy}
		\bar{p}(x, v\cond\hat{x}) = p_{X, \widehat{X}}(x, \hat{x}\oplus v).
	\end{equation}
	Let $\delta$ denote the shaping distance of the code $(H_1,\phi_1)$ with respect to the channel $\bar{p}$.
\end{lego}

\begin{lego}[\textbf{Lossless} $\to$ \textbf{Lossy}]
	an $(n-k_2,n)$ lossless source code $(H_2,\phi_2)$ for a $\BERN(\alpha)$ source with average probability of error $\eps$. Let $\cC_2$ be the codebook corresponding to $H_2$. We assume that $\cC_2 \subseteq \cC_1$, i.e., the two codebooks are nested. 
\end{lego}

\begin{remark}
	The channel $\bar{p}$ defined in (\ref{eqn:p2p_lossy}) is the symmetrized channel corresponding to the joint distribution $p(x,\hat{x})$ (see Section~\ref{sec:symmetrized} for a formal definition of a symmetrized channel).
\end{remark}

\begin{remark}
	Since $\cC_2 \subseteq \cC_1$, we will assume, without loss of generality, that $H_1$ is a submatrix of $H_2$, i.e., $H_2 = \begin{bmatrix}
		H_1\\
		Q
	\end{bmatrix}$ for some $(k_1-k_2)\times n$ matrix $Q$.\footnote{Note that such a relation between $H_1$ and $H_2$ can be obtained for any pair of nested linear codes by basic row operations and column permutations.}
\end{remark}

Starting from the aforementioned building blocks, Figure~\ref{fig:lossy_asym} shows the block diagram of the lossy source coding scheme, where $V^n$ is an i.i.d. $\BERN(1/2)$ random dither shared between the encoder and the decoder. The lossy encoder generates the sequence $U^n=\phi_1(X^n,V^n)\oplus V^n$ which has a distribution that is $\delta$-away in total variation distance from the i.i.d. $\BERN(\alpha)$ distribution. This is a consequence of Lemma~\ref{lemma:shaping_distance}. Further, since $H_1U^n = H_1V^n$ and
\[
H_2U^n = \begin{bmatrix}
	H_1U^n \\
	QU^n
\end{bmatrix} = \begin{bmatrix}
	H_1V^n\\
	QU^n
\end{bmatrix},
\]
the lossy decoder is able to reconstruct an estimate $\widehat{X}^n$ of the sequence $U^n$ using only the index $QU^n$ (since $V^n$ is shared randomness with the decoder) and the lossless source decoder $\phi_2$. The following lemma states that the joint distribution of $(X^n, \widehat{X}^n)$ is $(\delta+\eps)$-away in total variation distance from the desired i.i.d. $p(x,\hat{x})$ distribution.

\begin{lemma} \label{lemma:lossy}
	Let $q(x^n,\hat{x}^n)$ denote the distribution of $(X^n,\widehat{X}^n)$, and let $p(x^n,\hat{x}^n)$ be the desired i.i.d. $p(x,\hat{x})$ distribution, where $p(x,\hat{x})$ is as defined in~(\ref{eqn:lossy_joint_distribution}). Then,
	\begin{equation*}
		\frac{1}{2}\sum_{x^n,\hat{x}^n}\left| q(x^n,\hat{x}^n) - p(x^n,\hat{x}^n) \right| \leq \delta + \eps.
	\end{equation*}
\end{lemma}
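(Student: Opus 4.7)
The plan is to bound $\frac{1}{2}\sum_{x^n,\hat{x}^n} |q(x^n,\hat{x}^n) - p(x^n,\hat{x}^n)|$ by interpolating through an intermediate joint distribution that isolates the two error sources of the scheme: the imperfect shaping of $(X^n, U^n)$ by $(H_1, \phi_1)$, and the imperfect lossless recovery of $U^n$ by $(H_2, \phi_2)$. To set this up, I would first note that the decoder's reconstruction $\widehat{X}^n$ is a (possibly randomized) function of $U^n$ alone: from the transmitted index $QU^n$ and the shared dither $V^n$ (which provides $H_1 V^n = H_1 U^n$, since $\phi_1(X^n, V^n) \in \cC_1$), the decoder reconstructs the full syndrome $H_2 U^n$ and applies $\phi_2$. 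Denote the resulting kernel by $K(\hat{x}^n \cond u^n)$, let $q(x^n, u^n, \hat{x}^n)$ denote the scheme's joint, and define
\[
    \tilde{p}(x^n, u^n, \hat{x}^n) \triangleq \bigg(\prod_{i=1}^n p(x_i, u_i)\bigg)\, K(\hat{x}^n \cond u^n),
\]
which replaces the scheme's joint of $(X^n, U^n)$ by the target i.i.d.\ distribution from~(\ref{eqn:lossy_joint_distribution}) while keeping the same decoding kernel. By the triangle inequality on total variation (with marginals on $(X^n, \widehat{X}^n)$ left implicit),
\[
    \text{TV}(q, p) \leq \text{TV}(q, \tilde{p}) + \text{TV}(\tilde{p}, p).
\]

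For the first term, because the same kernel $K$ acts on $(X^n, U^n)$ under both $q$ and $\tilde{p}$, the data processing inequality for total variation yields $\text{TV}(q, \tilde{p}) \leq \text{TV}(q_{X^n, U^n}, \prod_i p(x_i, u_i))$, and this is exactly what Lemma~\ref{lemma:shaping_distance} bounds by $\delta$ --- applied to $(H_1, \phi_1)$ on the symmetrized channel $\bar{p}$ of~(\ref{eqn:p2p_lossy}) with the identification $\td{X} \leftrightarrow \widehat{X}$ and $\td{Y} \leftrightarrow X$, so that the lemma's $\td{U}^n$ corresponds to $U^n = \phi_1(X^n, V^n) \oplus V^n$. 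For the second term, under $\tilde{p}$ the marginal of $U^n$ is i.i.d.\ $\BERN(\alpha)$, which is precisely the source for which $(H_2, \phi_2)$ has average error probability $\eps$; coupling $\widehat{X}^n$ with $U^n$ itself on the same probability space (so that $U^n$ plays the role of the ``ideal'' reconstruction under $p$) gives $\text{TV}(\tilde{p}, p) \leq \P_{\tilde{p}}\{\widehat{X}^n \neq U^n\} \leq \eps$. Summing produces the bound $\delta + \eps$.

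The main technical point to verify carefully is the identification in the shaping step: since $\bar{p}$ in~(\ref{eqn:p2p_lossy}) is the symmetrization of $p(x, \hat{x})$ regarded as a channel from $\widehat{X}$ to $X$, the roles of ``channel input to be shaped'' and ``given observation'' are played by $\widehat{X}$ and $X$ respectively, which is why Lemma~\ref{lemma:shaping_distance} produces a bound on $(U^n, X^n)$ rather than on some other pairing. Once this identification is in place, neither the internal structure of $(H_1, \phi_1)$ nor that of $(H_2, \phi_2)$ enters the analysis --- the stated black-box properties $\delta$ and $\eps$ combine directly via the triangle inequality together with the data-processing step.
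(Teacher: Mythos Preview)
Your proposal is correct and follows essentially the same approach as the paper's proof: both isolate the shaping contribution via Lemma~\ref{lemma:shaping_distance} and the lossless-decoding contribution via the error probability $\eps$, after first observing that $H_1U^n=H_1V^n$ so the decoder's output depends on $U^n$ only. The only cosmetic difference is that the paper picks the intermediate joint $p(x^n,u^n,\hat{x}^n)=\big(\prod_i p_{X,\widehat X}(x_i,u_i)\big)\mathbbm{1}_{\{\hat{x}^n=u^n\}}$ and does a direct algebraic split, whereas you keep the decoding kernel $K$ in the intermediate $\tilde p$ and invoke data processing plus a coupling bound; both routes yield the same $\delta+\eps$.
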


\begin{proof}
	See Appendix~\ref{appendix:lossy}.
\end{proof}

Therefore, the coding scheme can be summarized as follows.

\begin{figure*}[t]
	\centering
	\hspace*{4em}
	\def\svgscale{1.25}
	\input{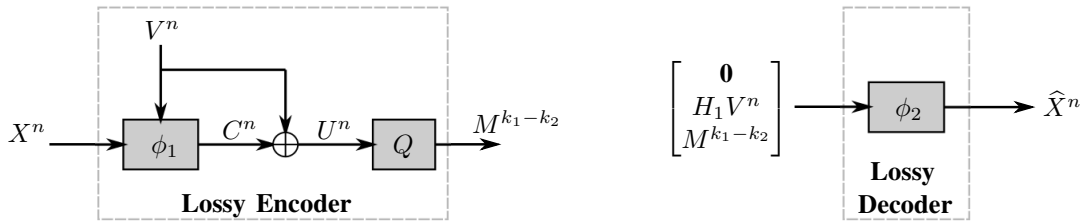}
	\caption{Encoder and decoder of a lossy source code for an asymmetric source starting from a point-to-point channel code and a lossless source code.}
	\label{fig:lossy_asym}
\end{figure*}

\vspace{0.25em}
\noindent \emph{Encoding:} Upon observing the source sequence $x^n$, the encoder computes the sequence $u^n = \phi_1\left(x^n, v^n\right)\oplus v^n$ and transmits the index $m^{k_1-k_2} = Qu^n$, where $v^n$ is a realization of a random dither shared with the decoder.

\vspace{0.25em}
\noindent \emph{Decoding:} Upon observing the index $m^{k_1-k_2}$, the decoder declares the sequence 
\[
\hat{x}^n = \phi_2\left( \begin{bmatrix}\bf 0 \\ H_1v^n \\ m^{k_1-k_2}\end{bmatrix} \right)
\]
as the source estimate.

\vspace{0.25em}
\noindent \emph{Analysis of the average distortion:} The average distortion of the coding scheme can be bounded as
\begin{align}
	&\frac{1}{n}\E[d_H(X^n,\widehat{X}^n)] = \frac{1}{n}\sum_{x^n,\hat{x}^n}q(x^n,\hat{x}^n)d_H(x^n,\hat{x}^n) \nonumber \\
	&= \frac{1}{n}\sum_{x^n,\hat{x}^n}p(x^n,\hat{x}^n)d_H(x^n,\hat{x}^n) \\
    &\quad + \frac{1}{n}\sum_{x^n,\hat{x}^n}\left(q(x^n,\hat{x}^n) - p(x^n,\hat{x}^n)\right) d_H(x^n,\hat{x}^n)\nonumber \\
	&\overset{(a)}{\leq} D + \frac{1}{2}\sum_{x^n,\hat{x}^n}\left|q(x^n,\hat{x}^n) - p(x^n,\hat{x}^n)\right|\nonumber \\
	&\overset{(b)}{\leq} D+\delta+\eps,
\end{align}
where $(a)$ holds since $p(x^n\cond \hat{x}^n)$ is equivalent to $n$ independent uses of $\BSC(D)$ and the fact that $\sum_{i}c_i(a_i -b_i) \leq \frac{1}{2}\sum_{i}\left|a_i-b_i\right|$ whenever $0\leq c_i \leq 1$ and $\sum_{i}a_i = \sum_{i}b_i$, and $(b)$ follows by Lemma~\ref{lemma:lossy}.

\vspace{0.25em}
\noindent \emph{Rate:} The rate of the coding scheme is $R = \frac{k_1-k_2}{n}$.

\begin{remark} \label{remark:lossy_asym}
	A sequence of linear point-to-point channel codes for the channel $\bar{p}$ with a vanishing shaping distance $\delta$ exists if (and only if) the rate is larger than $1-H(\widehat{X} \cond X)$. This follows by the discussion in Section~\ref{sec:properties} and the properties of a symmetrized channel (Remark~\ref{remark:symmetrized}).
\end{remark}

\begin{remark} \label{remark:lossy_asym_achievability}
	If the rate of the point-to-point channel code is $\frac{k_1}{n}=1-H(\widehat{X} \cond X)+\gamma_1$ for some $\gamma_1>0$, and the rate of the lossless source code is $\frac{n-k_2}{n} = H(\alpha)+\gamma_2 = H(\widehat{X}) + \gamma_2$ for some $\gamma_2 > 0$, then the rate of the lossy source code is
	\[
	\frac{k_1-k_2}{n} = I(X;\widehat{X}) + \gamma_1 + \gamma_2 = H(\theta)-H(D) + \gamma_1 + \gamma_2.
	\]
\end{remark}

\vspace{0.25em}
\noindent {\bf Conclusion:} Starting from a $(k_1,n)$ linear point-to-point channel code with shaping distance $\delta$ and an $(n-k_2,n)$ lossless source code with an average probability of error $\eps$, we have constructed a $(\frac{k_1-k_2}{n},n)$ lossy source code that targets a conditional distribution $p(\hat{x} \cond x)$ with an average distortion that is bounded by $D+\delta+\eps$. Note that the point-to-point channel code should be designed for the symmetrized channel $\bar{p}$ -- and not, for example, for $\BSC(D)$ -- since the source is asymmetric, and, hence, the source sequence $X^n$ is not distributed according to the channel output distribution of a $\BSC$.

\subsection{Simulation Results} \label{sec:lossy_simulation}
The lossy source coding scheme of Fig.~\ref{fig:lossy_asym} is simulated for a $\BERN(0.3)$ source (i.e., $\theta = 0.3$) using polar codes with successive cancellation decoding as the constituent point-to-point channel codes. The lossless source decoder used in the construction can be implemented using a polar code designed for a binary symmetric channel, as described in Section~\ref{sec:lossless}. To construct the polar codes (i.e., identify the information sets), we use Ar{\i}kan's method of sorting upper bounds on the Bhattacharyya parameters of the synthetic polar bit-channels~\cite{Arikan2009}. Since our coding scheme requires that the two codes are nested, the information set corresponding to the polar code $(H_2,\phi_2)$ is chosen to be a subset of that of the code $(H_1,\phi_1)$. Let $R_1$ and $R_2$ be the rates of the two polar codes.

\begin{figure}[tbp]
	\centering
	\hspace*{-0.8em}
	\includegraphics[width=\columnwidth]{./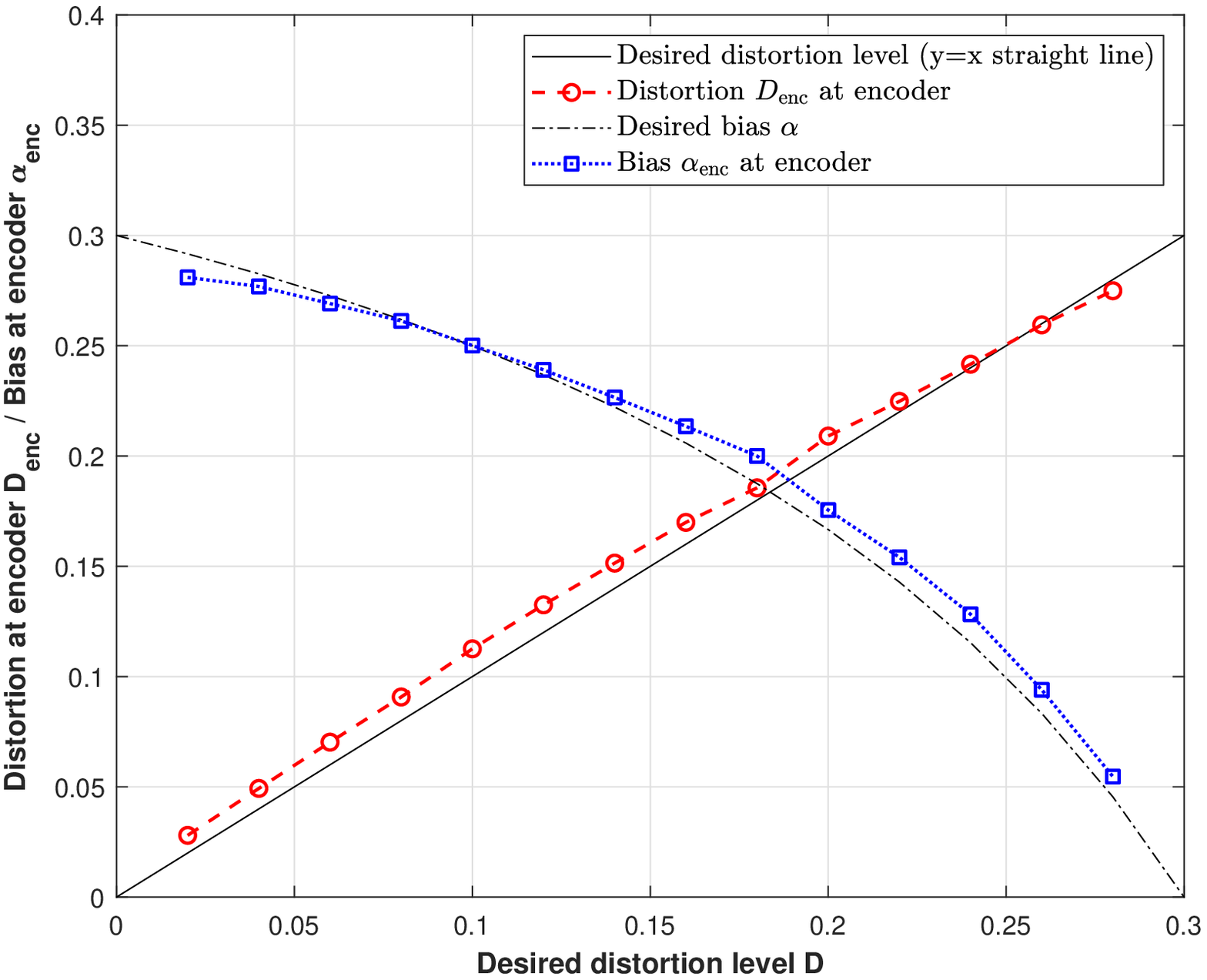}
	\caption{Distortion level and bias of the sequence $U^n$ at the encoder side assuming a $\BERN(0.3)$ source and a polar code of block length $n=1024$.}
	\label{fig:simulation_lossy_distortion_encoder}
\end{figure}

We first consider the encoder of Fig.~\ref{fig:lossy_asym}. We would like to see if the decoder $\phi_1$ correctly shapes the sequence $U^n$ according to the desired distribution. To this end, let $D_{\mathrm{enc}}$ and $\alpha_{\mathrm{enc}}$ denote respectively the distortion level and the bias of the sequence $U^n$ at the encoder side, i.e.,
\begin{equation*}
    \begin{aligned}
        D_{\mathrm{enc}} &= \frac{1}{n}\E[d_H(U^n,X^n)],\\
        \alpha_{\mathrm{enc}} &= \frac{1}{n}\E[\weight(U^n)].
    \end{aligned}
\end{equation*}
Fig.~\ref{fig:simulation_lossy_distortion_encoder} shows the plot of the achieved distortion level and bias at the encoder side for a block length $n=1024$. For comparison, the desired distortion and bias are also plotted. Note that the desired bias $\alpha$ corresponds to the mapping $D \mapsto \frac{\theta-D}{1-2D}$. At each distortion level $D$, the rate of the polar code $(H_1,\phi_1)$ is chosen to be close to the theoretical limit, i.e., we take 
\[
R_1 = 1-H(\widehat{X} \cond X).
\]
The results demonstrate that the achieved distortion and bias at the encoder side follow closely the desired design values. This implies that polar codes indeed have good shaping properties, even at finite block length.

Next, the entire lossy source coding scheme of Fig.~\ref{fig:lossy_asym} is simulated. The rate-distortion tradeoff for the coding scheme is shown in Fig.~\ref{fig:simulation_lossy} for different block lengths $n = 256$, $n=1024$, and $n=4096$. For good error performance of the lossless source decoder, the rate $R_2$ should be chosen with a small gap to its theoretical limit, i.e., we take 
\[
R_2 = 1-H(\widehat{X}) -\gamma,
\]
for some small ``back-off'' parameter $\gamma > 0$. In our simulations, we used $\gamma = 1/8$. Each simulation point shown in Fig.~\ref{fig:simulation_lossy} corresponds to a chosen rate pair $(R_1,R_2)$ of the constituent channel codes, where the rate of the coding scheme is $R=R_1-R_2$. For reference, the rate-distortion function is also shown. Clearly, the practical performance approaches the theoretical limit for increasing block lengths. The simulation results demonstrate that off-the-shelf codes can be leveraged in the construction of practical lossy source coding schemes. For more details about the simulation setup, our code is available on GitHub~\cite{Github}.

\begin{figure}[tbp]
	\centering
	\hspace*{-0.8em}
	\includegraphics[width=\columnwidth]{./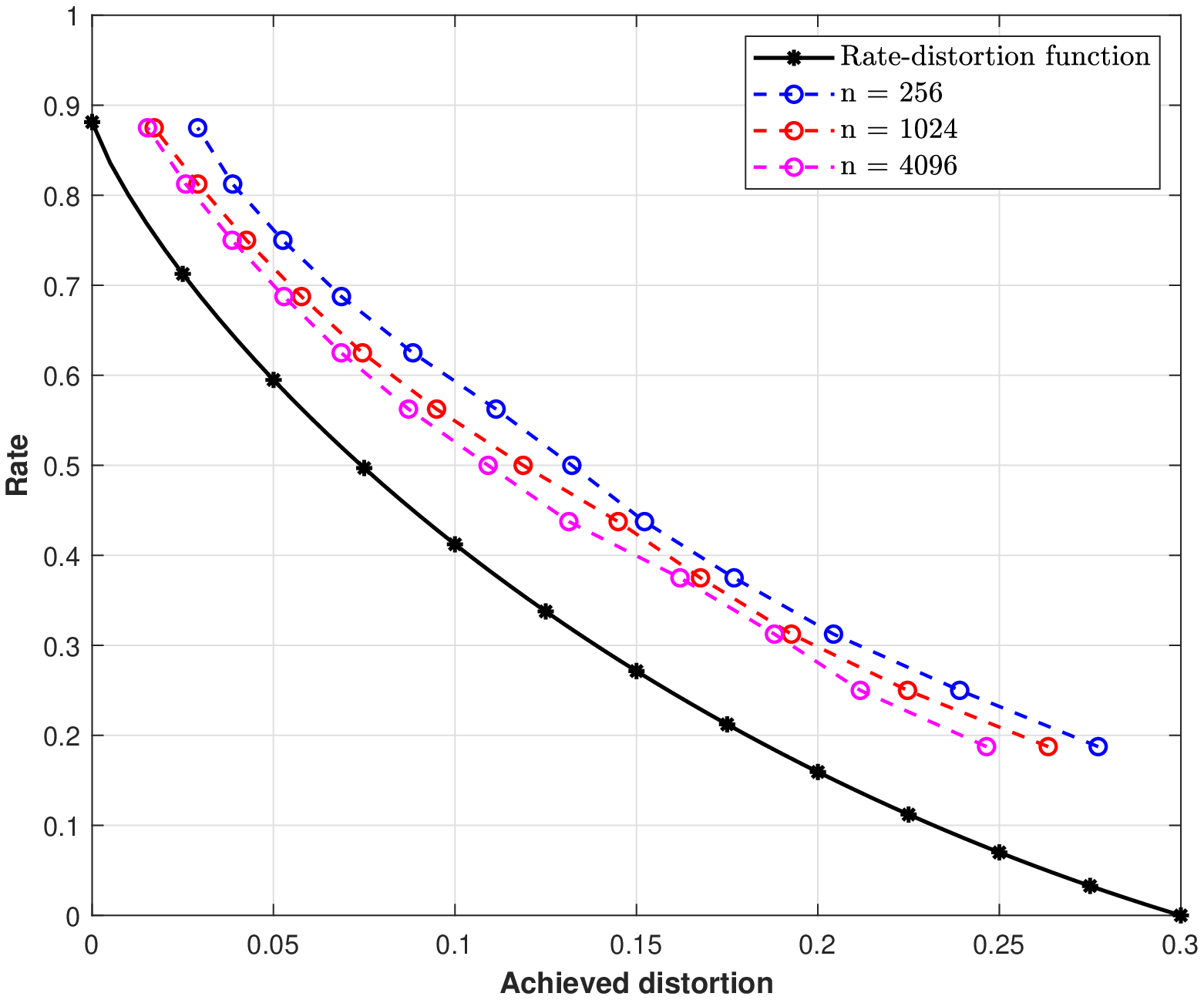}
	\caption{Rate-distortion tradeoff achieved by the lossy source coding scheme for a $\BERN(0.3)$ source using polar codes of different block lengths.}
	\label{fig:simulation_lossy}
\end{figure}

\subsection{Extension to Wyner--Ziv Coding} \label{sec:wyner}
The lossy source coding scheme presented in this section can be easily extended to the binary Wyner--Ziv coding problem, simply by replacing the lossless source code with a Slepian--Wolf code. Let us first review the binary Wyner--Ziv coding problem~\cite{Wyner1976}. This problem consists of a source alphabet $\cX = \{0,1\}$, arbitrary side information alphabet $\cY$, a reconstruction alphabet $\widehat{\cX} = \{0,1\}$, and a joint pmf $p(x,y)$ over $\cX \times \cY$. The source generates a jointly i.i.d. random process $\{(X_i,Y_i)\}$ with $(X_i,Y_i)\sim p(x,y)$. The goal is to efficiently represent a length-$n$ source sequence $X^n$ to a decoder which has access to the side information sequence $Y^n$ and wishes to reconstruct the source sequence up to some distortion level $D$. The definitions of a Wyner--Ziv code, its expected distortion and achievable rates are the same as in Section~\ref{sec:lossy_problem}, with the exception that the decoding function takes an additional input, namely, the side information sequence. Wyner and Ziv~\cite{Wyner1976} showed that for any conditional pmf $p(\hat{x}\cond x)$ such that $\E[d(X,\widehat{X})] \leq D$, any rate $R > I(X;\widehat{X} \cond Y)$ is achievable with a distortion level $D$.

In what follows, let $D$ be some desired distortion level, and let $p(\hat{x} \cond x)$ be a desired conditional pmf of the reconstruction given the source such that $\E[d(X,\widehat{X})] \leq D$, where $d(.,.)$ is the Hamming distortion metric. A code for the Wyner--Ziv coding problem can be constructed starting from the following point-to-point channel code and Slepian--Wolf code.

\begin{lego}[\textbf{P2P} $\to$ \textbf{WZ}]
a $(k_1,n)$ linear point-to-point channel code $(H_1,\phi_1)$ with codebook $\cC_1$ for the channel
\begin{equation*}
\bar{p}(x, v\cond\hat{x}) = p_{\widehat{X},X}(\hat{x}\oplus v, x).
\end{equation*}
Let $\delta$ denote the shaping distance of the code $(H_1,\phi_1)$ with respect to the channel $\bar{p}$.
\end{lego}

\begin{lego}[\textbf{SW} $\to$ \textbf{WZ}]
an $(n-k_2,n)$ Slepian--Wolf code $(H_2,\phi_2)$ for the problem
\[
p(\hat{x},y) = \sum_{x}p(x,y)p(\hat{x} \cond x),
\]
with codebook $\cC_2$ and average probability of error $\eps_2$. We assume that the two codebooks are nested, i.e., $\cC_2 \subseteq \cC_1$. 
\end{lego}

Fig.~\ref{fig:wyner} shows the block diagram of the Wyner--Ziv coding scheme, where $V^n$ is an i.i.d. $\BERN(1/2)$ random dither shared between the encoder and the decoder. The main difference in comparison to the lossy source coding scheme of Section~\ref{sec:lossy_asym} is that a Slepian--Wolf decoder is used instead of the lossless source decoder. The Slepian--Wolf decoder utilizes the available side information sequence at the decoder side. The description of the coding scheme and the analysis of its average distortion follow similarly as in the lossy source coding scheme.

\begin{remark} \label{remark:wyner_rates}
If the rate of the point-to-point channel code is $\frac{k_1}{n}=1-H(\widehat{X} \cond X)+\gamma_1$ for some $\gamma_1>0$, and the rate of the Slepian--Wolf code is $\frac{n-k_2}{n} = H(\widehat{X} \cond Y)+\gamma_2$ for some $\gamma_2 > 0$, then the rate of the Wyner--Ziv code is
\begin{equation*}
    \begin{aligned}
        \frac{k_1-k_2}{n} &= H(\widehat{X} \cond Y) - H(\widehat{X} \cond X)+ \gamma_1 + \gamma_2 \\
        &\overset{(a)}{=} I(X;\widehat{X} \cond Y) + \gamma_1 + \gamma_2,
    \end{aligned}
\end{equation*}
where the equality $(a)$ holds since $Y$ and $\widehat{X}$ are independent given $X$.
\end{remark}

\begin{figure*}[t]
	\centering
	\hspace*{4em}
	\def\svgscale{1.25}
	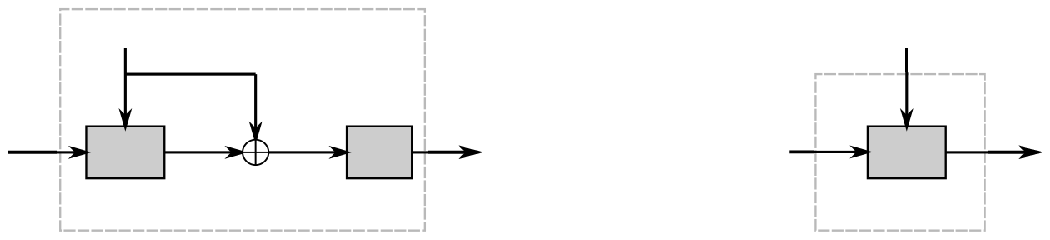
	\caption{Encoder and decoder of a Wyner--Ziv coding scheme starting from a point-to-point channel code and a Slepian--Wolf code.}
	\label{fig:wyner}
\end{figure*}

\vspace{0.25em}
\noindent {\bf Conclusion:} Starting from a $(k_1,n)$ linear point-to-point channel code with shaping distance $\delta$ and an $(n-k_2,n)$ Slepian--Wolf code with an average probability of error $\eps$, we have constructed a $(\frac{k_1-k_2}{n},n)$ Wyner--Ziv code that targets a conditional distribution $p(\hat{x} \cond x)$ with an average distortion that is bounded by $D+\delta+\eps$.

\section{Gelfand--Pinsker Coding} \label{sec:gelfand}
\subsection{Problem Statement}
The binary-input Gelfand--Pinsker problem consists of a discrete memoryless channel with state $p(y \cond x,s)p(s)$, input alphabet $\cX = \{0,1\}$, state alphabet $\cS$, output alphabet $\cY$, a collection of conditional probability mass functions $p(y,s\cond x)$ on $\cY \times \cS$ for each $x \in \cX$, and a probability mass function $p(s)$ on $\cS$, where the state sequence $(S_1, S_2, \ldots)$ is i.i.d. with $S_i \sim p(s_i)$ and is available noncausally only at the encoder~\cite{Gelfand1980}. An $(R, n)$ code $(g,\psi)$ for the Gelfand--Pinsker problem $p(y|x,s)p(s)$ consists of
\begin{itemize}
    \item a message set $\cM$ such that $\abs{\cM} = 2^{nR}$,
	\item an encoder $g: \cM \times \cS^n \to \cX^n$ that assigns a codeword $x^n = g(m,s^n)$ to each message $m$ and state sequence $s^n$, and
	\item a decoder $\psi: \cY^n \to \cM$ that assigns an estimate $\hat{m} = \psi(y^n)$ to each received sequence $y^n$.
\end{itemize}
The average probability of error of the code is $\P\{\widehat{M} \neq M\}$. An $(R, n)$ Gelfand--Pinsker code $(g,\psi)$ is depicted in Fig.~\ref{fig:gelfand_code}. 

A rate $R$ is said to be achievable for the Gelfand--Pinsker problem if there exists a sequence of $(R,n)$ Gelfand--Pinsker codes with vanishing error probability asymptotically. The classical result by Gelfand and Pinsker~\cite{Gelfand1980} states that any rate
\begin{equation} \label{eqn:gelfand_capacity}
R < \underset{p(x|s)}{\max} \big(I(X;Y) - I(X;S)\big),
\end{equation}
is achievable for the Gelfand--Pinsker problem. Gelfand and Pinsker's random coding scheme assigns, for each message $m$ and state sequence $s^n$, a codeword $x^n$ that is jointly typical with $s^n$ for some conditional pmf $p(x|s)$.

\begin{figure}[t] {
		\centering
		\hspace*{1em}
		\def\svgscale{1.25}
		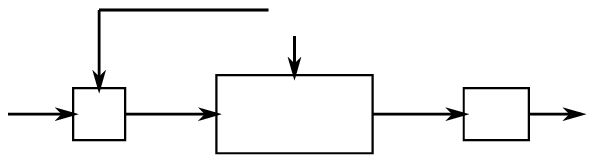
		\caption{A Gelfand--Pinsker code $(g,\psi)$ for a channel with a state that is known noncausally at the encoder.}
		\label{fig:gelfand_code}
	}
\end{figure}


\subsection{Coding Scheme} \label{sec:gelfand_coding_scheme}
In what follows, we describe a coding scheme for the Gelfand--Pinsker problem $p(y \cond x,s)p(s)$ starting from a Slepian--Wolf code and a point-to-point channel code designed for a symmetric channel. To this end, let $S^n$ be the state sequence that is i.i.d. according to $p(s)$, and let $p(x \cond s)$  be a desired conditional distribution of the channel input given the state sequence\footnote{For example, $p(x\cond s)$ can be chosen to be the maximizer of~(\ref{eqn:gelfand_capacity}).}. 

Inspired by Gelfand and Pinsker's random coding scheme, the main idea of the proposed coding scheme for the Gelfand--Pinsker problem is to shape the channel input sequence according to the desired conditional distribution $p(x\cond s)$. At the same time, the channel input sequence should encode the message to the decoder. As we shall see, a key ingredient in achieving these two goals is the nested structure of a pair of linear codes. More specifically, the Gelfand--Pinsker coding scheme can be constructed from the following Lego bricks.
\begin{lego}[\textbf{SW} $\to$ \textbf{GP}] \label{lego:slepian_gelfand}
an $(n-k_1,n)$ linear Slepian--Wolf code $(H_1,\phi_1)$ for the problem 
\[
p(x,y) = \sum_{s}p(s)p(x \cond s)p(y \cond x,s).
\]
Let $\cC_1$ be the codebook corresponding to $H_1$, and let $\eps$ be the average probability of error of the Slepian--Wolf code.
\end{lego}

\begin{lego}[\textbf{P2P} $\to$ \textbf{GP}] \label{lego:p2p_gelfand}
a $(k_2,n)$ linear point-to-point channel code $(H_2,\phi_2)$ with codebook $\cC_2$ for the channel
\begin{equation} \label{eqn:gelfand_symmetrized}
    \bar{p}(s,v \cond x) = p_{X,S}(x\oplus v, s).
\end{equation}
Let $\delta$ denote the shaping distance of the code $(H_2,\phi_2)$ with respect to the channel $\bar{p}$. Furthermore, we assume that the two codes are nested, i.e., $\cC_2 \subseteq \cC_1$.
\end{lego}

\begin{remark}
Since $\cC_2 \subseteq \cC_1$, we will assume, without loss of generality, that $H_1$ is a submatrix of $H_2$, i.e., $H_2 = \begin{bmatrix}
	H_1\\
	Q
\end{bmatrix}$ for some $(k_1-k_2)\times n$ matrix $Q$. Further, let $H_2 = \begin{bmatrix}A & B\end{bmatrix}$, where $B$ is nonsingular, and let $\wtilde{H}_2$ be as defined in Remark~\ref{remark:H_tilde}.
\end{remark}

\begin{remark}
The channel $\bar{p}$ in~(\ref{eqn:gelfand_symmetrized}) is the symmetrized channel corresponding to the desired joint distribution $p(x,s)$.
\end{remark}

\begin{figure*}[t]
	\centering
	\hspace*{1em}
	\def\svgscale{1.25}
	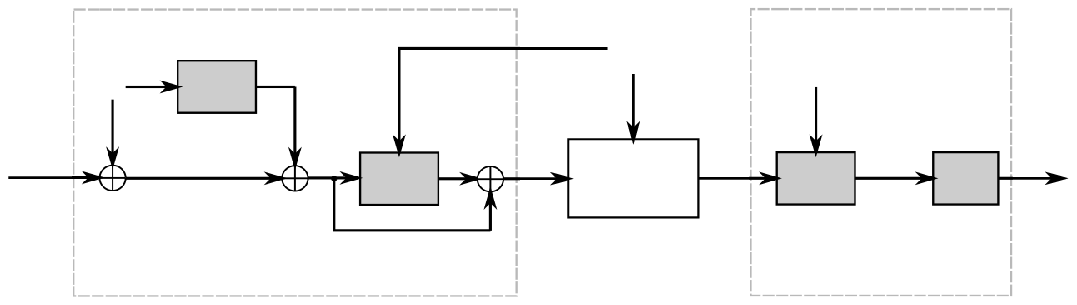
	\caption{A Gelfand--Pinsker coding scheme starting from a point-to-point channel code and a Slepian--Wolf code.}
	\label{fig:gelfand}
\end{figure*}

Fig.~\ref{fig:gelfand} shows the block diagram of the Gelfand--Pinsker coding scheme, where $V_1^{n-k_1}$ is an i.i.d. $\BERN(1/2)$ random dither shared between the encoder and the decoder, and $V_2^n$ is an i.i.d. $\BERN(1/2)$ sequence generated independently at the encoder (not necessarily shared with the decoder). To simplify the notation, let us denote the input to the Gelfand--Pinsker encoder in Fig.~\ref{fig:gelfand} by
\begin{equation} \label{eqn:zn}
    Z^n \triangleq \begin{bmatrix}\bf 0 \\
B^{-1}\begin{bmatrix}V_1^{n-k_1}\\
M^{k_1-k_2}\end{bmatrix}\end{bmatrix}.
\end{equation}
The coding scheme uses the decoding function $\phi_2$ to shape the channel input $X^n$ according to the desired distribution. To see why this holds in the construction of Fig.~\ref{fig:gelfand}, let $U^n = Z^n \oplus V_2^n \oplus \wtilde{H}_2V_2^n$ denote one of the two inputs to $\phi_2$, where the second input is $S^n$. Notice that by Lemma~\ref{lemma:codify}, $V_2^n \oplus \wtilde{H}_2V_2^n \, \sim \Unif(\cC_2)$, and $U^n$ is i.i.d. $\BERN(1/2)$ (and independent of $S^n$). Hence, by Lemma~\ref{lemma:shaping_distance}, the sequence $X^n =\phi_2(S^n,U^n)\oplus U^n$ satisfies that
\begin{equation} \label{eqn:gelfand}
    \frac{1}{2}\sum_{x^n,s^n} \left| \P\{X^n = x^n, S^n=s^n\} - \prod_{i=1}^n p(x_i, s_i) \right| \, \leq \delta,
\end{equation}
and thus, the distribution of $(X^n,S^n)$ is $\delta$-away in total variation distance from the desired joint distribution. 

Moreover, it holds that
\begin{equation*}
    \begin{aligned}
        \begin{bmatrix}
            H_1X^n \\
            QX^n
        \end{bmatrix} &= H_2X^n = H_2U^n \overset{(a)}{=} H_2Z^n \\
        &= \begin{bmatrix}
            A & B
        \end{bmatrix}\begin{bmatrix}
            \bf 0 \\
            B^{-1}\begin{bmatrix}V_1^{n-k_1}\\
            M^{k_1-k_2}\end{bmatrix}
        \end{bmatrix} = \begin{bmatrix}V_1^{n-k_1}\\
            M^{k_1-k_2}\end{bmatrix},
    \end{aligned}
\end{equation*}
where $(a)$ follows since $V_2^n \oplus \wtilde{H}_2V_2^n \in \cC_2$. Therefore, $H_1X^n = V_1^{n-k_1}$, which is the index inputted to the Slepian--Wolf decoder $\phi_1$ at the decoder side.

\begin{remark}
    Intuitively, the Gelfand--Pinsker construction can be understood as follows. The sequence $V_1^{n-k_1}$ represents a coset shift of the outer code $\cC_1$, whereas the message $M^{k_1-k_2}$ represents a coset shift of the inner code $\cC_2$ within the outer code. Since $H_2X^n = H_2Z^n$ (by construction), the channel input $X^n$ belongs to the coset of the inner code $\cC_2$ indexed by $(V_1^{n-k_1}, M^{k_1-k_2})$. Since the sequence $V_1^{n-k_1}$ is shared between the encoder and the decoder, and due to the nested structure of the two codes, the coset shift with respect to the outer code $\cC_1$ is known to the decoder, which can be leveraged by the Slepian--Wolf decoder to recover an estimate of the channel input sequence, and, hence, the message. Note that the idea of using such a nested structure to code over a Gelfand--Pinsker channel has been considered in~\cite{Pradhan2011}, where a joint typicality encoder and decoder was used. 
\end{remark}

Therefore, the coding scheme can be summarized as follows.

\vspace{0.25em}
\noindent \emph{Encoding:} To transmit the message $m^{k_1-k_2}$ upon observing the state sequence $s^n$, the encoder computes the sequence $z^n$ as in (\ref{eqn:zn}) using a random dither $v_1^{n-k_1}$ shared with the decoder, and sends $x^n = \phi_2(s^n, z^n \oplus v_2^n \oplus \wtilde{H}_2v_2^n) \oplus z^n \oplus v_2^n \oplus \wtilde{H}_2v_2^n$ over the channel, where $v_2^n$ is a random dither generated independently at the encoder.

\vspace{0.25em}
\noindent \emph{Decoding:} Upon observing the channel output $y^n$, the decoder computes $\hat{x}^n = \phi_1\left(\begin{bmatrix}\bf 0 \\ v_1^{n-k_1}\end{bmatrix}, y^n\right)$, and declares $\hat{m}^{k_1-k_2} = Q\hat{x}^n$ as the message estimate.

\vspace{0.25em}
\noindent \emph{Analysis of the probability of error:} Let $q(x^n,s^n)$ be the distribution of $(X^n,S^n)$ in Fig.~\ref{fig:gelfand}, and let $p(x^n,s^n)$ be the i.i.d. distribution according to $p(x,s)$. The average probability of error can be bounded as
\begin{align} 
	&\P\{\widehat{M}^{k_1-k_2} \neq M^{k_1-k_2}\} \leq \P\{\widehat{X}^n \neq X^n\}  \nonumber \\
	&=\hspace*{-0.2em}\sum_{x^n,s^n} \hspace*{-0.2em}\P\{\widehat{X}^n \hspace*{-0.2em}\neq \hspace*{-0.2em}X^n | X^n\hspace*{-0.2em}=\hspace*{-0.2em}x^n, S^n\hspace*{-0.2em}=\hspace*{-0.2em}s^n\} q(x^n,s^n)  \nonumber \\ 
	&=\hspace*{-0.2em} \sum_{x^n,s^n} \hspace*{-0.2em}\P\{\widehat{X}^n \hspace*{-0.2em}\neq X^n\hspace*{-0.2em} | X^n\hspace*{-0.2em}=\hspace*{-0.2em}x^n, S^n\hspace*{-0.2em}=\hspace*{-0.2em}s^n\}\big(q(x^n,s^n) - p(x^n,s^n) \big)\nonumber \\
	&\quad +\sum_{x^n,s^n}\P\{\widehat{X}^n \hspace*{-0.2em}\neq\hspace*{-0.2em} X^n | X^n\hspace*{-0.2em}=\hspace*{-0.2em}x^n, S^n\hspace*{-0.2em}=\hspace*{-0.2em}s^n\}p(x^n,s^n) \nonumber \\
	&\overset{(a)}{\leq} \frac{1}{2}\sum_{x^n,s^n} \big| q(x^n,s^n) - p(x^n,s^n) \big| + \epsilon \nonumber \\
	&\overset{(b)}{\leq} \delta + \epsilon,
\end{align}
where $(a)$ holds since $\sum_{i}c_i(a_i -b_i) \leq \frac{1}{2}\sum_{i}\left|a_i-b_i\right|$ whenever $0\leq c_i \leq 1$ and $\sum_{i}a_i = \sum_{i}b_i$ and by the fact that $\P\{\widehat{X}^n \neq X^n | X^n=x^n, S^n=s^n\}$ depends only on the channel $p(y|x,s)$, and $(b)$ follows by equation (\ref{eqn:gelfand}). 

\vspace{0.25em}
\noindent \emph{Rate:} The rate of the coding scheme is $R = \frac{k_1-k_2}{n}$.

\begin{remark}
Recall from Section~\ref{sec:properties} and Remark~\ref{remark:symmetrized} that a sequence of point-to-point channel codes with a vanishing shaping distance $\delta$ over the channel $\bar{p}$ defined in (\ref{eqn:gelfand_symmetrized}) exists if (and only if) the rate is larger than $1-H(X\cond S)$.
\end{remark}

\begin{remark} \label{remark:gelfand_rates}
If the rate of the point-to-point channel code is $\frac{k_2}{n} = 1-H(X \cond S) + \gamma_1$ for some $\gamma_1 > 0$, and the rate of the Slepian--Wolf code is $\frac{n-k_1}{n} = H(X \cond Y) + \gamma_2$ for some $\gamma_2 > 0$, then the rate of the Gelfand--Pinsker coding scheme is
\begin{equation*}
    \begin{aligned}
        R=\frac{k_1-k_2}{n} &= H(X \cond S) - H(X \cond Y) - \gamma_1 - \gamma_2 \\
        &= I(X;Y) - I(X;S) - \gamma_1 - \gamma_2.
    \end{aligned}
\end{equation*}
\end{remark}

\vspace{0.25em}
\noindent {\bf Conclusion:} Starting from an $(n-k_1,n)$ Slepian--Wolf code with an average probability of error $\eps$ and a $(k_2,n)$ linear point-to-point channel code with shaping distance $\delta$, we have constructed a $(\frac{k_1-k_2}{n},n)$ Gelfand--Pinsker code that targets a conditional distribution $p(x \cond s)$ of the channel input given the channel state and has an average probability of error that is bounded by $\delta+\eps$.

\subsection{Simulation Results} \label{sec:gelfand_simulation}
In this part, we simulate the performance of the Gelfand--Pinsker coding scheme over a binary-input Gaussian channel with a binary state. The channel output can be expressed as
\begin{equation} \label{eqn:gelfand_simulation_model}
Y = X + gS + Z,
\end{equation}
where $X \in \{-\sqrt{P},\sqrt{P}\}$ is the channel input, $S \in \{-\sqrt{P},\sqrt{P}\}$ is a channel state that is known noncausally only at the encoder with $\P\{S = -\sqrt{P}\} = \theta$, and $Z \sim \cN(0,1)$ is a sample from an i.i.d. Gaussian noise process. In our simulations, we use $g = 0.9$ and $\theta = 0.1$. 

By constructing the Slepian--Wolf code $(H_1,\phi_1)$ using its constituent point-to-point code (as described in Section~\ref{sec:p2p_slepian}), the Gelfand--Pinsker coding scheme can be implemented using two point-to-point channel codes. Inspired by the characterization (\ref{eqn:gelfand_capacity}) of achievable rates for the Gelfand--Pinsker problem, the proposed coding scheme targets the the conditional distribution $p^{*}(x \cond s)$ that maximizes $I(X;Y)- I(X;S)$, i.e.,
\begin{equation} \label{eqn:gelfand_optimization}
p^{*}(x \cond s) = \underset{p(x \cond s)}{\arg\max} \left( I(X;Y)- I(X;S) \right).
\end{equation}
Since $X$ and $S$ are binary, the optimization~(\ref{eqn:gelfand_optimization}) can be solved efficiently using off-the-shelf numerical solvers (or simply using a grid search over the two parameters that define $p(x\cond s)$). We compare the proposed coding scheme with the simple strategy that encodes the transmitter's message using a point-to-point channel code while ignoring the available state sequence. Since the induced channel input distribution would be $\BERN(1/2)$ in this case, we refer to this strategy as the ``symmetric coding'' strategy. 

We first look at the maximum achievable rate of the two coding strategies as a function of the power level $P$. For the proposed Gelfand--Pinsker coding scheme, the maximum achievable rate is given by
\[
C_{\mathrm{GP}} \triangleq \underset{p(x \cond s)}{\max} \left( I(X;Y)- I(X;S) \right),
\]
whereas the maximum achievable rate of the symmetric coding strategy is $C_{\mathrm{sym}} \triangleq I(X;Y)$, which is evaluated when $X$ is $\BERN(1/2)$ and independent of $S$. The solid lines in Fig.~\ref{fig:simulation_gelfand_rates} show the plots of $C_{\mathrm{GP}}$ and $C_{\mathrm{sym}}$ as a function of $P$. The plots demonstrate that, in theory, Gelfand--Pinsker coding can achieve larger rates asymptotically compared to the symmetric coding strategy.

\begin{figure}[tbp]
	\centering
	\hspace*{-0.8em}
	\includegraphics[width=\columnwidth]{./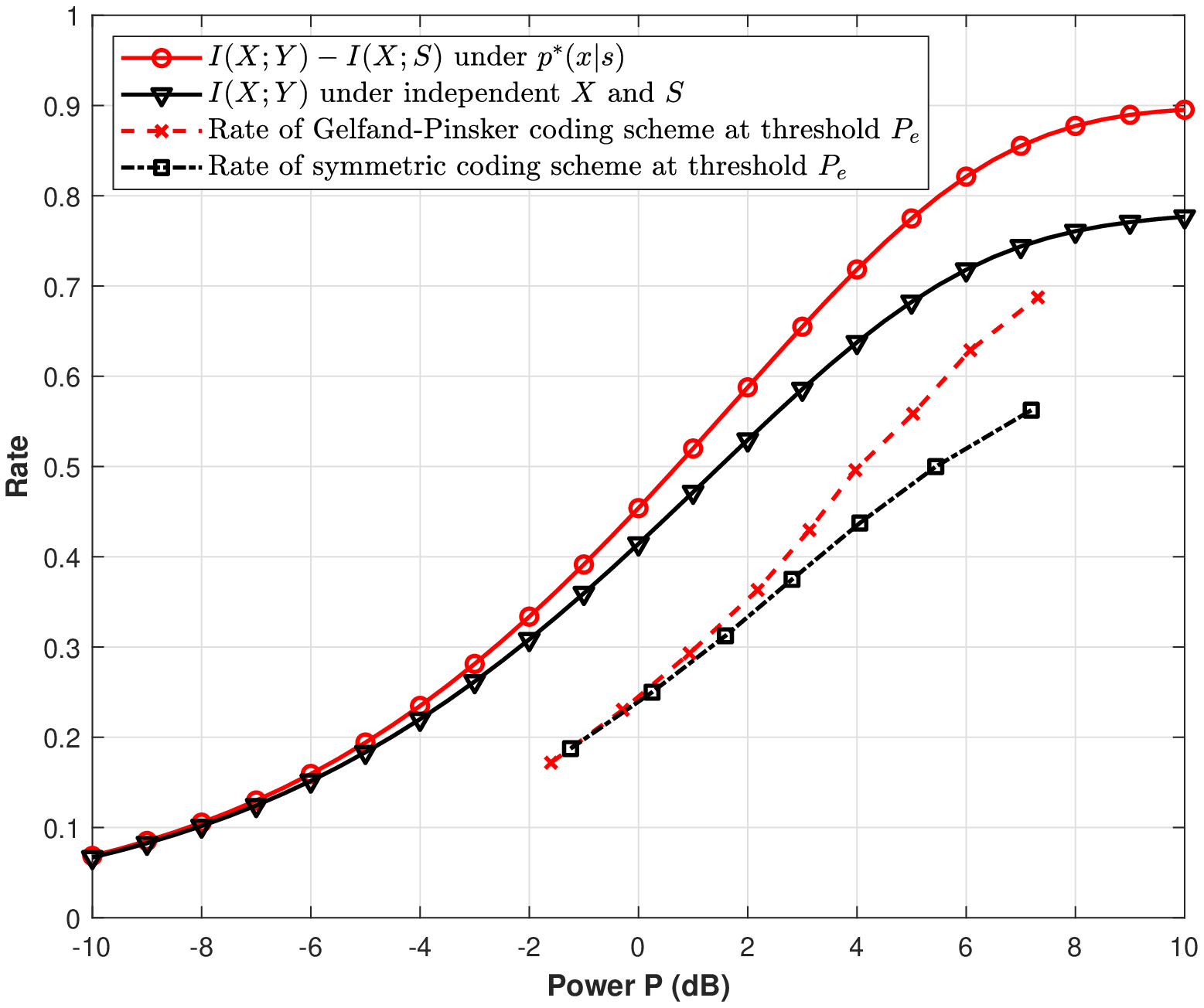}
	\caption{Achieved rates of the Gelfand--Pinsker coding scheme at a fixed block error probability $P_e^{\mathrm{threshold}} = 10^{-2}$.}
	\label{fig:simulation_gelfand_rates}
\end{figure}

Before we describe what is meant by the dashed curves of Fig.~\ref{fig:simulation_gelfand_rates}, we shift our attention to evaluating the block error rate performance of the two coding strategies over the channel model~(\ref{eqn:gelfand_simulation_model}). The coding strategies are compared for the same block length $n=1024$ and rate $R=0.5$. We use polar codes with successive cancellation decoding as the constituent point-to-point channel codes. To ensure nestedness of the constituent codes in the proposed Gelfand--Pinsker coding scheme, the information set of the polar code corresponding to $(H_2,\phi_2)$ is taken to be a subset of that corresponding to the code $(H_1,\phi_1)$. Let $R_1$ and $R_2$ be the rates of the two polar codes. The rates $R_1$ and $R_2$ are chosen ``close'' to their theoretical limits (see Remark~\ref{remark:gelfand_rates}) when the conditional distribution of the channel input given the channel state is $p^{*}(x \cond s)$. More precisely, we take
\begin{equation} \label{eqn:gelfand_rates}
    \begin{aligned}
    R_1 &= 1-H(X\cond Y)-\gamma,\\
    R_2 &= 1-H(X\cond S),
    \end{aligned}
\end{equation}
where $\gamma > 0$ is a ``back-off'' parameter from the theoretical limit, which allows for a reasonable error probability for the code $(H_1,\phi_1)$. In our Gelfand--Pinsker simulations, we used $\gamma = 3/16$. Note that for the symmetric coding strategy, the polar code is designed for the average channel $p_{\mathrm{avg}}(y \cond x) \triangleq \sum_{s}p(s)p(y \cond x, s)$. Fig.~\ref{fig:simulation_gelfand_rates} shows the block error rate performance of the two coding strategies. The Gelfand--Pinsker coding scheme shows significant performance gain compared to symmetric coding. 

By repeating the same experiment for different values of the rate $R$, one can plot $R$ as a function of the power level $P$, for a fixed block error probability $P_e^{\mathrm{threshold}}= 10^{-2}$. This is shown in the dashed curves of Fig.~\ref{fig:simulation_gelfand_rates}. For example, for a rate $R=0.5$, we know from Fig.~\ref{fig:simulation_gelfand} that the Gelfand--Pinkser coding scheme achieves an error probability of $10^{-2}$ when the power level is around 4 dB, whereas the symmetric coding scheme has an error probability of $10^{-2}$ when the power level is around 5.5 dB. A similar approach is taken for other values of $R$. The dashed curves of Fig.~\ref{fig:simulation_gelfand_rates} show that the Gelfand--Pinsker coding scheme can achieve larger rates in practice (i.e., the gain from Gelfand--Pinsker coding is not only theoretical). These results demonstrate the performance gain from ``shaping'' the channel input with respect to the known state sequence. For more details about the simulation setup, our code is available on GitHub~\cite{Github}.

\begin{figure}[tbp]
	\centering
	\hspace*{-0.8em}
	\includegraphics[width=\columnwidth]{./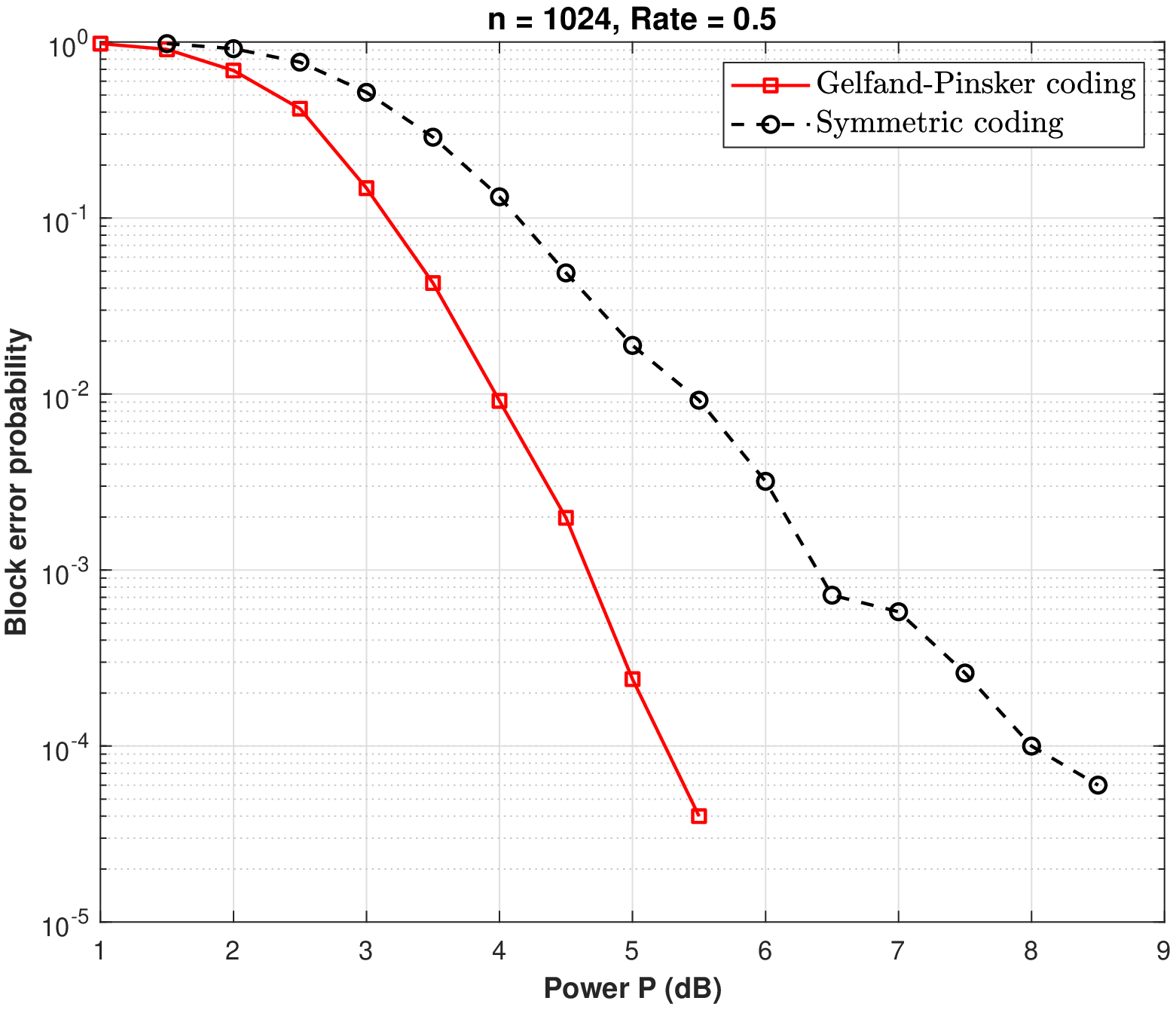}
	\caption{Simulation results of the Gelfand--Pinsker coding scheme for a block length $n=1024$ and rate $R = 0.5$ over a Gaussian channel with state.}
	\label{fig:simulation_gelfand}
\end{figure}

\begin{remark}
    Note that the symmetric coding scheme can be implemented as an instance of the Gelfand--Pinsker coding scheme (Fig.~\ref{fig:gelfand}) for the case when $R_2=0$. Therefore, the simulation results here demonstrate that using strictly positive rates $R_2$ to target the conditional pmf $p^{*}(x \cond s)$ can achieve superior performance compared to the naive approach of coding for the average channel $p_{\mathrm{avg}}$.
\end{remark}

\subsection{Specialization to Asymmetric Channel Coding} \label{sec:asym}
The problem of coding for an asymmetric point-to-point channel (Section~\ref{sec:p2p}) can be seen as a special case of the Gelfand--Pinkser coding problem when the channel state is constant and independent of the channel output. This observation has been previously formalized in~\cite{Ganguly2020}. For completion, we give the details here, along with the resulting construction of an asymmetric channel coding scheme.

Suppose we have an $(R,n)$ code $(g,\psi)$ for the binary-input Gelfand--Pinsker problem $p(y \cond x,s)p(s)$, where $p(y\cond x,s)=p(y \cond x)$ (i.e., the channel output is independent of the state given the channel input) and $p_S(0) = 1$. Let $\eps$ be the average probability of error of the code. Define $f:[2^{nR}] \to \{0,1\}^n$ by $f(m) = g(m, {\bf 0})$. Then, $(f,\psi)$ forms a code for the channel $p(y \cond x)$ with length $n$, rate $R$ and average probability of error
    \begin{align*}
        &\sum_{m}2^{-nR}.\P\{\psi(Y^n) \neq m \cond X^n = f(m) \}\\
        &= \hspace*{-0.2em}\sum_{m}2^{-nR}.\P\{\psi(Y^n) \neq m \cond X^n \hspace*{-0.2em}=\hspace*{-0.2em} g(m,{\bf 0}) \}\\
        &= \hspace*{-0.2em}\sum_{m}2^{-nR}\Big(\P\{\psi(Y^n) \neq m \cond X^n \hspace*{-0.2em}= \hspace*{-0.2em}g(m,{\bf 0}), S^n \hspace*{-0.2em}= \hspace*{-0.2em}{\bf 0} \}\prod_{i=1}^n p_S(0) \\
        &\quad + \sum_{s^n \neq {\bf 0}} \P\{\psi(Y^n) \neq m \cond X^n \hspace*{-0.2em}=\hspace*{-0.2em} g(m,{\bf 0}), S^n \hspace*{-0.2em}= \hspace*{-0.2em}s^n\}\prod_{i=1}^n p_S(s_i)\Big)\\
        &\overset{(a)}{=} \hspace*{-0.2em}\sum_{m}2^{-nR}\P\{\psi(Y^n) \neq m \cond X^n \hspace*{-0.2em}= \hspace*{-0.2em}g(m,{\bf 0}), S^n \hspace*{-0.2em}= \hspace*{-0.2em}{\bf 0} \} \\
        &\overset{(b)}{=} \hspace*{-0.2em}\sum_{m}2^{-nR}\Big(\P\{\psi(Y^n) \neq m \cond X^n \hspace*{-0.2em}= \hspace*{-0.2em}g(m,{\bf 0}), S^n \hspace*{-0.2em}= \hspace*{-0.2em}{\bf 0} \}\prod_{i=1}^n p_S(0) \\
        &\quad+ \sum_{s^n \neq {\bf 0}} \P\{\psi(Y^n) \neq m \cond X^n \hspace*{-0.2em}= \hspace*{-0.2em}g(m,s^n), S^n \hspace*{-0.2em}= \hspace*{-0.2em}s^n\}\prod_{i=1}^n p_S(s_i)\Big)\\
        &= \hspace*{-0.2em}\sum_{m, s^n}2^{-nR}\P\{\psi(Y^n) \neq m \cond X^n \hspace*{-0.2em}= \hspace*{-0.2em}g(m,s^n), S^n \hspace*{-0.2em}= \hspace*{-0.2em}s^n\}\prod_{i=1}^n p_S(s_i) \\
        &= \eps,
    \end{align*}
where $(a)$ and $(b)$ follow since $\prod_{i=1}^n p_S(s_i) = 0$ for $s^n \neq {\bf 0}$. 

Therefore, the Gelfand--Pinsker coding scheme described in Section~\ref{sec:gelfand_coding_scheme} can be used to construct an asymmetric channel code in the special case when the state sequence is the constant all-zero sequence, i.e., $p_S(0)=1$. To this end, consider a binary-input channel $p(y \cond x)$ that is not necessarily symmetric. Let the desired input distribution be $p(x) \sim \BERN(\alpha)$ for some $\alpha \in (0,1/2)$. By specializing the Lego bricks of the Gelfand--Pinsker coding scheme to the case when $p_S(0)=1$ and the desired $p(x\cond s)$ is $\BSC(\alpha)$, the asymmetric channel coding scheme can be constructed using the following Lego bricks.

\begin{lego}[\textbf{SW} $\to$ \textbf{Asym}]
an $(n-k_1,n)$ linear Slepian--Wolf code $(H_1,\phi_1)$ for the problem $p(x,y) = p(x)p(y \cond x)$ with codebook $\cC_1$ and average probability of error $\eps$.
\end{lego}

\begin{lego}[\textbf{P2P} $\to$ \textbf{Asym}]
a $(k_2,n)$ linear point-to-point channel code $(H_2,\phi_2)$ for $\mathrm{BSC}(\alpha)$ with codebook $\cC_2$ and shaping distance $\delta$. Furthermore, we assume that the two codes are nested, i.e., $\cC_2 \subseteq \cC_1$.
\end{lego}

\begin{remark}
As before, it is assumed that $H_1$ is a submatrix of $H_2$, i.e., $H_2 = \begin{bmatrix}
	H_1\\
	Q
\end{bmatrix}$ for some $(k_1-k_2)\times n$ matrix $Q$, and $H_2 = \begin{bmatrix}A & B\end{bmatrix}$, for some non-singular matrix $B$.
\end{remark}

Fig.~\ref{fig:asym} shows the block diagram for the asymmetric channel coding scheme, which can be seen as a specialization of the Gelfand--Pinsker coding scheme when the state sequence is the constant all-zero sequence. The description of the coding scheme and the analysis of its probability of error follow similarly as in the Gelfand--Pinsker case.

\begin{remark} \label{remark:asym_rates}
If the rate of the point-to-point channel code is $\frac{k_2}{n} = 1-H(X) + \gamma_1$ for some $\gamma_1 > 0$, and the rate of the Slepian--Wolf code is $\frac{n-k_1}{n} = H(X \cond Y) + \gamma_2$ for some $\gamma_2 > 0$, then the rate of the asymmetric channel coding scheme is
\[
\frac{k_1-k_2}{n} = I(X;Y) - \gamma_1 - \gamma_2.
\]
\end{remark}

\begin{figure*}[t]
	\centering
	\def\svgscale{1.25}
	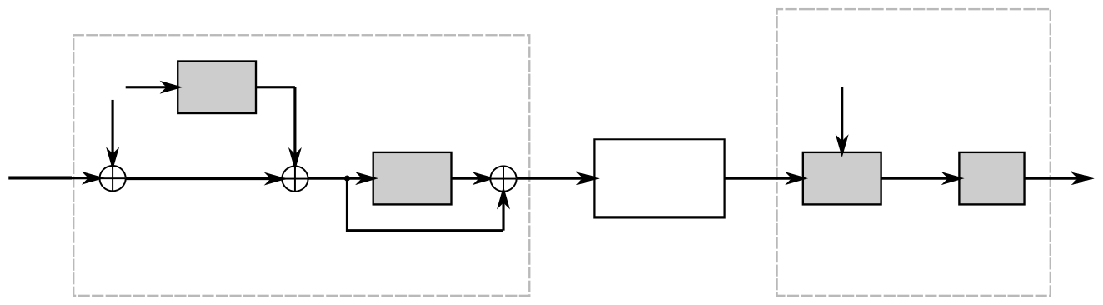
	\caption{A coding scheme for an asymmetric channel starting from a Slepian--Wolf code and a point-to-point channel code.}
	\label{fig:asym}
\end{figure*}

\begin{remark}
The asymmetric channel encoder shown in Fig.~\ref{fig:asym} is almost identical to the lossy source decoder shown in Fig.~\ref{fig:lossy_asym} (when the constituent lossless source decoder is implemented using a point-to-point channel code). Similarly, the lossy source encoder is almost identical to the asymmetric channel decoder (when the constituent Slepian--Wolf decoder is implemented using a point-to-point channel code). This suggests some form of duality between the two constructions. A similar observation can be made between the Gelfand--Pinsker and Wyner--Ziv constructions.
\end{remark}

\vspace{0.25em}
\noindent {\bf Conclusion:} Starting from an $(n-k_1,n)$ Slepian--Wolf code with an average probability of error $\eps$ and a $(k_2,n)$ linear point-to-point channel code with shaping distance $\delta$, we have constructed a $(k_1-k_2,n)$ code for an asymmetric point-to-point channel that targets an input distribution $p(x)$ and has an average probability of error that is bounded by $\delta+\eps$.

\section{Block-Markov Constructions} \label{sec:block_markov}
In the constructions we presented so far in this paper, the shaping distance of a point-to-point channel code designed for a BMS channel was used as a primitive property in one of the constituent Lego bricks. Unfortunately, the shaping distance is difficult to be estimated for commercial off-the-shelf codes since it involves the computation of the total variation distance between distributions over exponentially large alphabets. To circumvent this inconvenience, we present in this section modified coding schemes involving properties that are easily verifiable in practice. More precisely, our assumptions in this section will be over alphabets with size that is at most polynomial in the block length. Moreover, the coding schemes presented here will not make assumptions of nestedness between the constituent Lego bricks, which was often assumed in the constructions thus far. However, as we shall see, these benefits will come at the cost of larger implementation complexity, a small penalty incurred in achievable rates, and a worse performance guarantee. In particular, the coding schemes presented in this section will have a \emph{block-Markov structure}, i.e., they will be defined upon the transmission of several blocks of information and the inputs to one encoding/decoding block can depend on the outputs of previous blocks.

The following definitions will be used in the block-Markov constructions of coding schemes.

\begin{definition}[Shaping Hamming Distance Spectrum]
Given a linear point-to-point channel code $(H,\phi)$ for a binary-input binary-output channel $p(y \cond x)$, the \emph{shaping Hamming distance spectrum} $W$ is defined as
\[
W \triangleq d_H\big(V^n, \phi(V^n)\big),
\]
where $V^n$ is i.i.d. $\mathrm{Bern}(1/2)$, and $d_H(.,.)$ denotes the Hamming distance metric. Therefore, we have that $W \in \{0,1,\ldots, n\}$, and
\[
\P\{W=w\} = \frac{\abs{\{v^n \in \mathbb{F}_2^n: d_H\big(v^n,\phi(v^n)\big) = w\}}}{2^n},
\]
for $0\leq w\leq n$. 
\end{definition}

\begin{definition}[Joint-Type]
Given a pair of sequences $(x^n,y^n)\in \cX^n\times \cY^n$, the \emph{joint-type} of $(x^n,y^n)$, denoted $\mathrm{type}(x^n,y^n)$, is the number of occurrence of each symbol pair $(x,y)\in\cX\times \cY$ in $(x^n,y^n)$, i.e.,
\[
\mathrm{type}(x^n,y^n) = \left( \pi(x,y \cond x^n,y^n) \right)_{(x,y)\in \cX \times \cY},
\]
where $\pi(x,y \cond x^n,y^n) = \abs{\{i: 1\leq i \leq n, \, x_i=x, \, y_i=y\}}$.
\end{definition}

\begin{definition}[Shaping Joint-Type Spectrum]
Consider a symmetric binary-input channel $p(y \cond x)$ and a linear point-to-point channel code $(H,\phi)$. Let $Y^n$ be an i.i.d. according to $p(y)$, where $p(y) = \sum_{x}\frac{1}{2}p(y \cond x)$, and define $X^n = \phi(Y^n)$. The \emph{shaping joint-type spectrum} $T$ is defined as the joint-type of the pair $(X^n,Y^n)$, i.e.,
\[
T \triangleq \mathrm{type}(X^n,Y^n).
\]
For example, if $\cY = \{0,1\}$, then $T = (T_{00},T_{01},T_{10},T_{11})$, where $T_{xy}$ is the number of occurrences of the symbol pair $(x,y)$ in the sequences $(X^n,Y^n)$, for $x,y \in \{0,1\}$.
\end{definition}

The modified coding schemes presented in this section are based on the distributions of $W$ and $T$. The assumptions are easily-verifiable in practice since the distributions of $W$ and $T$ can be estimated efficiently for any commercial off-the-shelf code via Monte Carlo simulations. It turns out that all the previous constructions can be modified to work under these simplified assumptions. For conciseness, however, we will only focus on two problems: the asymmetric channel coding problem and the lossy source coding problem for an asymmetric source.

\subsection{Asymmetric Channel Coding} \label{sec:block_markov_asym}
Consider the problem of coding for a binary-input asymmetric channel $p(y\cond x)$, where the capacity-achieving input distribution is $p(x) \sim \mathrm{Bern}(\alpha)$ for some $\alpha \in (0,1/2)$. The coding scheme we present here is defined upon the transmission of $b$ blocks of information for some fixed $b$, and is constructed starting from the following three Lego bricks.

\begin{lego}[\textbf{SW} $\to$ \textbf{Asym}]
an $(n-k_1,n)$ linear Slepian--Wolf code $(H_1,\phi_1)$ for the problem $p(x,y) = p(x)p(y \cond x)$ with an average probability of error $\eps$.
\end{lego}

\begin{lego}[\textbf{P2P} $\to$ \textbf{Asym}]
a $(k_2,n)$ linear point-to-point channel code $(H_2,\phi_2)$ for $\mathrm{BSC}(\alpha)$ with a shaping Hamming distance spectrum $W_2$ satisfying
\begin{equation} \label{eqn:tv_asym_bm}
    \frac{1}{2}\sum_{w=0}^n \Big| \P\{W_2=w\} - \binom{n}{w}\alpha^{w}(1-\alpha)^{n-w} \Big| \, \leq \delta,
\end{equation}
for some $\delta > 0$. We assume that $k_2 < k_1$.
\end{lego}

\begin{lego}[\textbf{P2P} $\to$ \textbf{Asym}]
a $(k_{\mathrm{sym}},n)$ point-to-point channel code $(f_{\mathrm{sym}},\phi_{\mathrm{sym}})$ for the channel $p(y|x)$ with average probability of error $\eps_{\mathrm{sym}}$.\footnote{For example, one can use any code that approaches the symmetric capacity of the channel. Note that this code will be used only in the first transmission block.}
\end{lego}

\begin{remark}
Since the alphabet size of $W_2$ is linear in $n$, the distribution of $W_2$ can be estimated for any off-the-shelf channel code via simulations. Hence, condition~(\ref{eqn:tv_asym_bm}) is easily verifiable in practice. In words, this condition states that the distribution of the shaping Hamming distance spectrum is $\delta$-away in total variation distance from a $\mathrm{Binom}(n,\alpha)$ distribution.
\end{remark}

\begin{figure}[t]
	\centering
	\def\svgscale{1.25}
\begingroup%
  \makeatletter%
  \providecommand\color[2][]{%
    \errmessage{(Inkscape) Color is used for the text in Inkscape, but the package 'color.sty' is not loaded}%
    \renewcommand\color[2][]{}%
  }%
  \providecommand\transparent[1]{%
    \errmessage{(Inkscape) Transparency is used (non-zero) for the text in Inkscape, but the package 'transparent.sty' is not loaded}%
    \renewcommand\transparent[1]{}%
  }%
  \providecommand\rotatebox[2]{#2}%
  \newcommand*\fsize{\dimexpr\f@size pt\relax}%
  \newcommand*\lineheight[1]{\fontsize{\fsize}{#1\fsize}\selectfont}%
  \ifx\svgwidth\undefined%
    \setlength{\unitlength}{303.74993512bp}%
    \ifx\svgscale\undefined%
      \relax%
    \else%
      \setlength{\unitlength}{\unitlength * \real{\svgscale}}%
    \fi%
  \else%
    \setlength{\unitlength}{\svgwidth}%
  \fi%
  \global\let\svgwidth\undefined%
  \global\let\svgscale\undefined%
  \makeatother%
  \begin{picture}(1,0.0987654)%
    \lineheight{1}%
    \setlength\tabcolsep{0pt}%
    \put(0,0){\includegraphics[width=\unitlength]{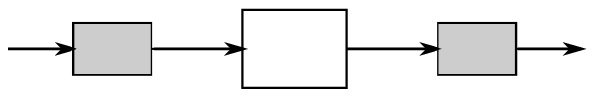}}%
    \put(0.28771915,0.04295436){\color[rgb]{0,0,0}\makebox(0,0)[lt]{\lineheight{1.25}\smash{\begin{tabular}[t]{l}$p(y|x)$\end{tabular}}}}%
    \put(0.12604036,0.04444135){\color[rgb]{0,0,0}\makebox(0,0)[lt]{\lineheight{1.25}\smash{\begin{tabular}[t]{l}$f_{\mathrm{sym}}$\end{tabular}}}}%
    \put(0.00461662,0.06309862){\color[rgb]{0,0,0}\makebox(0,0)[lt]{\lineheight{1.25}\smash{\begin{tabular}[t]{l}$H_1X_{(2)}^n$\end{tabular}}}}%
    \put(0.47036223,0.04376412){\color[rgb]{0,0,0}\makebox(0,0)[lt]{\lineheight{1.25}\smash{\begin{tabular}[t]{l}$\phi_{\mathrm{sym}}$\end{tabular}}}}%
    \put(0.56801592,0.06140678){\color[rgb]{0,0,0}\makebox(0,0)[lt]{\lineheight{1.25}\smash{\begin{tabular}[t]{l}$\widehat{H_1X_{(2)}^n}$\end{tabular}}}}%
  \end{picture}%
\endgroup%

	\caption{A block-Markov coding scheme for an asymmetric channel in the first transmission block using the code $(f_{\mathrm{sym}}, \phi_{\mathrm{sym}})$.}
	\label{fig:asym_BM_block1}
\end{figure}

Now, we are ready to describe the coding scheme. For each $j\in [b]$, let $V_{(j)}^n$ be an i.i.d. $\BERN(1/2)$ random dither shared between the encoder and the decoder, and let $\Gamma_{(j)}:[n] \to [n]$ be a permutation chosen uniformly at random and shared between the encoder and the decoder.  We describe the encoding procedure starting from the $b$th block. Given a message $M_{(b)} \in \{0,1\}^{n-k_2}$, the encoder computes the sequence $Z_{(b)}^n$ as follows.
\begin{equation} \label{eqn:asym_bm_zb}
	Z_{(b)}^n = \begin{bmatrix}
		{\bf 0} \\
		M_{(b)}
	\end{bmatrix},
\end{equation}
where ${\bf 0}$ consists of $k_2$ zeros. Then, for each $j = b,\ldots, 2$, the encoder computes the sequences $\wtilde{X}_{(j)}^n$, $X_{(j)}^n$ and $Z_{(j-1)}^n$ as follows.
\begin{equation} \label{eqn:asym_bm_zj}
	\begin{aligned}
	    \wtilde{X}_{(j)}^n &= \phi_2\big(Z_{(j)}^n \oplus V_{(j)}^n\big) \oplus Z_{(j)}^n \oplus V_{(j)}^n, \\
		X_{(j)}^n  &= \Gamma_{(j)}\left(\wtilde{X}_{(j)}^n\right), \\
		Z_{(j-1)}^n &= \begin{bmatrix}
			{\bf 0} \\
			H_1X_{(j)}^n \\
			M_{(j-1)}
		\end{bmatrix},
	\end{aligned}
\end{equation}
where $M_{(j-1)} \in \{0,1\}^{k_1-k_2}$. Note that the sequence $Z_{(j-1)}^n$ in the $(j-1)$-th transmission block depends on the syndrome vector corresponding to the channel input $X_{(j)}^n$ in the $j$th block. Furthermore, notice that the sequences $Z_{(j)}^n$ satisfy that $\wtilde{H}_2 Z_{(j)}^n = Z_{(j)}^n$ for each $j =2,\ldots, b$, where $\wtilde{H}_2$ is as defined in~(\ref{eqn:H_tilde}). Finally, in the first block, the transmitter uses the encoder $f_{\mathrm{sym}}$ to encode the syndrome vector $H_1X_{(2)}^n$, where $X_{(2)}^n$ is the transmitted sequence in the second block. Fig.~\ref{fig:asym_BM_block1} and Fig.~\ref{fig:asym_BM} show the block diagrams of the coding scheme in the first block and the remaining blocks, respectively. Note that a loss in the overall rate is incurred in the first block, where no message is transmitted. However, this loss decays as $1/b$, and, thus, by choosing $b$ large enough, the rate loss becomes negligible.

\begin{figure*}[t]
	\centering
	\hspace*{-1.5em}
	\def\svgscale{1.25}
	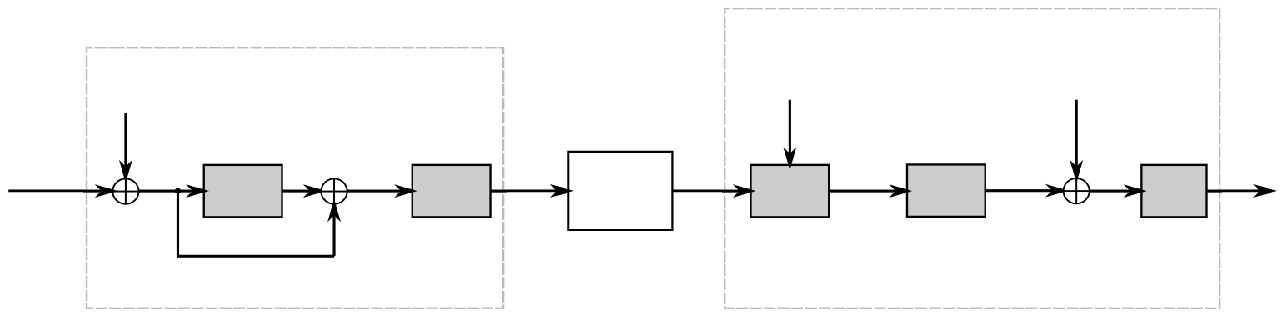
	\caption{A block-Markov coding scheme for an asymmetric channel in the $j$th transmission block, for $2\leq j \leq b-1$.}
	\label{fig:asym_BM}
\end{figure*}

The following lemma states that under this construction, the distribution of the channel input $X_{(j)}^n$ is $\delta$-away in total variation distance from the i.i.d. $\BERN(\alpha)$ distribution, for each $j=2,\ldots,b$.

\begin{lemma} \label{lemma:asym_bm}
	For the arrangement shown in Fig.~\ref{fig:asym_BM}, we have
	\begin{equation*}
		\frac{1}{2}\sum_{x^n} \left| \P\{X_{(j)}^n= x^n \} -  \alpha^{\weight(x^n)}(1-\alpha)^{n-\weight(x^n)}\right| \leq \delta,
	\end{equation*}
	for each $j=2,\ldots,b$.
\end{lemma}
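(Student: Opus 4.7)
The plan is to reduce the total variation distance between the distribution of $X_{(j)}^n$ and the i.i.d.\ $\mathrm{Bern}(\alpha)$ distribution to the hypothesis~(\ref{eqn:tv_asym_bm}) on the shaping Hamming distance spectrum $W_2$. The key structural observations that drive the proof are: (i) the dithering by $V_{(j)}^n$ washes out any dependence on $Z_{(j)}^n$, so that the input to $\phi_2$ is i.i.d.\ $\mathrm{Bern}(1/2)$; and (ii) the random permutation $\Gamma_{(j)}$ makes the distribution of $X_{(j)}^n$ depend only on the Hamming weight of $\wtilde{X}_{(j)}^n$.

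More concretely, I would first argue that since $V_{(j)}^n$ is i.i.d.\ $\mathrm{Bern}(1/2)$ and independent of $Z_{(j)}^n$, the sum $Z_{(j)}^n \oplus V_{(j)}^n$ is itself i.i.d.\ $\mathrm{Bern}(1/2)$. Hence $\wtilde{X}_{(j)}^n = \phi_2(Z_{(j)}^n \oplus V_{(j)}^n) \oplus (Z_{(j)}^n \oplus V_{(j)}^n)$ has the same distribution as $\phi_2(V^n)\oplus V^n$ for $V^n \IID \BERN(1/2)$. By the definition of the shaping Hamming distance spectrum, $\weight(\wtilde{X}_{(j)}^n)$ is distributed as $W_2$.

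Next, the random permutation $\Gamma_{(j)}$, which is independent of $\wtilde{X}_{(j)}^n$, symmetrizes the distribution: conditioned on $\weight(\wtilde{X}_{(j)}^n)=w$, the vector $X_{(j)}^n = \Gamma_{(j)}(\wtilde{X}_{(j)}^n)$ is uniform over the $\binom{n}{w}$ sequences of weight $w$. Therefore, for any $x^n$,
\[
\P\{X_{(j)}^n=x^n\} = \frac{\P\{W_2=\weight(x^n)\}}{\binom{n}{\weight(x^n)}}.
\]
Grouping the sum defining the total variation distance by Hamming weight gives
\[
\tfrac{1}{2}\hspace*{-0.1em}\sum_{x^n}\hspace*{-0.1em}\Big|\tfrac{\P\{W_2 = \weight(x^n)\}}{\binom{n}{\weight(x^n)}} - \alpha^{\weight(x^n)}(1-\alpha)^{n-\weight(x^n)}\Big|,
\]
which equals
\[
\tfrac{1}{2}\sum_{w=0}^n \bigl|\P\{W_2=w\} - \tbinom{n}{w}\alpha^{w}(1-\alpha)^{n-w}\bigr| \leq \delta
\]
by the assumed bound~(\ref{eqn:tv_asym_bm}).

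I do not anticipate a substantial obstacle; the argument is essentially a bookkeeping exercise once the two symmetrization facts (uniform dithering and random permutation) are recognized. The only subtlety to check carefully is that $\Gamma_{(j)}$ is independent of $\wtilde{X}_{(j)}^n$, and that the dither $V_{(j)}^n$ used in block $j$ is independent of the $Z_{(j)}^n$ constructed from earlier blocks; both hold by construction since fresh independent randomness is drawn in each block.
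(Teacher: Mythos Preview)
Your proposal is correct and follows essentially the same approach as the paper's proof: both argue that the dither makes the input to $\phi_2$ i.i.d.\ $\mathrm{Bern}(1/2)$ so that $\weight(\wtilde{X}_{(j)}^n)\overset{d}{=}W_2$, then use the uniform random permutation to conclude $\P\{X_{(j)}^n=x^n\}=\P\{W_2=\weight(x^n)\}/\binom{n}{\weight(x^n)}$, and finally group the total variation sum by Hamming weight to invoke~(\ref{eqn:tv_asym_bm}). Your explicit check of the independence of $V_{(j)}^n$ and $\Gamma_{(j)}$ from the block-$j$ quantities is exactly the subtlety one needs to verify, and it holds by the block-Markov construction.
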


\begin{proof}
    See Appendix~\ref{appendix:asym_bm}.
\end{proof}

At the decoder side, the coding scheme proceeds as follows. In the first transmission block, the decoder uses the decoder $\phi_{\mathrm{sym}}$ to get an estimate of $H_1X_{(2)}^n$. In the subsequent blocks, the decoder views $(X_{(j)}^n,Y_{(j)}^n)$ as realizations of the Slepian--Wolf problem $p(x,y)$, uses the Slepian--Wolf decoder $\phi_1$ to recover $X_{(j)}^n$ using the estimate of $H_1X_{(j)}^n$ from previous blocks, and then reverse-engineers the operations at the encoder to get an estimate of the transmitted message $M_{(j)}$ and the subsequent index $H_1X_{(j+1)}^n$. 

For the probability of error of the coding scheme, consider the performance of a genie-aided decoder which recovers an estimate of $M_{(j)}$ in the $j$th block based on the channel output $Y_{(j)}^n$ and the syndrome vector $H_1X_{(j)}^n$ (which is supplied correctly by a genie regardless of any decoding errors in previous blocks). Notice that such a decoder would have the same probability of error over the $b$ blocks as our decoder. To see this, observe that a decoding error can propagate from one block to another only through an error in the syndrome vector $H_1X_{(j)}^n$. Consider the first block where such an error happens. Both decoders would make an error in that block, which is precisely an error event over the $b$ blocks, irrespective of decisions made in subsequent blocks. A similar argument has been made in the analysis of the successive cancellation decoding of polar codes~\cite{Arikan2009}. Therefore, it suffices to analyze the error probability of the genie-aided decoder over the $b$ blocks of transmission. Using similar bounding techniques as before, the average probability of error of the genie-aided decoder in the $j$th block can be bounded as
\[
\P\{\widehat{M}_{(j)} \neq M_{(j)}\} \leq \delta + \eps,
\]
and, thus, the average probability of error of our coding scheme over the $b$ transmission blocks can be bounded using the union bound as
\begin{equation}
\P\left\{\widehat{M}_{(j)} \neq M_{(j)} \text{ for some } j \in [b]\right\} \leq (b-1)(\delta+\eps) + \eps_{\mathrm{sym}},
\end{equation}
where $\epsilon_{\mathrm{sym}}$ is the average probability of error of the code used in the first block. The rate of the coding scheme is given by
\[
R = \frac{(b-1)(k_1-k_2)+n-k_1}{nb}.
\]

\begin{remark}
A point-to-point channel code with a shaping Hamming distance spectrum that is arbitrarily close in total variation distance to a $\Binom(n,\alpha)$ distribution exists for sufficiently large $n$ if and only if the rate $k_2/n$ of the code is larger than $1-H(\alpha)=1-H(X)$. To show this, the same argument of Section~\ref{sec:properties} based on distributed channel synthesis can be used. Similarly, a Slepian--Wolf code for $p(x,y)$ with arbitrarily small error probability exists for sufficiently large $n$ if and only if its rate $\frac{n-k_1}{n}$ is larger than $H(X\cond Y)$. It follows that the rate $R$ of the block-Markov asymmetric channel coding scheme can be made arbitrarily close to $I(X;Y)$ for sufficiently large $n$ and $b$.
\end{remark}

\subsection{Lossy Source Coding for an Asymmetric Source} \label{sec:block_markov_lossy}
Consider the problem of lossy source coding of a $\BERN(\theta)$ source for some $\theta \in (0,1/2)$, as defined in Section~\ref{sec:lossy_problem}. In this part, we construct a coding scheme for this problem starting from a point-to-point channel code and a lossless source code that satisfy some easily-verifiable properties. The coding scheme will be defined upon its operation on $b$ blocks of source sequences. To this end, let $D$ be some desired distortion level, and $p(x \cond \hat{x}) \sim \mathrm{BSC}(D)$ be the desired conditional pmf of the source given the reconstruction. Under this conditional pmf, the distribution of the reconstruction is $p(\hat{x})\sim \BERN(\alpha)$, where
\[
\alpha = \frac{\theta-D}{1-2D}.
\]
The coding scheme is constructed using the following Lego bricks.

\begin{lego}[\textbf{P2P} $\to$ \textbf{Lossy}]
a $(k_1,n)$ linear point-to-point channel code $(H_1,\phi_1)$ for the channel
\[
\bar{p}(x,v \cond \hat{x}) = p_{X,\widehat{X}}(x,\hat{x}\oplus v),
\]
such that the code's shaping joint-type spectrum $T$ satisfies
\begin{equation} \label{eqn:tv_lossy_asym_bm}
d_{TV}\big(T,\, \mathrm{Multinomial}(n;p_{000},\ldots,p_{111})\big) \leq \delta
\end{equation}
for some $\delta > 0$, where $d_{TV}$ denotes the total variation distance and
\[
p_{i_1i_2i_3} = \frac{1}{2}p_{X,\widehat{X}}(i_1,i_2\oplus i_3), \qquad i_1,i_2,i_3 \in \{0,1\}.
\]
\end{lego}

\begin{lego}[\textbf{Lossless} $\to$ \textbf{Lossy}]
an $(n-k_2,n)$ lossless source code $(H_2,\phi_2)$ for a $\BSC(\alpha)$ source with average probability of error $\eps$ such that $k_2 < k_1$. 
\end{lego}

\begin{remark}
Condition~(\ref{eqn:tv_lossy_asym_bm}) says that the distribution of $T$ is $\delta$-away in total variation distance from a multinomial distribution. This condition can be easily verified in practice since the distribution of $T$ can be estimated efficiently for any commercial off-the-shelf code. Recall that a multinomial random variable $M\sim \mathrm{Multinomial}(n;p_1,\ldots,p_K)$ models the number of counts of each side of a $K$-sided die when rolled $n$ times, where $p_1, \ldots, p_K$ are the probabilities of observing each side. We have
\[
\P\{M = (m_1,\ldots, m_K)\} = \frac{n!}{m_1! \cdots m_K!}p_1^{m_1}\cdots p_K^{m_K}
\]
for each $(m_1,\ldots,m_K)$ such that $\sum_{i=1}^{K}m_i = n$.
\end{remark}

\begin{remark}
Let $P$ and $Q$ denote submatrices of $H_2$ such that $H_2 = \begin{bmatrix}P\\
	Q
\end{bmatrix}$, for some $(n-k_1) \times n$ matrix $P$ and $(k_1-k_2) \times n$ matrix $Q$. 
\end{remark}

\begin{figure*}[t]
	\centering
	\hspace*{11em}
	\def\svgscale{1.25}
	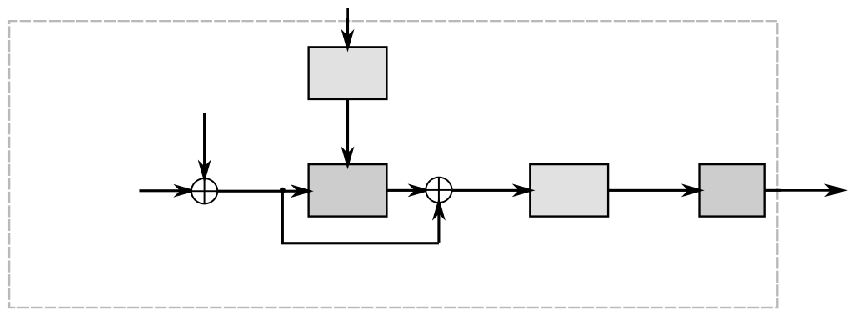
	\caption{Encoder of the block-Markov lossy source coding scheme in the $j$th coding block, for $2\leq j\leq b$.}
	\label{fig:lossy_encoder_BM}
\end{figure*}

Fig.~\ref{fig:lossy_encoder_BM} shows the encoding scheme of the lossy source code, where $V_{(j)}^{n}$ is a random dither shared with the decoder, and $\Gamma_{(j)}$ is a random permutation chosen uniformly at random and shared with the decoder. Similar to the coding scheme of Section~\ref{sec:lossy_asym}, the main idea is to use the available Lego bricks to generate a sequence $U_{(j)}^n$ with a distribution that is ``close'' to the i.i.d. $p(\hat{x})$ distribution and then convey the sequence ``losslessly'' to the decoder. 

The encoding procedure starts from the $b$th block. For each $j = b,\ldots, 2$, the encoder computes the sequences $Z_{(j)}^n$, $\wtilde{U}_{(j)}^n$, and $U_{(j)}^n$ as follows.
\begin{equation} \label{eqn:lossy_bm_zj}
	\begin{aligned}
	    Z_{(j)}^n &= \begin{bmatrix}\bf 0 \\
        PU_{(j+1)}^n\end{bmatrix} \oplus V_{(j)}^{n},\\
        \wtilde{U}_{(j)}^n &= \phi_1\left( \Gamma_{(j)}^{-1}(X_{(j)}^n), Z_{(j)}^n \right) \oplus Z_{(j)}^n,\\
        U_{(j)}^n &= \Gamma_{(j)}(\wtilde{U}_{(j)}^n),
	\end{aligned}
\end{equation}
where $PU_{(b+1)}^n = {\bf 0}^{n-k_1}$. Notice that, for each $j=2,\ldots, b$, we have
\begin{equation} \label{eqn:lossy_asym_BM_indexU}
\wtilde{H}_1\wtilde{U}_{(j)}^n = \wtilde{H}_1Z_{(j)}^n = \begin{bmatrix}\bf 0 \\
PU_{(j+1)}^n\end{bmatrix} \oplus \wtilde{H}_1V_{(j)}^{n}.
\end{equation}
The next lemma states that the distribution of $(X_{(j)}^n,U_{(j)}^n)$ in Fig.~\ref{fig:lossy_encoder_BM} is $\delta$-away in total variation distance from the i.i.d. $p(x,\hat{x})$ distribution.

\begin{lemma} \label{lemma:lossy_asym_BM}
	Let $q(x^n,u^n)$ be the true distribution of $(X_{(j)}^n,U_{(j)}^n)$. Then, we have
	\begin{equation*}
		\frac{1}{2}\sum_{x^n,u^n} \left| q(x^n,u^n) - \prod_{i=1}^n p_{X,\widehat{X}}(x_i,u_i) \right| \leq \delta,
	\end{equation*}
	for each $j=2,\ldots,b$.
\end{lemma}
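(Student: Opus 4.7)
My plan is to exchange sequences for joint types using the random permutation $\Gamma_{(j)}$, and then apply the shaping joint-type spectrum hypothesis on $(H_1,\phi_1)$ via a data-processing step. Set $\wtilde X_{\mathrm{in}}^n := \Gamma_{(j)}^{-1}(X_{(j)}^n)$, so that $\wtilde U_{(j)}^n = \phi_1(\wtilde X_{\mathrm{in}}^n, Z_{(j)}^n) \oplus Z_{(j)}^n$ and $(X_{(j)}^n, U_{(j)}^n) = (\Gamma_{(j)}(\wtilde X_{\mathrm{in}}^n),\, \Gamma_{(j)}(\wtilde U_{(j)}^n))$.

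First I would verify that the shaping joint-type spectrum hypothesis is applicable in this setting. Because $X_{(j)}^n$ is i.i.d.\ $\Bern(\theta) = p_X$ and $\Gamma_{(j)}$ is a uniform random permutation drawn independently of the source, $\wtilde X_{\mathrm{in}}^n$ is again i.i.d.\ $p_X$ and is independent of $\Gamma_{(j)}$. Since $V_{(j)}^n$ is a fresh i.i.d.\ $\Bern(1/2)$ dither, $Z_{(j)}^n = [{\bf 0};\, PU_{(j+1)}^n] \oplus V_{(j)}^n$ is i.i.d.\ $\Bern(1/2)$ and independent of $(\wtilde X_{\mathrm{in}}^n, \Gamma_{(j)})$. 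Since the symmetrized channel satisfies $\bar p(x,v) = \tfrac{1}{2} p_X(x)$ under uniform input, the pair $(\wtilde X_{\mathrm{in}}^n, Z_{(j)}^n)$ precisely matches the i.i.d.\ output distribution used in the definition of the shaping joint-type spectrum of $(H_1, \phi_1)$. Hence, by hypothesis~(\ref{eqn:tv_lossy_asym_bm}), the joint type $T'$ of the triple $(\wtilde X_{\mathrm{in}}^n,\, Z_{(j)}^n,\, \phi_1(\wtilde X_{\mathrm{in}}^n, Z_{(j)}^n))$ satisfies $d_{TV}(T', M') \le \delta$, where $M' \sim \mathrm{Multinomial}(n;\,(p_{i_1 i_2 i_3}))$.

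Next, since $\Gamma_{(j)}$ is uniform over permutations and independent of $(\wtilde X_{\mathrm{in}}^n, \wtilde U_{(j)}^n)$, the law of $(X_{(j)}^n, U_{(j)}^n)$ is invariant under coordinate permutations, and hence $q(x^n, u^n)$ depends on $(x^n, u^n)$ only through its joint type $\tau$. A direct calculation gives $q(x^n, u^n) = \P\{\wtilde T = \tau\}/N(\tau)$, where $\wtilde T$ denotes the joint type of $(\wtilde X_{\mathrm{in}}^n, \wtilde U_{(j)}^n)$ and $N(\tau)$ is the number of pairs of type $\tau$. The product $\prod_{i=1}^n p_{X, \widehat X}(x_i, u_i)$ likewise depends only on $\tau$ and sums over the $N(\tau)$ pairs of that type to the multinomial mass $\P\{M = \tau\}$ with $M \sim \mathrm{Multinomial}(n;\, p_{X, \widehat X})$. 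Grouping terms by joint type therefore collapses the left-hand side of the lemma to $d_{TV}(\wtilde T, M)$.

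Finally, $\wtilde T$ is a deterministic function of $T'$ obtained by merging coordinates: each bin $\wtilde T_{a, u}$ is the sum of $T'$-bins over triples whose projections onto the decoder input and onto (dither $\oplus$ decoder output) equal $(a, u)$, respectively. Applying the same merging to the parameters $p_{i_1 i_2 i_3} = \tfrac{1}{2} p_{X, \widehat X}(i_1, i_2 \oplus i_3)$ collapses them to $p_{X, \widehat X}(a, u)$, so the pushforward of $M'$ under the merging is exactly $M$. The data-processing inequality for total variation then gives $d_{TV}(\wtilde T, M) \le d_{TV}(T', M') \le \delta$, which is the claim. The main obstacle is purely bookkeeping: consistently labeling the three coordinates of $T'$ (decoder input, dither, decoder output) so that the merging is visibly the pushforward carrying $M'$ to $M$; once this correspondence is nailed down, everything else is a short chain of equalities and the standard data-processing inequality.
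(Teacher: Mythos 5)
Your proof is correct and follows essentially the same route as the paper's: apply the shaping joint-type hypothesis to the triple $(\wtilde X_{(j)}^n, Z_{(j)}^n, \phi_1(\cdot))$, push the multinomial bound through the XOR-merging of coordinates, and use the uniform random permutation to collapse the sequence-level total variation to a type-level total variation. The only difference is cosmetic: you make explicit (via the data-processing inequality for the merging map) the step the paper dismisses as ``standard bounding techniques,'' which is a welcome clarification rather than a deviation.
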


\begin{proof}
    See Appendix~\ref{appendix:lossy_asym_BM}.
\end{proof}

\begin{figure*}[t]
	\centering
	\hspace*{8em}
	\def\svgscale{1.25}
	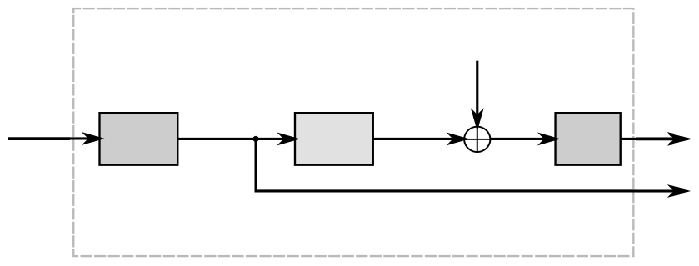
	\caption{Decoder of the block-Markov lossy source code in the $j$th coding block, for $2\leq j\leq b-1$.}
	\label{fig:lossy_decoder_BM}
\end{figure*}

Fig.~\ref{fig:lossy_decoder_BM} shows the decoding scheme of the lossy source code. The decoder uses the lossless source decoder and the estimate $\widehat{PU_{(j)}^n}$ of $PU_{(j)}^n$ which the decoder had found in previous coding blocks to compute the reconstruction sequence $\widehat{X}_{(j)}^n$. The decoder then reverse-engineers the operations at the encoder side and computes an estimate $\widehat{PU_{(j+1)}^n}$ for the subsequent coding block using the observation of~(\ref{eqn:lossy_asym_BM_indexU}), as shown in Fig.~\ref{fig:lossy_decoder_BM}. Using similar arguments as in Section~\ref{sec:lossy_asym}, it can be shown that the average distortion of the coding scheme over the $b$ blocks can be bounded as
\begin{equation}
\frac{1}{b-1}\sum_{j=2}^{b} \frac{1}{n}\E[d_H(X_{(j)}^n,\widehat{X}_{(j)}^n)] \leq D+\delta+b\eps,
\end{equation}
where the factor $b$ in front of $\eps$ pertains to the use of the union bound on the error probability over $b$ blocks. In this analysis, we assume that in the first coding block, $PU_{(2)}^n$ is passed to the decoder error-free. The rate of the coding scheme is
\[
R = \frac{(b-1)(k_1-k_2)+n-k_1}{(b-1)n}.
\]

\begin{remark}
A point-to-point channel code with a shaping joint-type spectrum that is arbitrarily close in total variation distance to a multinomial distribution exists for sufficiently large $n$ if and only if the rate $k_1/n$ of the code is larger than $1-H(\widehat{X}\cond X)$. A similar argument as in Section~\ref{sec:properties} based on distributed channel synthesis can be used to show this. Similarly, a lossless source code for a $\BERN(\alpha)$ source with arbitrarily small error probability exists for sufficiently large $n$ if and only if its rate $\frac{n-k_2}{n}$ is larger than $H(\alpha)=H(\widehat{X})$. It follows that the rate $R$ of the block-Markov lossy source coding scheme can be made arbitrarily close to $I(X;\widehat{X})$ for sufficiently large $n$ and $b$.
\end{remark}

\subsection{Other Coding Problems}
The two coding schemes presented in Sections~\ref{sec:block_markov_asym} and \ref{sec:block_markov_lossy} for the asymmetric channel coding problem and the lossy source coding problem were constructed starting from Lego bricks that satisfy some easily-verifiable properties. In a similar way, all other coding schemes presented in this paper (including the ones in the appendices) can be modified to start from such Lego bricks. For conciseness, we do not explicitly show those constructions; however, the two constructions of Sections~\ref{sec:block_markov_asym} and~\ref{sec:block_markov_lossy} already give the gist of the idea. To summarize, the following differences in comparison with the previous constructions are noted.
\begin{itemize}
    \item The assumptions made on the Lego bricks of the modified constructions involve the distribution of either the \emph{shaping Hamming distance spectrum} or the \emph{shaping joint-type spectrum} of the constituent codes, both of which can be estimated efficiently in practice for any commercial off-the-shelf code.
    \item The modified constructions do not require any constituent codebooks to be nested, unlike the previous constructions which often assumed that the codebook corresponding to one Lego brick is a subcode of another.
    \item A uniformly chosen random permutation that is shared between the encoder and the decoder is used in the modified constructions. The random permutation allows to shape the binary sequences according to the desired distribution, as shown in Lemma~\ref{lemma:asym_bm} and Lemma~\ref{lemma:lossy_asym_BM}.
    \item The modified coding schemes have a block-Markov structure, i.e., the inputs to one encoding/decoding block can depend on other coding blocks. Such a structure allows to share information about the transmitted sequences across coding blocks.
    \item The block-Markov constructions require the storage of sequences from the previous or subsequent transmission block. That is, the encoding and decoding schemes have a window size of 2.
\end{itemize}

Since the block-Markov constructions do not require any constituent codebooks to be nested, one major consequence of the constructions is the following: one can assemble \emph{any} set of point-to-point channel codes with good error correction capabilities (i.e., small probability of error) with \emph{any} set of point-to-point channel codes with good shaping properties (i.e., a shaping Hamming distance spectrum that is ``close'' to a Binomial, or a shaping joint-type spectrum that is ``close'' to a Multinomial, where ``closeness'' is measured via the total variation distance metric) to construct coding schemes for the network settings considered in this paper. That is, the block-Markov constructions imply that one can combine any set of codes that have a vanishing error probability asymptotically (e.g., first column in Table~\ref{table:codes}) with any set of codes that have a vanishing shaping distance (e.g., second column in Table~\ref{table:codes}) to construct coding schemes that can achieve the best known inner bounds for the network coding problems considered in this paper.


\section{Concluding Remarks} \label{sec:conclusion}
The Lego-brick framework developed in this paper is promising, and provides guidelines to implement practical coding schemes for network communication. The proposed code constructions can achieve larger rates in theory and demonstrate better performance in practice compared to the naive approach of coding over the point-to-point links of a network. Moreover, the computational complexity of the proposed constructions is governed by the encoding/decoding complexities of the constituent Lego bricks, which can be taken ``off-the-shelf''; hence, the constructions are friendly to hardware implementation using existing coding blocks. Finally, we point out that all the coding schemes that we presented in this paper can be extended to general (i.e., non-binary) discrete channels and sources when using non-binary linear point-to-point channel codes. For example, if a multi-user channel has an input alphabet of size $q$, similar coding schemes to the ones presented in this paper can be constructed using linear point-to-point channel codes over $\mathbb{F}_q$.

For future work, it is of particular interest to implement such a generalization to non-binary channels (including Gaussian channels with continuous inputs) starting from \emph{binary} point-to-point channel codes. In principle, this would allow one to establish a general framework for implementing coded modulation techniques for multi-user settings. Another direction that is inspired by this work is to investigate the ``shaping distance'' performance of point-to-point channel codes (compared to the more-prominently studied ``error probability'' performance). To the best of the knowledge of the authors, apart from polar codes, there are no other known \emph{low-complexity} point-to-point channel codes with a provably vanishing shaping distance over BMS channels. Ultimately, the goal would be to construct new families of codes with that property. As we saw in this paper, this can be helpful for constructing rate-optimal coding schemes in multi-user settings.

\begin{appendices}
    \section{Proof of Lemma~\ref{lemma:codify}} \label{appendix:codify}
\begin{enumerate}[label=(\roman*)]
     \item Let $x^n \in \{0,1\}^n$, and let $\wtilde{H}$ be as defined in~(\ref{eqn:H_tilde}). Then $s^n \triangleq \wtilde{H}x^n \, \in \cS$, and
    \begin{equation*}
        \begin{aligned}
            H(x^n \oplus s^n) &= Hx^n \oplus \begin{bmatrix}A & B\end{bmatrix} \begin{bmatrix}{\bf 0} & {\bf 0} \\
                 B^{-1}A & I
                 \end{bmatrix}x^n \\
            &= Hx^n \oplus Hx^n = 0,
        \end{aligned}
    \end{equation*}
     where $H = \begin{bmatrix}A & B\end{bmatrix}$ for some nonsingular matrix $B$. Therefore, $x^n \oplus s^n \in \cC$. Now, assume there exists $s_1^n, s_2^n \in \cS$, $s_1^n \neq s_2^n$, such that $x^n \oplus s_1^n \in \cC$ and $x^n \oplus s_2^n \in \cC$. Thus, $Hs_1^n = Hs_2^n$, which implies that $Bs_{1,k+1}^n = Bs_{2,k+1}^n$, and, therefore, $s_1^n=s_2^n$. A contradiction.
     
     \item Let $C^n = X^n\oplus \wtilde{H}X^n$. For any $c^n \in \cC$, we have
     \begin{equation*}
        \begin{aligned}
            &\P\{C^n = c^n\} = \sum_{s^n \in \cS}\P\{X^n\oplus \widetilde{H}X^n = c^n, \widetilde{H}X^n = s^n\} \\
            &= \hspace*{-0.1em}\sum_{s^n \in \cS}\P\{X^n \hspace*{-0.1em}= \hspace*{-0.1em}c^n\hspace*{-0.1em}\oplus\hspace*{-0.1em} s^n\} \P\{\widetilde{H}X^n \hspace*{-0.1em}=\hspace*{-0.1em} s^n \cond X^n \hspace*{-0.1em}=\hspace*{-0.1em} c^n\hspace*{-0.1em}\oplus \hspace*{-0.1em}s^n \} \\
            &\overset{(a)}{=} \sum_{s^n \in \cS} \frac{1}{2^n} \\
            &= \frac{1}{2^k},
        \end{aligned}
     \end{equation*}
     where $(a)$ follows since for any $c^n \in \cC$ and $s^n \in \cS$, $\widetilde{H}(c^n \oplus s^n) = 0^n \oplus \begin{bmatrix}
            {\bf 0} & {\bf 0} \\
            B^{-1}A & I
        \end{bmatrix}\begin{bmatrix}
            0^k \\
            s_{k+1}^n
        \end{bmatrix} = s^n$.
 
    \item Let $V^n = C^n \oplus S^n$. For any $v^n \in \{0,1\}^n$, we have
    \begin{align*}
        \P\{V^n=v^n\} &= \sum_{s^n\in \cS} \P\{V^n=v^n, S^n=s^n\}\\
        &= \sum_{s^n\in \cS} \P\{C^n=v^n \oplus s^n, S^n=s^n\}\\
        &= \frac{1}{2^{n-k}} \sum_{s^n\in \cS} \P\{C^n=v^n \oplus s^n\}\\
        &\overset{(a)}{=} \frac{1}{2^n},
    \end{align*}
    where $(a)$ follows by part (i).
\end{enumerate}

\section{Proof of Lemma~\ref{lemma:p2p_slepian}} \label{appendix:p2p_slepian}
Let $R^n = X^n \oplus V^n$. Thus, $C^n = R^n \oplus \widetilde{H}R^n$. Since $R^n$ is i.i.d. $\mathrm{Bern}(\frac{1}{2})$, it follows by Lemma~\ref{lemma:codify} that $C^n$ is uniformly distributed over $\cC$. Now, for any $y^n \in \cY^n$ and $u^n \in \{0,1\}^n$, we have
\begin{equation*}
    \begin{aligned}
        &\P\{Y^n\hspace*{-0.1em} =\hspace*{-0.1em} y^n, U^n\hspace*{-0.1em} =\hspace*{-0.1em} u^n \cond C^n\hspace*{-0.1em} =\hspace*{-0.1em} c^n\} \\
        &=\P\{Y^n \hspace*{-0.1em}=\hspace*{-0.1em} y^n, V^n \hspace*{-0.1em}\oplus \hspace*{-0.1em}\widetilde{H}V^n\hspace*{-0.1em}\oplus\hspace*{-0.1em} \widetilde{H}X^n = u^n \cond \\
        &\qquad \qquad \qquad \,\, X^n\hspace*{-0.1em} \oplus\hspace*{-0.1em} V^n\hspace*{-0.1em} \oplus \hspace*{-0.1em}\widetilde{H}X^n\hspace*{-0.1em}\oplus\hspace*{-0.1em} \widetilde{H}V^n \hspace*{-0.1em}=\hspace*{-0.1em} c^n\} \\
        &= \P\{Y^n\hspace*{-0.1em} = \hspace*{-0.1em}y^n, X^n\hspace*{-0.1em} = \hspace*{-0.1em}c^n \hspace*{-0.1em}\oplus\hspace*{-0.1em} u^n \cond X^n \hspace*{-0.1em}\oplus \hspace*{-0.1em}V^n \hspace*{-0.1em}\oplus\hspace*{-0.1em} \widetilde{H}(X^n\oplus V^n) \hspace*{-0.1em}= \hspace*{-0.1em}c^n \}\\
        &\overset{(b)}{=} \P\{Y^n \hspace*{-0.1em}= \hspace*{-0.1em}y^n, X^n \hspace*{-0.1em}= \hspace*{-0.1em}c^n \hspace*{-0.1em}\oplus\hspace*{-0.1em} u^n\} \\
        &=\prod_{i=1}^n p_{X,Y}(c_i \oplus u_i, y_i) \\
        &=\prod_{i=1}^n \bar{p}(y_i,u_i \cond c_i),
    \end{aligned}
\end{equation*}
where $(b)$ follows since $R^n = X^n \oplus V^n$ is independent of $X^n$, which implies that $C^n$ and $(X^n,Y^n)$ are independent. This completes the proof.

\vspace{0.5em}
\section{Proof of Lemma~\ref{lemma:shaping_distance}} \label{appendix:shaping_distance}
\begin{figure*}[b]
    \hrulefill
    \begin{equation} \label{eqn:lemma_shaping_distance}
        \begin{aligned}
            \delta &= \frac{1}{2}\sum_{\td{y}^n,\td{v}^n,\td{x}^n}\left|\P\{\td{Y}^n=\td{y}^n,\td{V}^n=\td{v}^n,\td{X}^n=\td{x}^n\} - \frac{1}{2^n}\prod_{i=1}^{n}\bar{p}(\td{y}_i,\td{v}_i \cond \td{x}_i)\right| \\
            &= \frac{1}{2}\sum_{\td{y}^n,\td{v}^n,\td{x}^n}\left|\P\{\td{Y}^n=\td{y}^n,\td{V}^n=\td{v}^n,\td{U}^n=\td{x}^n \oplus \td{v}^n\} - \frac{1}{2^n}\prod_{i=1}^{n}p_{X,Y}(\td{x}_i \oplus \td{v}_i,\td{y}_i)\right| \\
            &= \frac{1}{2}\sum_{\td{y}^n,\td{v}^n,\td{u}^n}\left|\P\{\td{Y}^n=\td{y}^n,\td{V}^n=\td{v}^n,\td{U}^n=\td{u}^n\} - \frac{1}{2^n}\prod_{i=1}^{n}p_{X,Y}(\td{u}_i,\td{x}_i)\right| \\
            &\geq \frac{1}{2}\sum_{\td{y}^n,\td{u}^n}\left| \sum_{\td{v}^n}\left(\P\{\td{Y}^n=\td{y}^n,\td{V}^n=\td{v}^n,\td{U}^n=\td{u}^n\} - \frac{1}{2^n}\prod_{i=1}^{n}p_{X,Y}(\td{u}_i,\td{y}_i)\right)\right| \\
            &= \frac{1}{2}\sum_{\td{y}^n,\td{u}^n}\left| \P\{\td{U}^n=\td{u}^n,\td{Y}^n=\td{y}^n\} - \prod_{i=1}^{n}p_{X,Y}(\td{u}_i,\td{y}_i)\right|
        \end{aligned}
    \end{equation}
\end{figure*}
Denote $\td{X}^n = \phi(\td{Y}^n,\td{V}^n)$. Thus, $\td{U}^n = \td{X}^n \oplus \td{V}^n$. The shaping distance $\delta$ of the code $(H,\phi)$ with respect to the channel $\bar{p}$ can be bounded as in equation~(\ref{eqn:lemma_shaping_distance}) at the bottom of this page.

\vspace{0.5em}
\section{Proof of Lemma~\ref{lemma:lossy}} \label{appendix:lossy}
Let $q(x^n,u^n,\hat{x}^n)$ be the joint distribution of $(X^n,U^n,\widehat{X}^n)$, and let $p(x^n,u^n,\hat{x}^n)$ be defined as
\[
p(x^n,u^n,\hat{x}^n) = \left(\prod_{i=1}^n p_{X,\widehat{X}}(x_i,u_i)\right) \mathbbm{1}_{\{\hat{x}^n=u^n\}}.
\]
In the block diagram of Fig.~\ref{fig:lossy_asym}, since $U^n = \phi_1(X^n,V^n) \oplus V^n$, we have that $H_1U^n = H_1V^n$, and, thus, $\begin{bmatrix}
    \bf 0 \\ 
    H_2U^n 
\end{bmatrix} = \begin{bmatrix}\bf 0 \\ H_1U^n \\ QU^n\end{bmatrix} = \begin{bmatrix}\bf 0 \\ H_1V^n \\ M^{k_1-k_2}\end{bmatrix}$,
i.e., the input to the lossless source decoder is the index corresponding to $U^n$. It follows that
\begin{align*}
    &\frac{1}{2}\sum_{x^n,\hat{x}^n}\left| q(x^n,\hat{x}^n) - p(x^n,\hat{x}^n) \right| \\
    &\leq \frac{1}{2}\sum_{x^n,u^n,\hat{x}^n}\left| q(x^n,u^n,\hat{x}^n) - p(x^n,u^n,\hat{x}^n) \right| \\
    &\leq \frac{1}{2}\sum_{x^n,u^n,\hat{x}^n} \left| q(x^n,u^n) - p(x^n,u^n)\right|q(\hat{x}^n \cond x^n,u^n) \\
    &\quad + \frac{1}{2}\sum_{x^n,u^n,\hat{x}^n} p(x^n,u^n) \left| q(\hat{x}^n \cond x^n,u^n) - \mathbbm{1}_{\{u^n=\hat{x}^n\}} \right|\\
    &= \frac{1}{2}\sum_{x^n,u^n} \left| q(x^n,u^n) - p(x^n,u^n)\right| \\
    &\quad+  \frac{1}{2}\sum_{\substack{x^n,u^n,\hat{x}^n: \\ u^n\neq \hat{x}^n}}p(x^n,u^n) q(\hat{x}^n \cond x^n,u^n) \\
    &\quad + \frac{1}{2}\sum_{x^n,u^n}p(x^n,u^n)\Big(1-q_{\widehat{X}^n \cond X^n,U^n}(u^n \cond x^n,u^n)\Big)\\
    &\overset{(a)}{\leq} \delta + \frac{1}{2}\eps+\frac{1}{2}\eps\\
    &=\delta+\eps,
\end{align*}
where $(a)$ follows by Lemma~\ref{lemma:shaping_distance} and the definition of the probability of error of a lossless source code.

\vspace{0.5em}
\section{Proof of Lemma~\ref{lemma:asym_bm}} \label{appendix:asym_bm}
Consider the block diagram shown in Fig.~\ref{fig:asym_BM}. Since $Z_{(j)}^n \oplus V_{(j)}^n$ is i.i.d $\BERN(1/2)$ for each $j=2,\ldots,b$, then the sequence $\wtilde{X}_{(j)}^n$ satisfies that
\[
\weight(\wtilde{X}_{(j)}^n) = d_H\big(\phi_2(Z_{(j)}^n \oplus V_{(j)}^n), Z_{(j)}^n \oplus V_{(j)}^n \big) \: \overset{d}{=} W_2,
\]
where $\overset{d}{=}$ denotes equality in distribution and $W_2$ is the shaping Hamming distance spectrum of the code $(H_2,\phi_2)$. It follows that, for any sequence $x^n$ with $\mathrm{wt}(x^n) = w$, we have that
\begin{equation*}
    \begin{aligned}
        &\P\{X_{(j)}^n= x^n \} \\
        &= \sum_{\widetilde{x}^n: \mathrm{wt}(\widetilde{x}^n) = w}\P\{\Gamma_{(j)}(\widetilde{x}^n) = x^n\}\P\{\widetilde{X}_{(j)}^n=\widetilde{x}^n\} \\
        &= \sum_{\widetilde{x}^n: \mathrm{wt}(\widetilde{x}^n) = w} \frac{w!(n-w)!}{n!} \P\{\widetilde{X}_{(j)}^n=\widetilde{x}^n\}\\
        &= \frac{1}{\binom{n}{w}}\P\{\weight\big(\widetilde{X}_{(j)}^n\big) = w\} = \frac{1}{\binom{n}{w}}\P\{W_2 = w\}.
    \end{aligned}
\end{equation*}
Hence,
\begin{equation*}
    \begin{aligned}
        &\frac{1}{2}\sum_{x^n} \left| \P\{X_{(j)}^n = x^n \} - \alpha^{\weight(x^n)}(1-\alpha)^{n-\weight(x^n)} \right| \\
        &= \frac{1}{2}\sum_{x^n} \left| \frac{\P\{W_2 = \mathrm{wt}(x^n)\}}{\binom{n}{\mathrm{wt}(x^n)}} - \alpha^{\mathrm{wt}(x^n)}\big(1-\alpha\big)^{n-\mathrm{wt}(x^n)}  \right| \\
        &= \frac{1}{2}\sum_{w=0}^{n} \left|\P\{W_2=w\} - \binom{n}{w}\alpha^w(1-\alpha)^{n-w} \right| \\
        &\leq \delta,
    \end{aligned}
\end{equation*}
where the last step holds since $W_2$ satisfies condition~(\ref{eqn:tv_asym_bm}).

\vspace{0.5em}
\section{Proof of Lemma~\ref{lemma:lossy_asym_BM}} \label{appendix:lossy_asym_BM}
\begin{figure*}[b]
    \hrulefill
    \begin{equation*}
        \begin{aligned}
            &\frac{1}{2}\sum_{x^n,u^n} \left| q(x^n,u^n) - \prod_{i=1}^n p_{X,\widehat{X}}(x_i,u_i)\right| = \frac{1}{2}\hspace*{-0.25em}\sum_{\substack{t = (t_{00}, t_{01}, t_{10},t_{11}): \\ \sum t_{ij} = n}} \sum_{\substack{x^n,u^n: \\ \mathrm{type}(x^n,u^n) = t}} \left| q(x^n,u^n) - \prod_{i=1}^n p_{X,\widehat{X}}(x_i,u_i)\right| \\
            &= \frac{1}{2}\hspace*{-0.25em}\sum_{\substack{t = (t_{00}, t_{01}, t_{10},t_{11}): \\ \sum t_{ij} = n}} \sum_{\substack{x^n,u^n: \\ \mathrm{type}(x^n,u^n) = t}} \left| \frac{t_{00}!t_{01}!t_{10}!t_{11}!}{n!} \P\left\{\mathrm{type}(\widetilde{X}_{(t)}^n,\widetilde{U}_{(t)}^n) = t\right\} - \prod_{i,j=0}^{1} p_{ij}^{t_{ij}}\right| \\
            &= \frac{1}{2}\hspace*{-0.25em}\sum_{\substack{t = (t_{00}, t_{01}, t_{10},t_{11}): \\ \sum t_{ij} = n}} \left| \P\left\{\mathrm{type}(\widetilde{X}_{(t)}^n,\widetilde{U}_{(t)}^n) = t\right\} - \frac{n!}{t_{00}!t_{01}!t_{10}!t_{11}!}\prod_{i,j=0}^{1} p_{ij}^{t_{ij}}\right| \\
            &\leq \delta
        \end{aligned}
    \end{equation*}
\end{figure*}
Consider the encoding scheme shown in Fig.~\ref{fig:lossy_encoder_BM}. Denote $C_{(j)}^{n} = \phi_1(\wtilde{X}_{(j)}^n, Z_{(j)}^{n})$. Since $\wtilde{X}_{(j)}^n$ is i.i.d. according to $p(x)$ and $Z_{(j)}^{n}$ is i.i.d. $\BERN(1/2)$, and $(\wtilde{X}_{(j)}^n,Z_{(j)}^{n})$ are independent, we have that
\[
\mathrm{type}\left(C_{(j)}^{n}, (\wtilde{X}_{(j)}^n, Z_{(j)}^{n})\right)
\]
satisfies condition~(\ref{eqn:tv_lossy_asym_bm}). Since $\wtilde{U}_{(j)}^n = C_{(j)}^{n} \oplus Z_{(j)}^{n}$, it follows using standard bounding techniques that
\begin{equation} \label{eqn:tv_lossy_asym_BM_proof}
\begin{aligned}
   d_{TV}\big(&\mathrm{type}(\wtilde{X}_{(j)}^n,\wtilde{U}_{(j)}^n), \\
   &\mathrm{Multinomial}(n;p_{00},p_{01},p_{10}, p_{11})\big) \leq \delta, 
\end{aligned}
\end{equation}
where $p_{ij} = p_{X,\widehat{X}}(i,j)$, for $i,j \in \{0,1\}$. 

Now, let $(x^n,u^n)$ be a pair of sequences such that $\mathrm{type}(x^n,u^n) = t$, where $t \triangleq (t_{00}, t_{01},t_{10}, t_{11})$ such that $\sum_{i,j}t_{ij} = n$. Then, we have that
\begin{equation*}
    \begin{aligned}
        &q(x^n, u^n) \\
        &= \hspace*{-1em}\sum_{\substack{(\widetilde{x}^n, \widetilde{u}^n): \\ \mathrm{type}(\widetilde{x}^n, \widetilde{u}^n) = t}}\hspace*{-1.5em}\P\{\Gamma(\widetilde{x}^n) \hspace*{-0.1em}= \hspace*{-0.1em}x^n, \Gamma(\widetilde{u}^n) \hspace*{-0.1em}= \hspace*{-0.1em}u^n\}\P\{\widetilde{X}_{(j)}^n \hspace*{-0.1em}= \hspace*{-0.1em}\widetilde{x}^n, \widetilde{U}_{(j)}^n\hspace*{-0.1em}=\hspace*{-0.1em}\widetilde{u}^n\} \\
        &= \sum_{\substack{(\widetilde{x}^n, \widetilde{u}^n): \\ \mathrm{type}(\widetilde{x}^n, \widetilde{u}^n) = t}} \frac{t_{00}!t_{01}!t_{10}!t_{11}!}{n!}\P\{\widetilde{X}_{(j)}^n = \widetilde{x}^n, \widetilde{U}_{(j)}^n=\widetilde{u}^n\} \\
        &= \frac{t_{00}!t_{01}!t_{10}!t_{11}!}{n!} \P\left\{\mathrm{type}(\widetilde{X}_{(t)}^n,\widetilde{U}_{(t)}^n) = t\right\}.
    \end{aligned}
\end{equation*}
Lemma~\ref{lemma:lossy_asym_BM} then follows from the sequence shown at the bottom of this page, where the last inequality holds from~(\ref{eqn:tv_lossy_asym_BM_proof}).

    \section{Coding for Multiple Access Channels} \label{sec:mac}
    In this appendix, we describe a coding scheme for the multiple access channel that uses two asymmetric channel codes, along with a successive cancellation decoder. The coding scheme can achieve a corner point in the optimal rate region of the multiple access channel, provided that the constituent asymmetric channel codes are rate-optimal. 

\subsection{Problem Statement}
We consider a two-user multiple access channel $p(y \cond x_1,x_2)$ with binary input alphabets $\cX_1 = \cX_2 = \{0,1\}$ and an arbitrary output alphabet $\cY$~\cite[Chapter 4]{NIT}. An $(R_1,R_2,n)$ code for this channel consists of encoders $g_j: [2^{nR_j}] \to \cX_j^n$ that map each message $M_j$ to an channel input sequence $X_j^n$ for $j=1,2$, and a decoder $\psi: \cY^n \to [2^{nR_1}] \times [2^{nR_2}]$ that assigns message estimates $(\hat{M}_1,\hat{M}_2)$ to each received sequence $Y^n$. The average probability of error of the code is $\eps = \P\left\{\{\widehat{M}_1 \neq M_1\} \cup \{\widehat{M}_2 \neq M_2\}\right\}$. It is well-known that a rate pair $(R_1,R_2)$ is achievable if
\begin{equation} \label{eqn:mac_region}
	\begin{aligned}
		R_1 &< I(X_1;Y \cond X_2),\\
		R_2 &< I(X_2;Y \cond X_1),\\
		R_1+R_2 &< I(X_1,X_2;Y),
	\end{aligned}
\end{equation}
for some pmf $p(x_1)p(x_2)$~\cite{Shannon1961,Ahlswede1971,Slepian1973b}.

\subsection{Coding Scheme} \label{sec:mac_coding_scheme}
Let $p(x_1)$ and $p(x_2)$ be some desired input pmf's. The coding scheme for the multiple access channel uses the following two Lego bricks.

\begin{lego}[\textbf{Asym} $\to$ \textbf{MAC}]
	an $(R_1,n)$ asymmetric channel code $(g_1,\psi_1)$ for the channel 
	\[
	p(y\cond x_1) = \sum_{x_2}p(x_2) p(y\cond  x_1,x_2),
	\]
	which targets an input distribution $p(x_1)$, such that, for $M_1 \sim \Unif([2^{nR_1}])$, the channel input $X_1^n = g_1(M_1)$ satisfies
	\begin{equation} \label{eqn:p2p_mac_1}
		\frac{1}{2}\sum_{x_1^n}\left| \P\{X_1^n=x_1^n\} - \prod_{i=1}^n p(x_{1i}) \right| \leq \delta_1,
	\end{equation}
	for some $\delta_1 > 0$. Let $\eps_1$ be the average probability of error of the asymmetric channel code when the channel input distribution is i.i.d. according to $p(x_1)$.
\end{lego}

\begin{lego}[\textbf{Asym} $\to$ \textbf{MAC}]
	an $(R_2,n)$ asymmetric channel code $(g_2,\psi_2)$ for the channel 
	\[
	p(y,x_1\cond x_2) = p(x_1) p(y\cond  x_1,x_2),
	\]
	which targets an input distribution $p(x_2)$, such that, for $M_2 \sim \Unif([2^{nR_2}])$, the channel input $X_2^n = g_2(M_2)$ satisfies
	\begin{equation} \label{eqn:p2p_mac_2}
		\frac{1}{2}\sum_{x_2^n}\left| \P\{X_2^n=x_2^n\} - \prod_{i=1}^n p(x_{2i}) \right| \leq \delta_2,
	\end{equation}
	for some $\delta_2 > 0$. Let $\eps_2$ be the average probability of error of the asymmetric channel code when the channel input distribution is i.i.d. according to $p(x_2)$.
\end{lego}

\begin{remark}
    Each of the asymmetric channel codes can be constructed starting from a point-to-point channel code and a Slepian--Wolf code, as described in Section~\ref{sec:asym}.
\end{remark}

Fig.~\ref{fig:mac} shows the block diagram of the coding scheme. Inspired by Slepian and Wolf's random coding scheme for the multiple access channel~\cite{Slepian1973b}, the two asymmetric channel encoders are used to target the desired joint pmf's $p(x_1)$ and $p(x_2)$ at the encoder side, and the decoder recovers estimates of the messages sequentially, as per successive cancellation decoding. It can be shown that the average probability of error of the coding scheme can be bounded as
\[
\P\left\{\{\widehat{M}_1 \neq M_1\} \cup \{\widehat{M}_2 \neq M_2\}\right\} \leq \eps_1 + \eps_2 + \delta_1 + \delta_2.
\]

\begin{figure*}[t]
	\centering
	\hspace*{5em}
	\def\svgscale{1.15}
	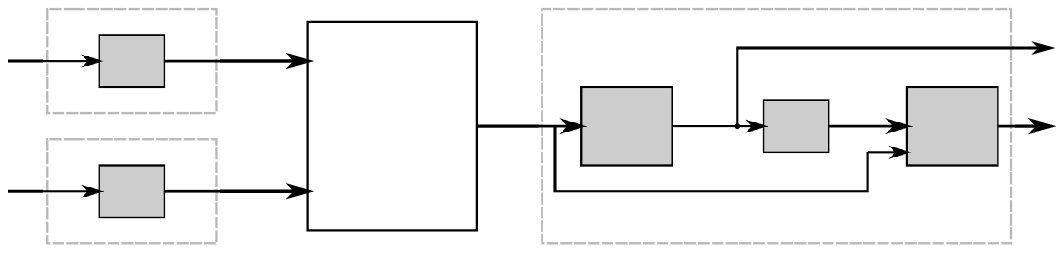
	\caption{Coding scheme for multiple access channel using two asymmetric channel codes.}
	\label{fig:mac}
\end{figure*}

\begin{remark}
	If the rate of the first asymmetric channel code is $R_1 = I(X_1;Y) - \gamma_1$ for some $\gamma_1 > 0$, and the rate of the second asymmetric channel code is   $R_2 = I(X_2;X_1,Y) - \gamma_2 = I(X_2;Y \cond X_1) - \gamma_2$ for some $\gamma_2 > 0$, then the coding scheme attains the rate pair
	\[
	(R_1,R_2) = \left(I(X_1;Y) - \gamma_1,\: I(X_2;Y \cond X_1) - \gamma_2   \right).
	\]
	Note that the rate pair $\left(I(X_1;Y),\: I(X_2;Y \cond X_1)\right)$ is a corner point of the optimal rate region~(\ref{eqn:mac_region}) of the multiple access channel. By reversing the decoding order and appropriately modifying the channels that the two asymmetric channel codes are designed for, another corner point of the rate region can be achieved.
\end{remark}

\begin{remark}
	A similar coding scheme can be constructed for the $K$-user multiple access channel using $K$ asymmetric channel codes, where the decoder successively decodes the user messages.
\end{remark}

    \section{Marton Coding for Broadcast Channels} \label{sec:marton}
    In this part, we construct a Marton coding scheme for a discrete memoryless broadcast channel starting from an asymmetric channel code and a Gelfand--Pinsker code. When the constituent codes are rate-optimal, the coding scheme can achieve a corner point in the achievable rate region of Marton coding.

\subsection{Problem Statement}
We consider a broadcast channel $p(y_1^K \cond x_1^K)$ with $K$ users and $K$ transmit antennas with a binary input alphabet $\cX=\{0,1\}$ and output alphabets $\cY_1, \ldots, \cY_K$~\cite[Chapter 8]{NIT}. An $(R_1,\ldots,R_K,n)$ code $(g,\psi_1, \ldots,\psi_K)$ for this channel consists of an encoder $g:[2^{nR_1}] \times \cdots \times [2^{nR_K}] \to \cX^{Kn}$ that maps each message tuple $(M_1,\ldots,M_K)$ to $K$ input sequences $(X_1^n,\ldots,X_K^n)$, and decoders $\psi_j: \cY_j^n \to [2^{nR_j}]$, for $j\in [K]$, that assign message estimates $\hat{M}_j$ to each received sequence $Y_j^n$. The average probability of error of the code is $\eps = \P\left\{ \widehat{M}_j \neq M_j \text{ for some } j \in [K] \right\}$. For an input distribution $p(x_1,\ldots,x_K)$, the achievable rate region of Marton's coding scheme is the set of rate tuples $(R_1,\ldots,R_K)$ such that
\[
R(\cS) < \sum_{j \in \cS}I(X_j;Y_j) - I^{*}(X_\cS),
\]
for all $\cS \subseteq [K]$, where $R(\cS) = \sum_{j \in \cS} R_j$, $X_\cS = (X_j: j\in\cS)$, and
\begin{equation} \label{eqn:mutual_info_star}
    I^{*}(X_\cS) = \sum_{j\in\cS}I(X_j;X_{[j-1]\cap \cS}).
\end{equation}
For example, for $K=2$ and an input distribution $p(x_1,x_2)$, the achievable rate region is the set of rate pairs $(R_1,R_2)$ such that
\begin{equation} \label{eqn:marton_rate}
	\begin{aligned}
		R_1 &< I(X_1;Y_1),\\
		R_2 &< I(X_2;Y_2),\\
		R_1 + R_2 &< I(X_1;Y_1) + I(X_2;Y_2) - I(X_1;X_2).
	\end{aligned}
\end{equation}

\subsection{Coding Scheme} \label{sec:marton_coding_scheme}
We consider a two-user binary-input broadcast channel $p(y_1,y_2\cond x_1,x_2)$ with two transmit antennas. The proposed Marton coding scheme targets an input distribution $p(x_1,x_2)$ and uses the following asymmetric channel code and Gelfand--Pinsker code as its constituent Lego bricks.

\begin{lego}[\textbf{Asym} $\to$ \textbf{Marton}]
an $(R_1,n)$ asymmetric channel code $(g_1,\psi_1)$ for the channel 
\[
    p(y_1\cond x_1) = \sum_{y_2,x_2}p(x_2\cond x_1) p(y_1,y_2\cond  x_1,x_2),
\]
which targets an input distribution $p(x_1)$, such that, for $M_1 \sim \Unif([2^{nR_1}])$, the channel input $X_1^n = g_1(M_1)$ satisfies
\begin{equation} \label{eqn:tv_marton_asym}
	\frac{1}{2}\sum_{x_1^n}\left| \P\{X_1^n=x_1^n\} - \prod_{i=1}^n p(x_{1i}) \right| \leq \delta_1,
\end{equation}
for some $\delta_1 > 0$. Let $\eps_1$ be the average probability of error of the asymmetric channel code when the channel input distribution is i.i.d. according to $p(x_1)$.
\end{lego}

\begin{lego}[\textbf{GP} $\to$ \textbf{Marton}]
an $(R_2,n)$ code $(g_2, \psi_2)$ for the Gelfand--Pinsker problem defined by
\[
    p(x_1)p(y_2\cond  x_2,x_1) = p(x_1)\sum_{y_1}p(y_1,y_2\cond  x_1,x_2),
\]
which targets a conditional distribution $p(x_2\cond x_1)$, such that, when $M_2 \sim \Unif([2^{nR_2}])$ and $\wtilde{X}_1^n$ is i.i.d. $p(x_1)$ sequence, the channel input $\wtilde{X}_2^n = g_2(M_2, \wtilde{X}_1^n)$ satisfies
\begin{equation} \label{eqn:tv_marton_gelfand}
	\frac{1}{2}\sum_{x_1^n,x_2^n}\left| \P\{\wtilde{X}_1^n=x_1^n,\wtilde{X}_2^n=x_2^n\} - \prod_{i=1}^n p(x_{1i},x_{2i}) \right| \leq \delta_2,
\end{equation}
for some $\delta_2 > 0$. Let $\eps_2$ be the average probability of error of the Gelfand--Pinsker code when the conditional distribution of the channel input given the channel state is i.i.d. according to $p(x_2 \cond x_1)$.
\end{lego}


\begin{figure*}[t]
	\centering
	\hspace*{7em}
	\def\svgscale{1.15}
	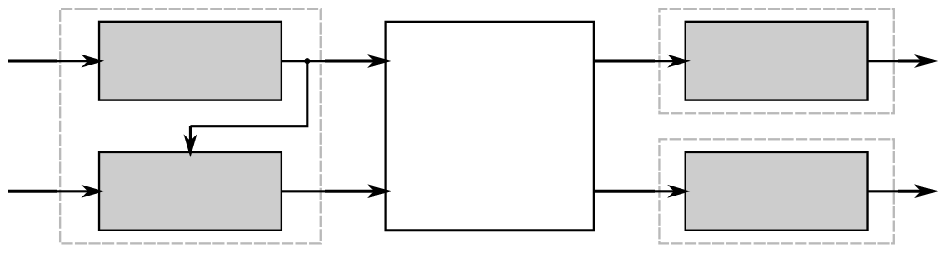
	\caption{Marton coding for the two-user broadcast channel using an asymmetric channel code and a Gelfand--Pinsker code.}
	\label{fig:marton}
\end{figure*}

Fig.~\ref{fig:marton} shows the block diagram of the Marton coding scheme for the broadcast channel. The main idea is to view the sequence that is encoded by the asymmetric channel code as a state sequence to the Gelfand--Pinsker encoder. From conditions~(\ref{eqn:tv_marton_asym}) and (\ref{eqn:tv_marton_gelfand}), it follows from standard arguments that
\[
\frac{1}{2}\sum_{x_1^n,x_2^n}\left| \P\{X_1^n=x_1^n,X_2^n=x_2^n\} - \prod_{i=1}^n p(x_{1i},x_{2i}) \right| \leq \delta_1+\delta_2,
\]
which says that the channel input distribution is $(\delta_1+\delta_2)$-away in total variation distance from the desired target distribution. The average probability of error of the coding scheme can be bounded as
\begin{equation*}
    \P\left\{ \{\widehat{M}_1 \neq M_1\} \cup \{\widehat{M}_2 \neq M_2\} \right\} \, \leq \, \delta_1 + \delta_2 + \epsilon_1 + \epsilon_2,
\end{equation*}
which follows by a similar analysis as in Section~\ref{sec:gelfand}. 


\begin{remark}
If the rate of the asymmetric channel code is $R_1 = I(X_1;Y_1) - \gamma_1$ for some $\gamma_1 > 0$, and the rate of the Gelfand--Pinsker code is $R_2 = I(X_2;Y_2) - I(X_1;X_2) -\gamma_2$ for some $\gamma_2 > 0$, it follows that the Marton coding scheme achieves the rate pair
\[
(R_1,R_2) = \left(I(X_1;Y_1)- \gamma_1, \, I(X_2;Y_2) - I(X_1;X_2)- \gamma_2\right).
\]
Note that the rate pair $\left(I(X_1;Y_1), \, I(X_2;Y_2) - I(X_1;X_2)\right)$ is a corner point of Marton's rate region for the broadcast channel. If the encoding order is reversed (i.e., $X_2^n$ is encoded using an asymmetric channel code and used as a state sequence to encode $M_1$), another corner point of the rate region can be achieved.
\end{remark}

\begin{remark}
A similar coding scheme can be constructed for the $K$-user broadcast channel with $K$ transmit antennas using one asymmetric channel code and $K-1$ Gelfand--Pinsker codes, where the encoder successively encodes the user messages.
\end{remark}

\subsection{Simulation Results} \label{sec:marton_simulation}
Now, we verify the practicality of the Marton coding scheme through simulations. We use polar codes with successive cancellation decoding as the constituent point-to-point channel codes. We consider a two-user Gaussian broadcast channel with two transmit antennas and a single receive antenna for each user. The channel can be modeled by the following input-output relations:
\begin{equation} \label{eqn:gaussian_channel}
    \begin{bmatrix}
        Y_1 \\
        Y_2
    \end{bmatrix} = H_{\mathrm{ch}}W\begin{bmatrix}
        X_1 \\
        X_2
    \end{bmatrix} + \begin{bmatrix}
        Z_1 \\
        Z_2
    \end{bmatrix},
\end{equation}
where $H_{\mathrm{ch}} = \begin{bmatrix}
	1 & g \\
	g & 1
\end{bmatrix}$ is the channel gain matrix with $g=0.9$, $W$ is a $2 \times 2$ precoding matrix used by the transmitter (if needed), $X_1 \in \{\pm 1\}$ and $X_2 \in \{\pm 1\}$ are BPSK-modulated signals corresponding to the channel input codewords (i.e., bit 0 is mapped to +1 and bit 1 is mapped to $-1$), $\begin{bmatrix} 
Z_1\\
Z_2
\end{bmatrix} \sim \mathcal{N}\big({\bf 0}, I\big)$ is a vector of independent Gaussian noise components ($I$ is the $2\times 2$ identity matrix), and $Y_1$ and $Y_2$ are the channel outputs at the users' side. The transmitter wishes to send messages to the users at an overall \emph{sum-rate} $R_{\mathrm{sum}}$, while being subject to a sum-power constraint $P$ such that
\begin{equation} \label{eqn:marton_power_constraint}
\E\left[\left\|W\begin{bmatrix}
    X_1 \\
    X_2
\end{bmatrix}\right\|^2\right] \leq P.
\end{equation}

For a given input distribution $p(x_1,x_2)$ and a precoding matrix $W$ satisfying the power constraint (\ref{eqn:marton_power_constraint}), the maximum achievable sum-rate of the Marton coding scheme can be expressed as
\begin{equation}\label{eqn:marton_sum_rate}
    C_{\mathrm{sum}}\big(p(x_1,x_2), W\big) \triangleq I(X_1;Y_1) + I(X_2;Y_2) - I(X_1;X_2).
\end{equation}
The following coding strategies for the broadcast channel will be compared.
\begin{itemize}
    \item ``Marton coding with optimal precoding''\footnote{Marton coding over the Gaussian broadcast channel is often seen as an instance of ``dirty paper coding''~\cite{Costa1983}. However, note that the channel input is binary in our case.}: corresponds to the proposed coding scheme that targets the channel input distribution $p^{*}_{\mathrm{marton,opt}}(x_1,x_2)$ and precoding matrix $W^{*}_{\mathrm{marton,\, opt}}$ that maximize $C_{\mathrm{sum}}$, while satisfying the power constraint (\ref{eqn:marton_power_constraint}).
    \item ``Marton coding without precoding'': corresponds to the proposed coding scheme that targets the channel input distribution $p^{*}_{\mathrm{marton}}(x_1,x_2)$ that maximizes $C_{\mathrm{sum}}$, while setting the precoding matrix to $W = \sqrt{\frac{P}{2}}I$.
    \item ``Symmetric coding with optimal precoding'': corresponds to the strategy that sets the precoding matrix to $W^{*}_{\mathrm{sym,opt}}$ that maximizes $C_{\mathrm{sum}}$, while setting the input distribution to the i.i.d. $\Bern(1/2)$ distribution (i.e., the two messages are encoded independently using two separate point-to-point channel codes, and the correlation between the two transmitted signals is attributed only to the linear precoder).
    \item ``Symmetric coding without precoding'': corresponds to the strategy that encodes the two messages independently using two separate point-to-point channel codes, while setting the precoding matrix to $W = \sqrt{\frac{P}{2}}I$.
    \item ``Minimum mean-square error (MMSE) precoding'': corresponds to the strategy that encodes the two messages independently using two separate point-to-point channel codes, while applying the MMSE precoding matrix $W_{\mathrm{MMSE}} = \lambda(H_{\mathrm{ch}}^TH_{\mathrm{ch}} + \frac{K}{P}I)^{-1}H_{\mathrm{ch}}^T$ at the transmitter side, where $K=2$ is the number of users and $\lambda$ is a constant to satisfy the power constraint (\ref{eqn:marton_power_constraint}). The MMSE precoding matrix is sometimes referred to as the ``transmit Wiener filter''~\cite{Joham2005}.
    \item ``Zero-forcing precoding'': corresponds to the strategy that encodes the two messages independently using two separate point-to-point channel codes, while applying the zero-forcing precoding matrix $W_{\mathrm{ZF}} = \lambda H_{\mathrm{ch}}^{-1}$ at the transmitter side, where $\lambda$ is a constant to satisfy the power constraint (\ref{eqn:marton_power_constraint}). Such a strategy suppresses the interference in the channel. 
    \item ``Time division'': corresponds to the strategy that serves only a single user of the channel. That is, a single message is encoded using a point-to-point channel code, and the same codeword is transmitted across both antennas. The power allocation between the two transmit antennas is done so as to optimize the received signal-to-noise ratio. This is done by setting the precoding matrix to $W_{\mathrm{time-division}} = \begin{bmatrix}
		\sqrt{\lambda_1} & 0\\
		0 & \sqrt{\lambda_2}
	\end{bmatrix}$, where $(\lambda_1,\lambda_2)$ is the solution of
	\[
	\begin{cases}
		\text{maximize}\qquad & (\sqrt{\lambda_1} + g\sqrt{\lambda_2})^2 \\
		\text{subject to}\qquad & \lambda_1+\lambda_2=P,\\
		& \lambda_1\geq 0, \lambda_2 \geq 0.
	\end{cases}
	\]
\end{itemize}

Before simulating the proposed Marton coding scheme, we first plot the maximum achievable sum-rate $C_{\mathrm{sum}}$ given in equation~(\ref{eqn:marton_sum_rate}) for the different coding strategies. For the coding strategies that optimize over the channel input distribution and/or the precoding matrix, we use particle swarm optimization~\cite{Kennedy1995,Luke2013} as an efficient heuristic method to perform the optimization. Fig.~\ref{fig:marton_achievable_rates} shows the plot of $C_{\mathrm{sum}}$ as a function of the sum-power constraint $P$ for the different coding strategies over the two-user Gaussian broadcast channel model given by~(\ref{eqn:gaussian_channel}). Clearly, Marton coding with optimal precoding can achieve strictly larger sum-rates asymptotically compared to the other coding strategies. When no precoding is employed, Marton coding can still achieve significantly larger sum-rates compared to common linear precoding strategies used in practice such as MMSE precoding and zero-forcing precoding. This demonstrates the significance of stochastically shaping the channel input signals over the broadcast channel and motivates our Lego-brick design of a realizable Marton coding scheme using commercial off-the-shelf codes. Such an observation has been previously noted in the literature (e.g.,~\cite{Lee2006,Goldsmith2005}).

\begin{figure}[t]
	\centering
	\includegraphics[width=\columnwidth]{./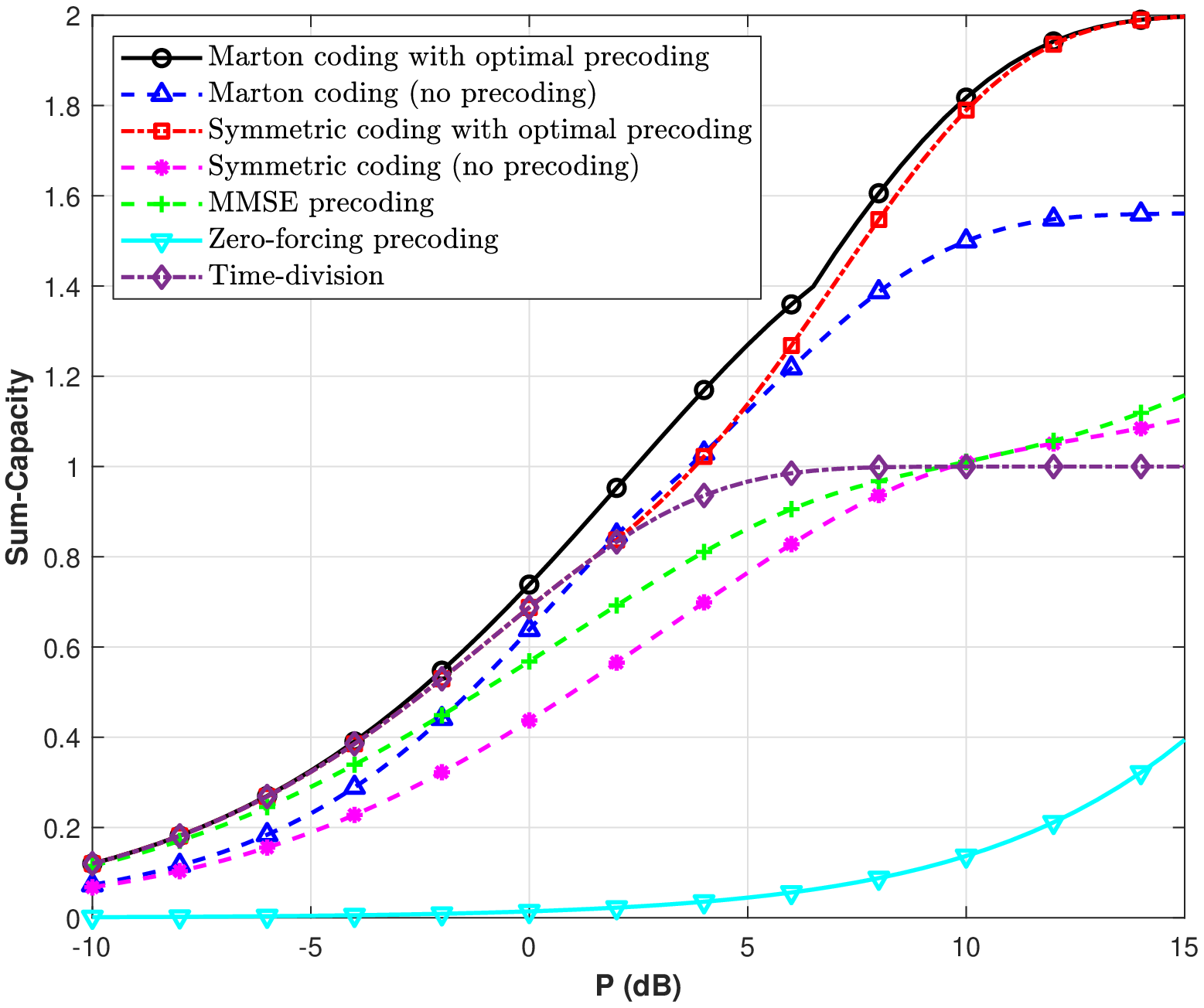}
	\caption{The maximum achievable sum-rate $C_{\mathrm{sum}}$ for the different coding strategies over the broadcast channel.}
	\label{fig:marton_achievable_rates}
\end{figure}

Next, we simulate the different coding strategies over the Gaussian broadcast channel model using polar codes with successive cancellation decoding as the constituent point-to-point channel codes. The coding strategies are compared for the same block length $n=1024$ and sum-rate $R_{\mathrm{sum}} = 1$. Recall that each of the asymmetric channel code and the Gelfand--Pinkser code can be implemented using a pair of point-to-point channel codes, as described in Section~\ref{sec:gelfand}. It follows that four polar codes are needed to implement the Marton coding scheme. Let $(R_{11},R_{12},R_{21},R_{22})$ be the rates of the polar codes, where $(R_{11},R_{12})$ are the rates of the two polar codes needed to construct the asymmetric channel code and $(R_{21},R_{22})$ are the rates of the two polar codes needed to construct the Gelfand--Pinsker code. The rates $(R_{11},R_{12},R_{21},R_{22})$ are chosen ``close'' to their theoretical limits (see Remarks~\ref{remark:gelfand_rates} and \ref{remark:asym_rates}). More precisely, for the Marton coding scheme, we set
\begin{equation} \label{eqn:marton_rates}
    \begin{aligned}
    R_{11} &= 1-H(X_1\cond Y_1)-\gamma,\\
    R_{12} &= 1-H(X_1), \\
    R_{21} &= 1-H(X_2\cond Y_2)-\gamma,\\
    R_{22} &= 1-H(X_2 \cond X_1), 
    \end{aligned}
\end{equation}
where $\gamma > 0$ is a ``back-off'' parameter from the theoretical limit, which allows for a reasonable error probability\footnote{Note that the back-off parameter is only used for $R_{11}$ and $R_{21}$ since these rates correspond to the polar codes that are used for error correction. In contrast, our simulations show that the polar codes used for shaping perform pretty well, even at rates that are extremely close to the theoretical limit; hence, no back-off is needed for $R_{12}$ and $R_{22}$.}. Note that the sum-rate attained by the coding scheme is equal to 
\[
R_{11}-R_{12}+R_{21}-R_{22} = I(X_1;Y_1)+I(X_2;Y_2)-I(X_1;X_2)-2\gamma.
\]
Hence, in order to guarantee that this sum-rate is equal to $R_{\mathrm{sum}}$, we consider the power level $P^{*}$ at which $C_{\mathrm{sum}} = R_{\mathrm{sum}} + 2\gamma$.\footnote{For each coding strategy, the power level $P^{*}$ can be estimated from the plot of Fig.~\ref{fig:marton_achievable_rates}.} For example, for the Marton coding strategy with optimal precoding, the target distribution $p^{*}_{\mathrm{marton,opt}}(x_1,x_2)$ and the precoding matrix $W^{*}_{\mathrm{marton,opt}}$ are chosen to be the maximizers of $C_{\mathrm{sum}}$, when the sum-power constraint is $P^{*}$. A similar approach is taken to find the target distribution $p^{*}_{\mathrm{marton}}(x_1,x_2)$ and the precoding matrix $W^{*}_{\mathrm{sym,opt}}$. Given the target distribution and the precoding matrix, the rates $(R_{11},R_{12},R_{21},R_{22})$ are set according to~(\ref{eqn:marton_rates}). This allows us to compare the different coding strategies for the same sum-rate $R_{\mathrm{sum}}$. In our simulations over the broadcast channel, we used $\gamma = 1/16$. For more details about the simulation setup, our code is available on GitHub~\cite{Github}.

The simulation results are shown in Fig.~\ref{fig:marton_simulation}. Clearly, the Marton coding scheme with optimal precoding can achieve improved block error rate performance compared to the other coding strategies. Even when no precoding is used, the Marton coding scheme can achieve better performance compared to common coding strategies often employed in practice, such as time division, MMSE precoding and zero-forcing precoding.

\begin{figure}[t]
	\centering
	\hspace*{-1em}
	\includegraphics[width=\columnwidth]{./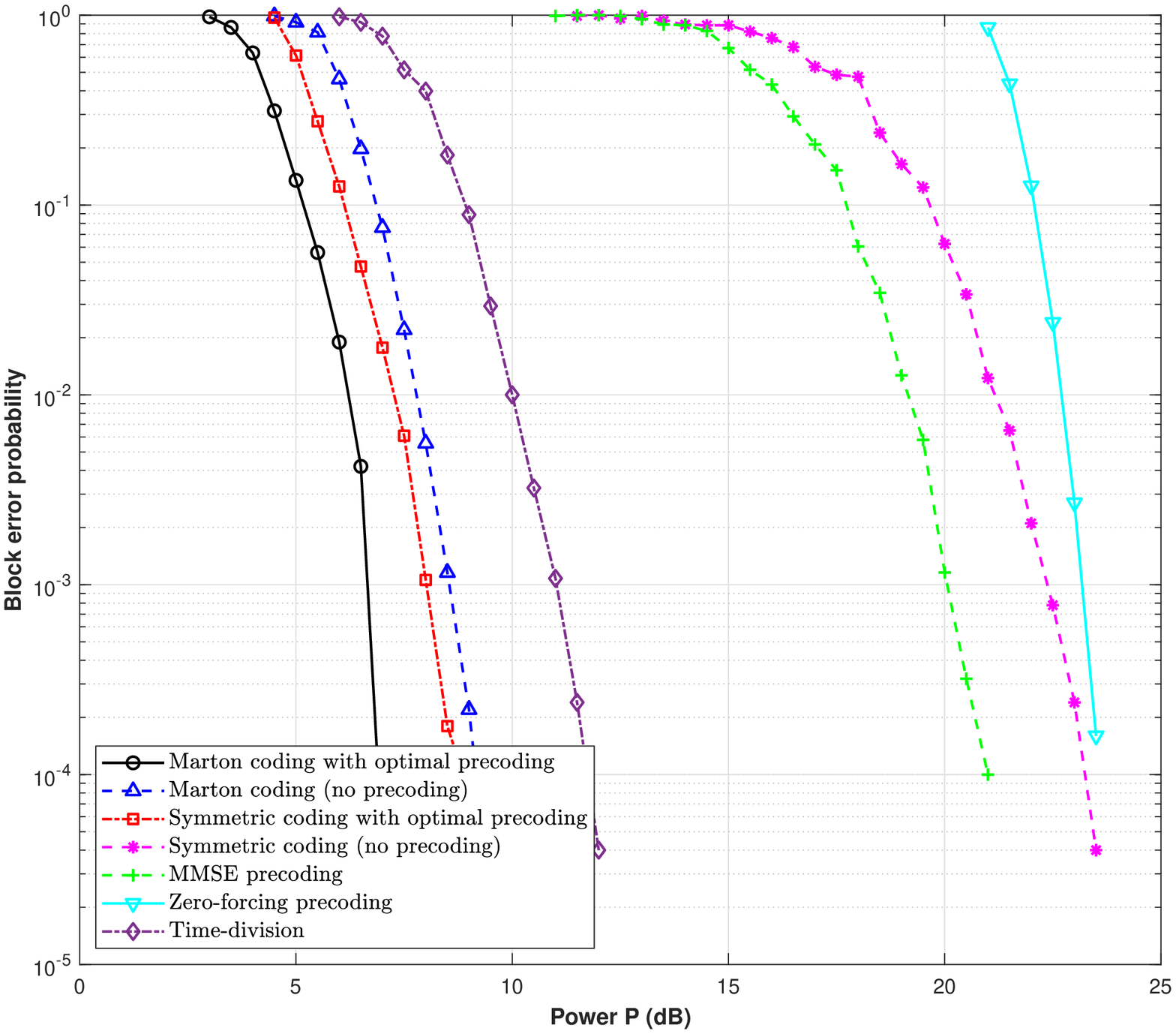}
	\caption{Simulation results for the different coding strategies over a two-user Gaussian broadcast channel for the same block length $n=1024$ and sum-rate $R_\mathrm{sum} = 1$.}
	\label{fig:marton_simulation}
\end{figure}

\begin{remark}
Note that all the coding strategies that are considered in this part can be implemented as instances of the proposed Marton coding scheme (Fig.~\ref{fig:marton}) for particular choices of the rates $(R_{11},R_{12},R_{21},R_{22})$, the target channel input distribution $p(x_1,x_2)$ and the precoding matrix $W$.
\end{remark}

    \section{Berger--Tung Coding for Distributed Lossy Compression}  \label{sec:berger}
    In this section, we describe a Berger--Tung coding scheme for distributed lossy compression starting from a lossy source code and a Wyner--Ziv code. When the constituent codes are rate-optimal, the coding scheme can achieve a corner point in the Berger--Tung rate region.

\subsection{Problem Statement}
We consider the problem of distributed lossy compression of two binary sources that generate two jointly i.i.d. binary source sequences $X_1^n$ and $X_2^n$ such that $(X_{1i},X_{2i}) \sim p(x_1,x_2)$~\cite[Chapter 12]{NIT}. The goal is to efficiently represent the two sequences using two separate encoders to a decoder that wishes to reconstruct the sequences with some distortion levels $D_1$ and $D_2$. More specifically, an $(R_1,R_2,n)$ code for the distributed lossy compression problem consists of encoders $g_j: \{0,1\}^n \to [2^{nR_j}]$ that assign an index $M_j$ to the source sequence $X_j^n$ for $j=1,2$, and a decoder $\psi: [2^{nR_1}] \times [2^{nR_2}] \to \{0,1\}^{2n}$ that assigns a pair of estimates $(\hat{X}_1^n,\hat{X}_2^n)$ to each index pair $(M_1,M_2)$. A rate-distortion quadruple $(R_1,R_2,D_1,D_2)$ is said to be achievable if there exists a sequence of $(R_1,R_2,n)$ codes such that
\[
\limsup_{n\to \infty}\tfrac{1}{n}\E[d_H(X_j^n, \widehat{X}_j^n)] \leq D_j, \qquad j=1,2,
\]
where $d_H(.,.)$ denotes the Hamming distance metric. Berger~\cite{Berger1978} and Tung~\cite{Tung1978} showed that a rate pair $(R_1,R_2)$ is achievable for the distributed lossy compression problem with distortion pair $(D_1,D_2)$ if
\begin{equation} \label{eqn:berger_region}
	\begin{aligned}
		R_1 &> I(X_1;\widehat{X}_1 \cond \widehat{X}_2),\\
		R_2 &> I(X_2;\widehat{X}_2 \cond \widehat{X}_1),\\
		R_1+R_2 &> I(X_1,X_2;\widehat{X}_1,\widehat{X}_2)
	\end{aligned}
\end{equation}
for some conditional pmf $p(\hat{x}_1 \cond x_1)p(\hat{x}_2 \cond x_2)$ such that $\E[d_H(X_j,\widehat{X}_j)] \leq D_j$, $j=1,2$.

\subsection{Coding Scheme}   \label{sec:berger_coding_scheme}
Consider two conditional pmf's $p(\hat{x}_1 \cond x_1)$ and $p(\hat{x}_2 \cond x_2)$ such that $\E[d_H(X,\widehat{X}_j)] \leq D_j$, for $j=1,2$. This completely specifies the source-reconstruction joint distribution as
\[
p(x_1,x_2,\hat{x}_1,\hat{x}_2) = p(x_1,x_2)p(\hat{x}_1 \cond x_1)p(\hat{x}_2 \cond x_2).
\]
The coding scheme for distributed lossy compression uses a lossy source code and a Wyner--Ziv code that target the desired conditional pmf's, as described in the following Lego-brick definitions.

\begin{lego}[\textbf{Lossy} $\to$ \textbf{BT}] \label{lego:lossy_berger}
	an $(R_1,n)$ lossy source code $(g_1,\psi_1)$ for a $p(x_1)$-source that targets a conditional distribution $p(\hat{x}_1 \cond x_1)$ s.t. for $X_1^n \IID p(x_1)$ and $\widehat{X}_1^n = \psi_1\big(g_1(X_1^n)\big)$, we have
	\begin{equation}\label{eqn:lossy_berger}
		\frac{1}{2}\sum_{x_1^n,\hat{x}_1^n} \left| \P\{X_1^n=x_1^n,\widehat{X}_1^n = \hat{x}_1^n\} - \prod_{i=1}^n  p(x_{1i})p(\hat{x}_{1i} \cond x_{1i})\right| \leq \delta_1
	\end{equation}
	for some $\delta_1 > 0$.
\end{lego}

\begin{lego}[\textbf{WZ} $\to$ \textbf{BT}] \label{lego:wz_berger}
	an $(R_2,n)$ Wyner--Ziv code $(g_2,\psi_2)$ for a $p(x_2,\hat{x}_1)$-source that targets a conditional distribution $p(\hat{x}_2 \cond x_2)$ s.t. for $(X_2^n,\widehat{X}_1^n) \IID p(x_2,\hat{x}_1)$ and $\widehat{X}_2^n = \psi_2\big(g_2(X_2^n), \widehat{X}_1^n\big)$, we have
	\begin{equation} \label{eqn:wz_berger}
		\frac{1}{2}\sum_{x_2^n,\hat{x}_2^n} \left| \P\{X_2^n=x_2^n,\widehat{X}_2^n = \hat{x}_2^n\} - \prod_{i=1}^n  p(x_{2i})p(\hat{x}_{2i} \cond x_{2i})\right| \leq \delta_2
	\end{equation}
	for some $\delta_2 > 0$.
\end{lego}

\begin{remark}
	Note that a lossy source code satisfying condition~(\ref{eqn:lossy_berger}) can be constructed starting from a point-to-point channel code and a lossless source code, as described in Section~\ref{sec:lossy_asym}. Also, a Wyner--Ziv code satisfying condition~(\ref{eqn:wz_berger}) can be constructed starting from a point-to-point channel code and a Slepian--Wolf code, as described in Section~\ref{sec:wyner}.
\end{remark}

Fig.~\ref{fig:berger} shows the block diagram of a distributed lossy compression code that uses the aforementioned Lego bricks. Similar to the distortion analysis in Section~\ref{sec:lossy_asym}, it can be shown using conditions~(\ref{eqn:lossy_berger}) and~(\ref{eqn:wz_berger}) that the average distortions of the Berger--Tung coding scheme can be bounded as
\begin{align*}
	\tfrac{1}{n}\E[d_H(X_1^n,\widehat{X}_1^n)] &\leq D_1 + \delta_1,\\
	\tfrac{1}{n}\E[d_H(X_2^n,\widehat{X}_2^n)] &\leq D_2  +\delta_1 + \delta_2.
\end{align*}

\begin{figure}[t]
	\centering
	\hspace*{0.5em}
	\def\svgscale{1.25}
	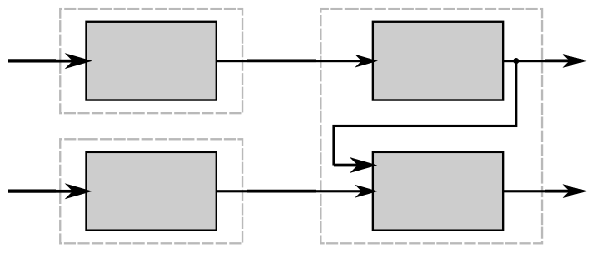
	\caption{Coding scheme for distributed lossy compression using a lossy source code and a Wyner--Ziv code.}
	\label{fig:berger}
\end{figure}

\begin{remark}
	If the rate of the lossy source code is $R_1 = I(X_1;\widehat{X}_1) + \gamma_1$ for some $\gamma_1 > 0$, and the rate of the Wyner--Ziv code is $R_2 = I(X_2;\widehat{X}_2 \cond \widehat{X}_1) + \gamma_2$ for some $\gamma_2>0$, then the coding scheme can achieve the rate pair
	\[
	(R_1,R_2) = \left(I(X_1;\widehat{X}_1) + \gamma_1, I(X_2;\widehat{X}_2 \cond \widehat{X}_1) + \gamma_2\right).
	\]
	Note that $\left(I(X_1;\widehat{X}_1), I(X_2;\widehat{X}_2 \cond \widehat{X}_1)\right)$ is a corner point of the rate-distortion region given in~(\ref{eqn:berger_region}).
\end{remark}

\begin{remark}
	A similar coding scheme can be constructed for a distributed compression problem with $L$ sources using one lossy source code and $L-1$ Wyner--Ziv codes, where the decoder successively decodes the source sequences.
\end{remark}

    \section{Multiple Description Coding} \label{sec:mdc}
    In this appendix, we describe a coding scheme for the multiple description coding problem starting from two lossy source codes and one Wyner--Ziv code. For illustrative purposes, each of the lossy source codes and Wyner--Ziv code will be described using their constituent codes, as described in previous sections. Therefore, we will construct a multiple description coding scheme starting from three point-to-point channel codes, two lossless source codes and a Slepian--Wolf code. Provided that these constituent codes are rate-optimal, the coding scheme can achieve a corner point of the El Gamal--Cover rate-distortion region for the multiple description coding problem.

\subsection{Problem Statement}
Consider a binary source sequence $X^n \IID \BERN(\theta)$ to be encoded through two descriptions such that each description by itself can be used to reconstruct the source with some distortions $D_1$ and $D_2$, and the two descriptions together can be used to reconstruct the source with a lower distortion $D_0$~\cite[Chapter 13]{NIT}, as depicted in Fig.~\ref{fig:mdc_code}. The goal is to characterize the optimal tradeoff between the description rate pair $(R_1,R_2)$ and the distortion triple $(D_0,D_1,D_2)$. An $(R_1,R_2,n)$ \emph{multiple description code} consists of an encoder $g:\{0,1\}^n \to [2^{nR_1}] \times [2^{nR_2}]$ that assigns two indices $M_1$ and $M_2$ to the source sequence $X^n$, and three decoders $\psi_j$, for $j=0,1,2$, such that $\psi_1$ assigns an estimate $\hat{X}_1^n$ to the index $M_1$, $\psi_2$ assigns an estimate $\hat{X}_2^n$ to the index $M_2$, and $\psi_0$ assigns an estimate $\hat{X}_0^n$ to the index pair $(M_1,M_2)$. A rate-distortion quintuple $(R_1,R_2,D_0,D_1,D_2)$ is \emph{achievable} if there exists a sequence of $(R_1,R_2,n)$ codes such that
\[
\limsup_{n\to \infty}\frac{1}{n}\E[d_H(X^n, \widehat{X}_j^n)] \leq D_j\qquad j=0,1,2,
\]
where $d_H(.,.)$ denotes the Hamming distance metric. El Gamal and Cover~\cite{ElGamal1982} showed that a rate pair $(R_1,R_2)$ is achievable with distortions $(D_0,D_1,D_2)$ if
\begin{equation} \label{eqn:mdc_region}
	\begin{aligned}
		R_1 &> I(X;\widehat{X}_1)\\
		R_2 &> I(X;\widehat{X}_2)\\
		R_1+R_2 &> I(X;\widehat{X}_0,\widehat{X}_1,\widehat{X}_2)+I(\widehat{X}_1;\widehat{X}_2)
	\end{aligned}
\end{equation}
for some conditional pmf $p(\hat{x}_0,\hat{x}_1,\hat{x}_2 \cond x)$ such that $\E[d_H(X,\widehat{X}_j)] \leq D_j$, $j=0,1,2$.

\begin{figure}[t] {
		\centering
		\hspace*{0.5em}
		\def\svgscale{1.3}
		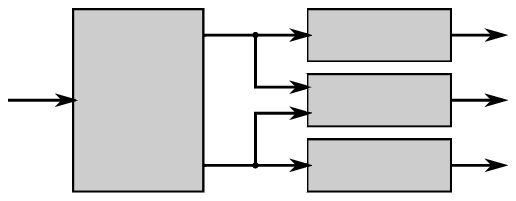
		\caption{A multiple description code.}
		\label{fig:mdc_code}
	}
\end{figure}

\subsection{Coding Scheme} \label{sec:mdc_coding_scheme}
Let $p(\hat{x}_0,\hat{x}_1,\hat{x}_2 \cond x)$ be some desired conditional pmf that satisfies $\E[d_H(X,\widehat{X}_j)] \leq D_j$, for each $j=0,1,2$. Therefore, the source-reconstruction joint distribution can be written as
\[
p(x,\hat{x}_0,\hat{x}_1,\hat{x}_2) = p(x)p(\hat{x}_0,\hat{x}_1,\hat{x}_2 \cond x).
\]
The coding scheme for the multiple description coding problem can be constructed starting from the following Lego bricks.

\begin{lego}[\textbf{P2P} $\to$ \textbf{MDC}]
	a $(k_{11},n)$ linear point-to-point channel code $(H_{11},\phi_{11})$ with codebook $\cC_{11}$ for the channel
	\begin{equation} \label{eqn:p2p_lossy_1}
		\bar{p}_1(x, v\cond\hat{x}_1) = p_{X,\widehat{X}_1}(x,\hat{x}_1\oplus v).
	\end{equation}
	Let $\delta_1$ denote the shaping distance of the code $(H_{11},\phi_{11})$ with respect to the channel $\bar{p}_1$.
\end{lego}

\begin{lego}[\textbf{Lossless} $\to$ \textbf{MDC}]
	an $(n-k_{12},n)$ lossless source code $(H_{12},\phi_{12})$ for a $\BERN(p_{\widehat{X}_1}(1))$ source with codebook $\cC_{12}$ and average probability of error $\eps_1$. We further assume that $\cC_{12} \subseteq \cC_{11}$. 
\end{lego}

\begin{lego}[\textbf{P2P} $\to$ \textbf{MDC}]
	a $(k_{21},n)$ linear point-to-point channel code $(H_{21},\phi_{21})$ with codebook $\cC_{21}$ for the channel
	\begin{equation} \label{eqn:p2p_lossy_2}
		\bar{p}_2(x, \hat{x}_1, v\cond\hat{x}_2) = p_{X,\widehat{X}_1,\widehat{X}_2}(x,\hat{x}_1, \hat{x}_2\oplus v).
	\end{equation}
	Let $\delta_2$ denote the shaping distance of the code $(H_{21},\phi_{21})$ with respect to the channel $\bar{p}_2$.
\end{lego}

\begin{lego}[\textbf{Lossless} $\to$ \textbf{MDC}]
	an $(n-k_{22},n)$ lossless source code $(H_{22},\phi_{22})$ for a $\BERN(p_{\widehat{X}_2}(1))$ source with codebook $\cC_{22}$ and average probability of error $\eps_2$. We further assume that $\cC_{22} \subseteq \cC_{21}$. 
\end{lego}

\begin{lego}[\textbf{P2P} $\to$ \textbf{MDC}]
	a $(k_{01},n)$ linear point-to-point channel code $(H_{01},\phi_{31})$ with codebook $\cC_{01}$ for the channel
	\begin{equation} \label{eqn:p2p_lossy_3}
		\bar{p}_0(x, \hat{x}_1, \hat{x}_2, v\cond\hat{x}_0) = p_{X,\widehat{X}_1,\widehat{X}_2,\widehat{X}_0}(x,\hat{x}_1, \hat{x}_2, \hat{x}_0\oplus v).
	\end{equation}
	Let $\delta_0$ denote the shaping distance of the code $(H_{01},\phi_{01})$ with respect to the channel $\bar{p}_0$.
\end{lego}

\begin{lego}[\textbf{SW} $\to$ \textbf{MDC}]
	an $(n-k_{02},n)$ Slepian--Wolf code $(H_{02},\phi_{02})$ for the problem $p\big(\hat{x}_0,(\hat{x}_1,\hat{x}_2)\big)$ with codebook $\cC_{02}$ and average probability of error $\eps_0$. We further assume that $\cC_{02} \subseteq \cC_{01}$. 
\end{lego}

\begin{remark}
	Due to nestedness, we will assume, without loss of generality, that $H_{j1}$ is a submatrix of $H_{j2}$ for each $j=0,1,2$, i.e., $H_{j2} = \begin{bmatrix}
		H_{j1}\\
		Q_j
	\end{bmatrix}$ for some $(k_{j1}-k_{j2})\times n$ matrix $Q_j$.
\end{remark}

\begin{figure}[t]
	\centering
	\def\svgscale{1.25}
	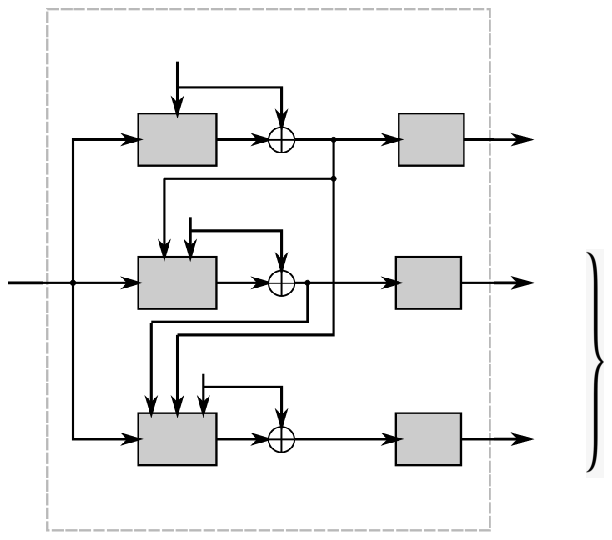
	\caption{Encoder of a multiple description code using three point-to-point channel codes.}
	\label{fig:mdc_encoder}
\end{figure}

\begin{remark}
    The point-to-point channels $\bar{p}_1$, $\bar{p}_2$ and $\bar{p}_0$ are the symmetrized channels corresponding to the joint distributions $p\big(\hat{x}_1,x\big)$, $p\big(\hat{x}_2,(\hat{x}_1,x\big))$ and $p\big(\hat{x}_0,(\hat{x}_1,\hat{x}_2,x)\big)$, respectively.
\end{remark}

Fig.~\ref{fig:mdc_encoder} and Fig.~\ref{fig:mdc_decoder} show the block diagrams of the encoder and decoder of the multiple description code, respectively, where $(V_0^n,V_1^n,V_2^n)$ are i.i.d. $\BERN(1/2)$ random dithers such that $V_1^n$ is shared with decoder 1, $V_2^n$ is shared with decoder 2, and all three random dithers are shared with decoder 0. Similar to lossy source coding, the basic idea of the coding scheme is to generate three sequences $(U_0^n,U_1^n,U_2^n)$ whose distribution is ``close'' to the i.i.d. $p(\hat{x}_0,\hat{x}_1,\hat{x}_2)$ distribution, and then perform lossless source compression to recover estimates of the sequences at the decoders. The construction of the sequences $(U_0^n,U_1^n,U_2^n)$ is done successively at the encoder side (Fig.~\ref{fig:mdc_encoder}). That is, first, the sequence $U_1^n$ is generated using the decoding function $\phi_{11}$. Then, $U_1^n$ is inputted to the decoding function $\phi_{21}$ to construct the sequence $U_2^n$. Intuitively, this step attempts to generate $U_2^n$ according to the conditional distribution $p(\hat{x}_2 \cond x, \hat{x}_1)$. Similarly, $(U_1^n,U_2^n)$ are inputted to the decoding function $\phi_{01}$ to construct the sequence $U_0^n$ according to the conditional distribution $p(\hat{x}_0 \cond x, \hat{x}_1,\hat{x}_2)$. By repeated applications of Lemma~\ref{lemma:shaping_distance} and the definition of shaping distance, it follows that
\begin{equation*}
    \begin{aligned}
        &\frac{1}{2}\sum_{u_0^n,u_1^n,u_2^n}\Big| \P\{U_0^n=u_0^n,U_1^n=u_1^n,U_2^n=u_2^n\} \\
        &\qquad \qquad \quad - \prod_{i=1}^{n}p_{\widehat{X}_0,\widehat{X}_1,\widehat{X}_2}(u_{0i},u_{1i},u_{2i})\Big| \\
        &\leq \delta_0 + \delta_1 + \delta_2.
    \end{aligned}
\end{equation*}
Moreover, for each $j=0,1,2$, we have
\[
H_{j2}U_j^n = \begin{bmatrix}
	H_{j1}\\
	Q_j
\end{bmatrix}U_j^n = \begin{bmatrix}
	H_{j1}V_j^n\\
	Q_jU_j^n
\end{bmatrix},
\]
where $Q_jU_j^n$, $j=0,1,2$, are indices transmitted to the decoders. Therefore, the decoder can recover the estimates of the source sequence knowing the shared indices and random dithers. The coding scheme can be summarized as follows.

\begin{figure}[t]
	\centering
	\hspace*{1em}
	\def\svgscale{1.15}
	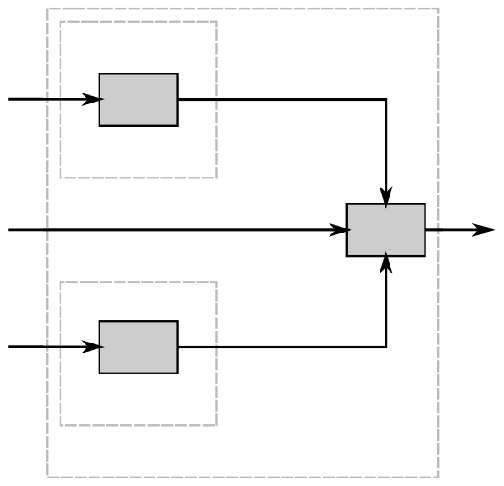
	\caption{Decoder of a multiple description code using two lossless source decoders and a Slepian--Wolf decoder.}
	\label{fig:mdc_decoder}
\end{figure}

\vspace{0.25em}
\noindent \emph{Encoding:} Upon observing the source sequence $x^n$, the encoder computes the sequence $u_1^n = \phi_{11}\left(x^n, v_1^n\right)\oplus v_1^n$ and transmits the index $m_1^{k_{11}-k_{12}} = Q_1u_1^n$ to decoder 1 and decoder 0, where $v_1^n$ is a realization of a random dither shared with the decoders. The encoder then computes the sequences $u_2^n = \phi_{21}\left(x^n, u_1^n, v_2^n\right)\oplus v_2^n$ and $u_0^n = \phi_{01}\left(x^n, u_1^n, u_2^n, v_0^n\right)\oplus v_2^n$, and transmits the index pair
\[
(m_{22}^{k_{21}-k_{22}}, m_{02}^{k_{01}-k_{02}}) = (Q_2u_2^n, Q_0u_0^n)
\]
to decoder 2 and decoder 0, where $v_2^n$ is a random dither shared with decoder 2 and decoder 0, and $v_0^n$ is a random dither shared with the decoder 0.

\vspace{0.25em}
\noindent \emph{Decoding:} Upon observing the index $m_1^{k_{11}-k_{12}}$, decoder 1 declares the sequence $\hat{x}_1^n = \phi_{12}\left( \begin{bmatrix}\bf 0 \\ H_{11}v_1^n \\ m_1^{k_{11}-k_{12}}\end{bmatrix} \right)$ as the source estimate. Upon observing the index $m_{22}^{k_{21}-k_{22}}$, decoder 2 declares the sequence $\hat{x}_2^n = \phi_{22}\left( \begin{bmatrix}\bf 0 \\ H_{21}v_2^n \\ m_{22}^{k_{21}-k_{22}}\end{bmatrix} \right)$ as the source estimate. And upon observing the index triplet $(m_1^{k_{11}-k_{12}},m_{22}^{k_{21}-k_{22}},m_{02}^{k_{01}-k_{02}})$, decoder 0 first computes the sequences $\hat{x}_1^n$ and $\hat{x}_2^n$, and declares the sequence $\hat{x}_0^n = \phi_{02}\left( \begin{bmatrix}\bf 0 \\ H_{01}v_0^n \\ m_{02}^{k_{01}-k_{02}}\end{bmatrix}, \hat{x}_1^n, \hat{x}_2^n\right)$ as the source estimate (recall that $\phi_{02}(.)$ is a Slepian--Wolf decoder with side information sequences $\hat{x}_1^n$ and $\hat{x}_2^n$).

\vspace{0.25em}
\noindent \emph{Analysis of the average distortion:} Similar to the analysis in lossy source coding, it can be shown that the average distortions can be bounded as
\begin{align*}
	\tfrac{1}{n}\E[d_H(X^n,\widehat{X}_1^n)] &\leq D_1 + \delta_1 + \eps_1\\
	\tfrac{1}{n}\E[d_H(X^n,\widehat{X}_2^n)] &\leq D_2  +\delta_1 + \delta_2 + \eps_2\\
	\tfrac{1}{n}\E[d_H(X^n,\widehat{X}_0^n)] &\leq D_0 +\delta_0 + \delta_1 + \delta_2 + \eps_0 + \eps_1 + \eps_2
\end{align*}

\vspace{0.25em}
\noindent \emph{Rate:} The coding scheme attains the rate pair $(R_1,R_2) = \left(\frac{k_{11}-k_{12}}{n}, \frac{k_{21}-k_{22}+k_{01}-k_{02}}{n}\right)$.

\begin{remark}
	Following the discussion in Section~\ref{sec:properties} and the properties of a symmetrized channel (Remark~\ref{remark:symmetrized}), it follows that if the rates of point-to-point channel codes are
	\begin{equation*}
		\begin{aligned}
			\tfrac{k_{11}}{n}&=1-H(\widehat{X}_1 \cond X)+\gamma_{11},\\
			\tfrac{k_{21}}{n}&=1-H(\widehat{X}_2 \cond X, \widehat{X}_1)+\gamma_{21},\\
			\tfrac{k_{01}}{n}&=1-H(\widehat{X}_0 \cond X, \widehat{X}_1, \widehat{X}_2)+\gamma_{01},
		\end{aligned}
	\end{equation*}
	for some $\gamma_{11}$, $\gamma_{21}$, $\gamma_{01}>0$, and the rates of the two lossless source codes and Slepian--Wolf code are, respectively,
	\begin{equation*}
		\begin{aligned}
			\tfrac{n-k_{12}}{n}&=H(\widehat{X}_1)+\gamma_{12},\\
			\tfrac{n-k_{22}}{n}&=H(\widehat{X}_2)+\gamma_{22},\\
			\tfrac{n-k_{02}}{n}&=H(\widehat{X}_0 \cond \widehat{X}_1, \widehat{X}_2)+\gamma_{02},
		\end{aligned}
	\end{equation*}
	for some $\gamma_{12}$, $\gamma_{22}$, $\gamma_{02}>0$, then the multiple description coding scheme attains the rate pair
	\begin{equation*}
		\begin{aligned}
			&(R_1,R_2)  = \left(\tfrac{k_{11}-k_{12}}{n}, \tfrac{k_{21}-k_{22}+k_{01}-k_{02}}{n}\right) \\
			&= \hspace*{-0.1em}\Big( I(X;\widehat{X}_1) \hspace*{-0.1em}+\hspace*{-0.1em} \gamma_{11}\hspace*{-0.1em}+\hspace*{-0.1em}\gamma_{12}, \: I(X,\widehat{X}_1;\widehat{X}_2) \hspace*{-0.1em}+\hspace*{-0.1em} I(X;\widehat{X}_0 \cond \widehat{X}_1,\widehat{X}_2)\\
            &\qquad\qquad\qquad\qquad\qquad\quad +\hspace*{-0.1em} \gamma_{21} \hspace*{-0.1em} + \hspace*{-0.1em} \gamma_{22} \hspace*{-0.1em} + \hspace*{-0.1em} \gamma_{01}\hspace*{-0.1em}+ \hspace*{-0.1em} \gamma_{02}\Big).
		\end{aligned}
	\end{equation*}
	Note that the rate pair $\big( I(X;\widehat{X}_1), \:I(X,\widehat{X}_1;\widehat{X}_2) + I(X;\widehat{X}_0 \cond \widehat{X}_1,\widehat{X}_2) \big)$ is a corner point of the El~Gamal--Cover rate-distortion region given in~(\ref{eqn:mdc_region}). By reversing the order of generating the sequences $U_1^n$ and $U_2^n$ at the encoder side, another corner point can be achieved.
\end{remark}

    \section{Coding for Downlink C-RAN} \label{sec:cran}
    In this appendix, we construct a coding scheme for the downlink of a cloud radio access network (C-RAN)~\cite{Simeone2016} starting from a Marton code and two lossless source codes. When the constituent codes are rate-optimal, the coding scheme can achieve a corner point of the achievable rate region of distributed decode-forward~\cite{Ganguly2021}, which is the best known inner bound for the downlink C-RAN problem.

\subsection{Problem Statement} \label{sec:cran_problem}
Consider the downlink of the C-RAN model with two users and two relays, as shown in Fig.~\ref{fig:cran_code}. A central processor (CP) communicates with the relays through noiseless fronthaul links of finite capacities $C_1$ and $C_2$. A memoryless channel $p(y_1,y_2 \cond x_1,x_2)$ is assumed between the relays and the users, with an input alphabet $\cX^2 = \{0,1\}^2$ and output alphabet $\cY_1 \times \cY_2$. An $(R_1,R_2,n)$ code for the downlink C-RAN problem consists of an encoder $g:[2^{nR_1}] \times [2^{nR_2}] \to [2^{nC_1}] \times [2^{nC_2}]$ at the CP that maps a message pair $(M_1,M_2)$ to a pair of indices $(S_1,S_2)$, encoders $h_j: [2^{nC_j}] \to \cX^n$ at the $j$th relay for $j=1,2$, that map the index $S_j$ to a channel input sequence $X_j^n$, and decoders $\psi_j: \cY_j^n \to [2^{nR_j}]$ at the $j$th user for $j=1,2$, that assign message estimates $\hat{M}_j$ to the received sequence $Y_j^n$. The average probability of error of the code is $\eps = \P\big\{ \{\widehat{M}_1 \neq M_1\} \cup \{\widehat{M}_2 \neq M_2\} \big\}$. A rate pair $(R_1, R_2)$ is said to be achievable for the downlink C-RAN problem if there exists a sequence of $(R_1,R_2,n)$ codes with vanishing error probability asymptotically.

The C-RAN architecture is a key part in the deployment of 5G standards~\cite{Johnson2019, 3gpp.38.874}. Coding schemes for the C-RAN model have been proposed in~\cite{Xiao2015,Park2014,Ganguly2021,Patil2019}; in particular,~\cite{Ganguly2021} specializes the distributed decode-forward relaying scheme~\cite{SungHoonLim2017} to the downlink C-RAN problem under consideration and shows that the achievable rate region for the two-user two-relay problem is the set of rate tuples $(R_1,R_2)$ satisfying
\begin{equation} \label{eqn:cran_rate_region}
    \sum_{k\in \cK}R_k < \sum_{k \in \cK}I(U_k;Y_k) + \sum_{l\in \cL} C_l - I(U_\cK;X_{\cL}) - I^{*}(U_\cK) - I^{*}(X_\cL),
\end{equation}
for all $\cK \subseteq \{1,2\}$ and $\cL \subseteq \{1,2\}$ and some joint distribution $p(u_1,u_2,x_1,x_2)$, where $I^{*}(.)$ is as defined in~(\ref{eqn:mutual_info_star}).

\begin{figure*}[t] {
    \centering
	\hspace*{9em}
	\def\svgscale{1.15}
	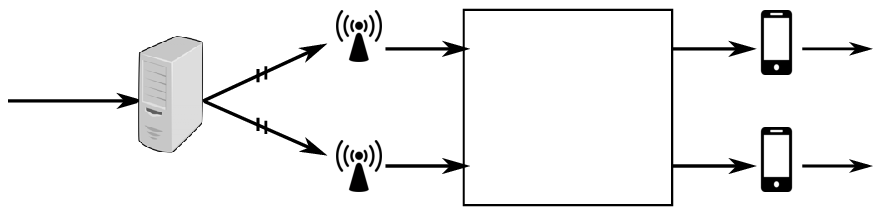
	\caption{Downlink C-RAN problem with two users, two relays and a channel $p(y_1,y_2 \cond x_1, x_2)$ between the relays and the users.}
	\label{fig:cran_code}
}
\end{figure*}

\subsection{Coding Scheme} \label{sec:cran_coding_scheme}
The coding scheme for the two-user, two-relay downlink C-RAN problem is constructed starting from a Marton code for a broadcast channel and two lossy source codes. The Marton code shapes the channel input distribution while reliably sending messages to the intended users, whereas the lossy source codes compress the encoded signals to accommodate the limited capacity constraints of the fronthaul links. For clarity of representation, we will describe the Marton code by its constituent asymmetric channel code and Gelfand--Pinsker code, as described in Appendix~\ref{sec:marton}. Similarly, each lossy source code will be implemented using its constituent point-to-point channel code and lossless source code, as described in Section~\ref{sec:lossy_asym}. Therefore, the following Lego bricks (one asymmetric channel code, one Gelfand--Pinsker code, two point-to-point channel codes and two lossless source codes) will be used in the construction, which will target some desired joint distribution $p(u_1,u_2,x_1,x_2)$.

\begin{lego}[\textbf{Asym} $\to$ \textbf{DL C-RAN}]
an $(R_1,n)$ asymmetric channel code $(f_1,\psi_1)$ for the channel 
\[
    p(y_1|u_1) = \sum_{\substack{u_2,x_1,\\x_2,y_2}}p(u_2,x_1,x_2|u_1) p(y_1,y_2| x_1,x_2),
\]
which targets an input distribution $p(u_1)$, such that, for $M_1 \sim \Unif([2^{nR_1}])$, the sequence $U_1^n = g_1(M_1)$ satisfies
\begin{equation} \label{eqn:tv_cran_asym}
	\frac{1}{2}\sum_{u_1^n}\left| \P\{U_1^n=u_1^n\} - \prod_{i=1}^n p(u_{1i}) \right| \leq \delta_1,
\end{equation}
for some $\delta_1 > 0$. Let $\eps_1$ be the average probability of error of the asymmetric channel code when the channel input distribution is i.i.d. according to $p(u_1)$.
\end{lego}

\begin{lego}[\textbf{GP} $\to$ \textbf{DL C-RAN}]
an $(R_2,n)$ code $(f_2, \psi_2)$ for the Gelfand--Pinsker problem defined by
\[
    p(u_1)p(y_2| u_2,u_1) = p(u_1)\sum_{x_1,x_2,y_1}p(x_1,x_2|u_1,u_2)p(y_1,y_2| x_1,x_2),
\]
which targets a conditional distribution $p(u_2 \cond u_1)$, such that, when $M_2 \sim \Unif([2^{nR_2}])$ and $\wtilde{U}_1^n$ is an i.i.d. $p(u_1)$ sequence, the sequence $\wtilde{U}_2^n = g_2(M_2, \wtilde{U}_1^n)$ satisfies
\begin{equation} \label{eqn:tv_cran_gelfand}
	\frac{1}{2}\sum_{u_1^n,u_2^n}\left| \P\{\wtilde{U}_1^n=u_1^n,\wtilde{U}_2^n=u_2^n\} - \prod_{i=1}^n p(u_{1i},u_{2i}) \right| \leq \delta_2,
\end{equation}
for some $\delta_2 > 0$. Let $\eps_2$ be the average probability of error of the Gelfand--Pinsker code when the conditional distribution of the channel input given the channel state is i.i.d. according to $p(u_2 \cond u_1)$.
\end{lego}

\begin{lego}[\textbf{P2P} $\to$ \textbf{Lossy} $\to$ \textbf{DL C-RAN}]
a $(k_{31},n)$ linear point-to-point channel code $(H_{31},\phi_{31})$ with codebook $\cC_{31}$ for the channel
\begin{equation} \label{eqn:p2p_lossy_cran1}
\bar{p}_3(u_1, u_2, v\cond x_1) = p_{U_1,U_2,X_1}(u_1,u_2,x_1\oplus v).
\end{equation}
Let $\delta_3$ denote the shaping distance of the code $(H_{31},\phi_{31})$ with respect to the channel $\bar{p}_3$.
\end{lego}

\begin{lego}[\textbf{Lossless} $\to$ \textbf{Lossy} $\to$ \textbf{DL C-RAN}]
an $(n-k_{32},n)$ lossless source code $(H_{32},\phi_{32})$ for a $\BERN(p_{X_1}(1))$ source with codebook $\cC_{32}$ and average probability of error $\eps_3$. We further assume that $\cC_{32} \subseteq \cC_{31}$ (i.e., the two codes are nested) and that $k_{31}-k_{32} < nC_1$. 
\end{lego}

\begin{lego}[\textbf{P2P} $\to$ \textbf{Lossy} $\to$ \textbf{DL C-RAN}]
a $(k_{41},n)$ linear point-to-point channel code $(H_{41},\phi_{41})$ with codebook $\cC_{41}$ for the channel
\begin{equation} \label{eqn:p2p_lossy_cran2}
\bar{p}_4(u_1, u_2, x_1, v\cond x_2) = p_{U_1,U_2,X_1,X_2}(u_1,u_2,x_1,x_2\oplus v).
\end{equation}
Let $\delta_4$ denote the shaping distance of the code $(H_{41},\phi_{41})$ with respect to the channel $\bar{p}_4$.
\end{lego}

\begin{lego}[\textbf{Lossless} $\to$ \textbf{Lossy} $\to$ \textbf{DL C-RAN}]
an $(n-k_{42},n)$ lossless source code $(H_{42},\phi_{42})$ for a $\BERN(p_{X_2}(1))$ source with codebook $\cC_{42}$ and average probability of error $\eps_4$. We further assume that $\cC_{42} \subseteq \cC_{41}$ (i.e., the two codes are nested) and that $k_{41}-k_{42} < nC_2$.
\end{lego}

\begin{remark}
The channels $\bar{p}_3$ and $\bar{p}_4$ are the symmetrized channels corresponding to the joint distributions $p\big(x_1,(u_1,u_2)\big)$ and $p\big(x_2,(u_1,u_2,x_1)\big)$, respectively.
\end{remark}

\begin{remark}
Without loss of generality, assume that $H_{j1}$ is a submatrix of $H_{j2}$ for $j=3,4$, i.e., $H_{j2} = \begin{bmatrix}
	H_{j1}\\
	Q_j
\end{bmatrix}$ for some $(k_{j1}-k_{j2})\times n$ matrix $Q_j$. 
\end{remark}

\begin{figure}[t]
	\centering
	\hspace*{0.5em}
	\def\svgscale{1.15}
	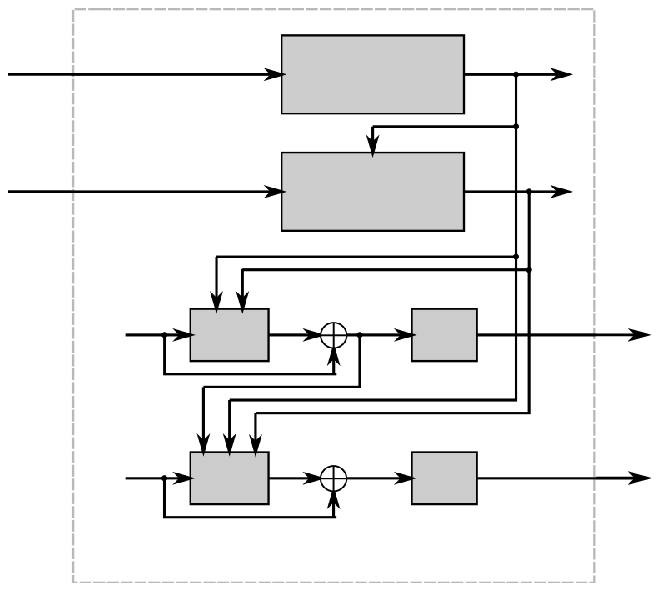
	\caption{Encoding scheme at the central processor for the two-user, two-relay downlink C-RAN problem using an asymmetric channel code, a Gelfand--Pinsker code and two point-to-point channel codes.}
	\label{fig:cran_cp}
\end{figure}

Fig.~\ref{fig:cran_cp} and Fig.~\ref{fig:cran_relay} show the block diagrams of the coding scheme at the central processor, relays and users. The key point at the central processor is to construct a tuple $(U_1^n,U_2^n,X_1^n,X_2^n)$ that is close in total variation distance to the i.i.d. distribution according to $p(u_1,u_2,x_1,x_2)$, and then compress $(X_1^n,X_2^n)$ through a pair of indices that are sent to the relays through the fronthaul links. The relays recover estimates of $(X_1^n,X_2^n)$, which are transmitted to the users through the channel. Using standard arguments that were used throughout this paper, it follows from conditions~(\ref{eqn:tv_cran_asym}), (\ref{eqn:tv_cran_gelfand}), and Lemma~\ref{lemma:shaping_distance} that
\begin{equation*}
    \begin{aligned}
        &\frac{1}{2}\sum_{\substack{u_1^n,u_2^n,\\ x_1^n,x_2^n}}\Big| \P\{U_1^n=u_1^n, U_2^n=u_2^n,X_1^n=x_1^n,X_2^n=x_2^n\} \\
        &\qquad \qquad - \prod_{i=1}^n p(u_{1i},u_{2i},x_{1i},x_{2i}) \Big| \\
        &\leq \delta_1+\delta_2+\delta_3+\delta_4,
    \end{aligned}
\end{equation*}
which says that the distribution of $(U_1^n,U_2^n,X_1^n,X_2^n)$ is $(\delta_1+\delta_2+\delta_3+\delta_4)$-away in total variation distance from the i.i.d. $p(u_1,u_2,x_1,x_2)$ distribution. Furthermore, the average probability of error of the coding scheme can be bounded as
\begin{equation*}
    \begin{aligned}
        &\P\left\{ \{\widehat{M}_1 \neq M_1\} \cup \{\widehat{M}_2 \neq M_2\} \right\} \\
        &\leq \, \delta_1 + \delta_2 + \delta_3 + \delta_4 + \epsilon_1 + \epsilon_2 + \eps_3 + \eps_4,
    \end{aligned}
\end{equation*}
following similar bounding techniques as in previous sections. 

The coding scheme attains the rate pair $(R_1,R_2)$, and the compression rates are $\frac{k_{31}-k_{32}}{n}$ and $\frac{k_{41}-k_{42}}{n}$. Notice that these compression rates do not exceed the fronthaul link capacities (by assumption).

\begin{remark} \label{remark:cran_achievable_rates}
Similar to the analysis of the Marton coding scheme of Appendix~\ref{sec:marton}, the coding rate pair $(R_1,R_2)$ can be made arbitrarily close to $(I(U_1;Y_1), \, I(U_2;Y_2) - I(U_2; U_1))$ for sufficiently large $n$. On the other hand, following the discussion on the achievable rates of the lossy source coding scheme (specifically, Remarks~\ref{remark:lossy_asym} and \ref{remark:lossy_asym_achievability}), the compression rate pair $\left(\frac{k_{31}-k_{32}}{n}, \frac{k_{41}-k_{42}}{n}\right)$ can be made arbitrarily close to $\big(I(U_1,U_2;X_1), I(U_1,U_2,X_1;X_2)\big)$ for sufficiently large $n$. It can be checked that this corresponds to a corner point in the achievable rate region of the distributed decode-forward relaying scheme given in~(\ref{eqn:cran_rate_region}). If the encoding order at the central processor is modified (e.g., if $X_1^n$ is generated before $U_2^n$ in Fig.~\ref{fig:cran_cp}), other corner points of the rate region can be achieved.
\end{remark}

\begin{remark}
A similar approach can be taken to construct a coding scheme for the general $K$-user, $L$-relay downlink C-RAN problem using one asymmetric channel code, $K-1$ Gelfand--Pinsker codes, $L$ point-to-point channel codes and $L$ lossless source codes.
\end{remark}

\begin{figure*}[t]
	\centering
	\hspace*{7em}
	\def\svgscale{1.15}
	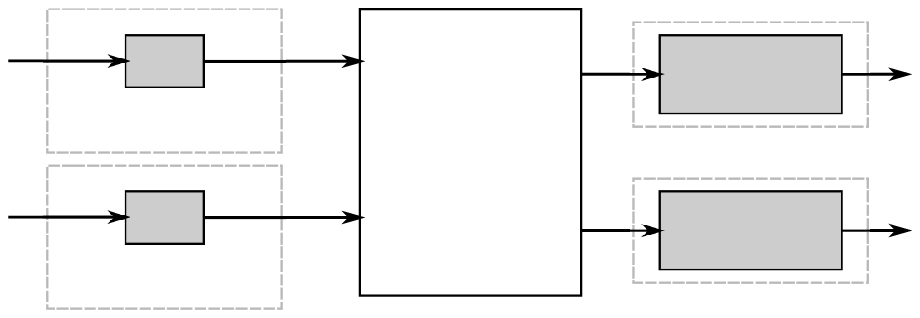
	\caption{Encoding scheme at the relays and decoding scheme at the users for the downlink C-RAN problem.}
	\label{fig:cran_relay}
\end{figure*}

\subsection{Simulation Results} \label{sec:cran_simulation}
By implementing each of the lossless source codes, the asymmetric channel code and the Gelfand--Pinsker code using their constituent point-to-point channel codes as described in Section~\ref{sec:lossless} and Section~\ref{sec:gelfand}, the coding scheme for downlink C-RAN can be simulated using point-to-point channel codes. We consider a two-user two-relay model with a Gaussian channel between the relays and the users, i.e., the channel output can be expressed as
\begin{equation} \label{eqn:gaussian_channel_cran}
    \begin{bmatrix}
        Y_1 \\
        Y_2
    \end{bmatrix} = H_{\mathrm{ch}}\Lambda \begin{bmatrix}
        X_1\\
        X_2
    \end{bmatrix} + \begin{bmatrix}
        Z_1\\
        Z_2
    \end{bmatrix}
\end{equation}
where $H_{\mathrm{ch}} = \begin{bmatrix}
	1 & g \\
	g & 1
\end{bmatrix}$ is the channel gain matrix with $g=0.9$, $\Lambda = \begin{bmatrix}
\sqrt{\lambda_1} & 0\\
0 & \sqrt{\lambda_2} 
\end{bmatrix}$ is a power allocation matrix for the relays\footnote{Note that since the two relays do not communicate and the fronthaul links are capacity-limited, we do not consider the possibility of applying a general precoding matrix as in Appendix~\ref{sec:marton_simulation}.}, $X_1 \in \{\pm 1\}$ and $X_2 \in \{\pm 1\}$ are BPSK-modulated signals, and $\begin{bmatrix}
    Z_1\\
    Z_2
\end{bmatrix} \sim \mathcal{N}\big({\bf 0}, I\big)$ is a vector of independent Gaussian noise components. The transmitter wishes to send messages to the users at an overall sum-rate $R_{\mathrm{sum}}$, and each relay is subject to a power constraint $P$ such that $\lambda_j \leq P$ for each $j=1,2$.

We first look at the maximum achievable sum-rate of the proposed coding scheme under different power constraints $P$ and fronthaul capacity constraints $C_1$ and $C_2$. We know from (\ref{eqn:cran_rate_region}) that the maximum sum-rate $C_{\mathrm{sum}}$ that can be achieved by our coding scheme is given by equation~(\ref{eqn:cran_optimization_ddf}) below,
\begin{figure*}[b]
    \hrulefill
    \begin{equation} \label{eqn:cran_optimization_ddf}
        C_{\mathrm{sum}} \triangleq \underset{\substack{p(u_1,u_2,x_1,x_2), \\ (\lambda_1, \lambda_2) \in \cK(P)}}{\max} \, \left(I(U_1;Y_1) + I(U_2;Y_2) - I(U_1;U_2) + \min\left\{\begin{matrix}
            0,\\
            C_1-I(U_1,U_2;X_1),\\
            C_2-I(U_1,U_2;X_2),\\
            C_1+C_2-I(U_1,U_2;X_1,X_2) - I(X_1;X_2),\\
        \end{matrix} \right\}\right),
    \end{equation}
\end{figure*}
where $\cK(P) = \{(\lambda_1,\lambda_2): 0\leq \lambda_1 \leq P, 0 \leq \lambda_2 \leq P\}$. For the encoding order adopted in Fig.~\ref{fig:cran_cp} and the corresponding achievable rates (see Remark~\ref{remark:cran_achievable_rates}), the maximum achievable sum-rate $C_{\mathrm{sum}}$ can be alternatively expressed as the solution to a simpler optimization problem, i.e.,
\begin{equation} \label{eqn:cran_optimization}
    C_{\mathrm{sum}} = \hspace*{-2.5em}\underset{\substack{p(u_1,u_2,x_1,x_2) \in \cD(C_1,C_2), \\ (\lambda_1, \lambda_2) \in \cK(P)}}{\max} \, \big(I(U_1;Y_1) + I(U_2;Y_2) - I(U_1;U_2)\big),
\end{equation}
where $\cD(C_1,C_2)$ is the set of joint distributions $p(u_1,u_2,x_1,x_2)$ that satisfy $C_1 > I(U_1,U_2; X_1)$ and $C_2 > I(U_1,U_2,X_1; X_2)$.
We use the genetic algorithm~\cite{Holland1992, Luke2013} as an efficient heuristic method to perform the optimization in (\ref{eqn:cran_optimization}). Fig.~\ref{fig:cran_achievable_rates} shows the plot of $C_{\mathrm{sum}}$ as a function of the power constraint $P$ under different fronthaul capacity constraints $C_1$ and $C_2$. As expected, larger sum-rates can be achieved by the proposed coding scheme when the fronthaul link capacities are increased.\footnote{Note that when $C_1 \geq 1$ and $C_2\geq 1$, the maximum achievable sum-rate of the downlink C-RAN coding scheme over the given Gaussian model is the same as that of the Marton coding scheme over the Gaussian broadcast channel model considered in Appendix~\ref{sec:marton_simulation}, in the special case when the transmitter is subject to a per-antenna power constraint and the precoding matrix is restricted to be a diagonal matrix.}

Next, the proposed coding scheme for the downlink C-RAN problem is simulated for a block length $n=1024$ and a sum-rate $R_{\mathrm{sum}} = 0.75$ using polar codes with successive cancellation decoding as the constituent point-to-point channel codes. Note that each of the asymmetric channel code and the Gelfand--Pinsker code can be implemented using a pair of point-to-point channel codes, and each of the lossless source codes can be implemented using a single point-to-point channel code. Hence, the downlink C-RAN coding scheme can be constructed starting from eight polar codes. Let $(R_{11},R_{12},R_{21},R_{22}, R_{31}, R_{32}, R_{41}, R_{42})$ be the rates of the constituent polar codes, where $(R_{11},R_{12})$ are the rates of the two polar codes needed to construct the asymmetric channel code, $(R_{21},R_{22})$ are the rates of the two polar codes needed to construct the Gelfand--Pinsker code, $(R_{31}, R_{32})$ are the rates of the two polar codes needed to construct the codes $\cC_{31}$ and $\cC_{32}$, and $(R_{41}, R_{42})$ are the rates of the two polar codes needed to construct the codes $\cC_{41}$ and $\cC_{42}$. The rates are chosen ``close'' to their theoretical limits, i.e., we set
\vspace*{-2.25em}
\begin{multicols}{2}
  \begin{equation*}
      \begin{aligned}
        R_{11} &= 1-H(U_1\cond Y_1)-\gamma_r,\\
        R_{21} &= 1-H(U_2\cond Y_2)-\gamma_r,\\
        R_{31} &= 1-H(X_1 \cond U_1,U_2),\\
        R_{41} &= 1-H(X_2 \cond U_1,U_2,X_1),
      \end{aligned}
  \end{equation*} \break
  \begin{equation} \label{eqn:cran_rates}
    \begin{aligned}
        R_{12} &= 1-H(U_1), \\
        R_{22} &= 1-H(U_2 \cond U_1), \\
        R_{32} &= 1-H(X_1)-\gamma_c,\\
        R_{42} &= 1-H(X_2)-\gamma_c,
    \end{aligned}
  \end{equation}
\end{multicols}
\vspace*{-0.5em}
\noindent where $\gamma_r > 0$ and $\gamma_c>0$ are ``back-off'' parameters from the theoretical limit that are used for the polar codes involved in error correction (i.e., not shaping). Thus, the sum-rate attained by the coding scheme is equal to 
\[
R_{11}-R_{12}+R_{21}-R_{22} \hspace*{-0.1em}=\hspace*{-0.1em} I(U_1;Y_1) + I(U_2;Y_2) - I(U_1;U_2) -2\gamma_r.
\]
As in Appendix~\ref{sec:marton_simulation}, in order to guarantee that the sum-rate is equal to $R_{\mathrm{sum}}$, the coding scheme targets the joint distribution $p^{*}(u_1,u_2,x_1,x_2)$ and the power levels $(\lambda_1^{*},\lambda_2^{*})$ that maximize $C_{\mathrm{sum}}$ under the optimization problem of (\ref{eqn:cran_optimization}), where the power constraint $P$ is set to be equal to the power level $P^{*}$ at which $C_{\mathrm{sum}} = R_{\mathrm{sum}} + 2\gamma_r$.\footnote{The power level $P^{*}$ can be estimated using the plot of Fig.~\ref{fig:cran_achievable_rates}.} Under the joint distribution $p^{*}(u_1,u_2,x_1,x_2)$ and the power levels $(\lambda_1^{*},\lambda_2^{*})$, the rates of the constituent polar codes are set according to~(\ref{eqn:cran_rates}). Note that $\gamma_c$ should be chosen so that $R_{31}-R_{32} \leq C_1$ and $R_{41}-R_{42} \leq C_2$. In our simulations over the downlink C-RAN channel model, we used $\gamma_r = 1/8$ and $\gamma_c = 5/32$. For more details about the simulation setup, our code is available on GitHub~\cite{Github}. The block error rate performance of the downlink C-RAN coding scheme is shown in Fig.~\ref{fig:cran_simulation} for different fronthaul capacity constraints $C_1$ and $C_2$. The results demonstrate the practicality of the proposed coding scheme over the downlink C-RAN channel model.

\begin{figure}[t]
	\centering
	\includegraphics[width=\columnwidth]{./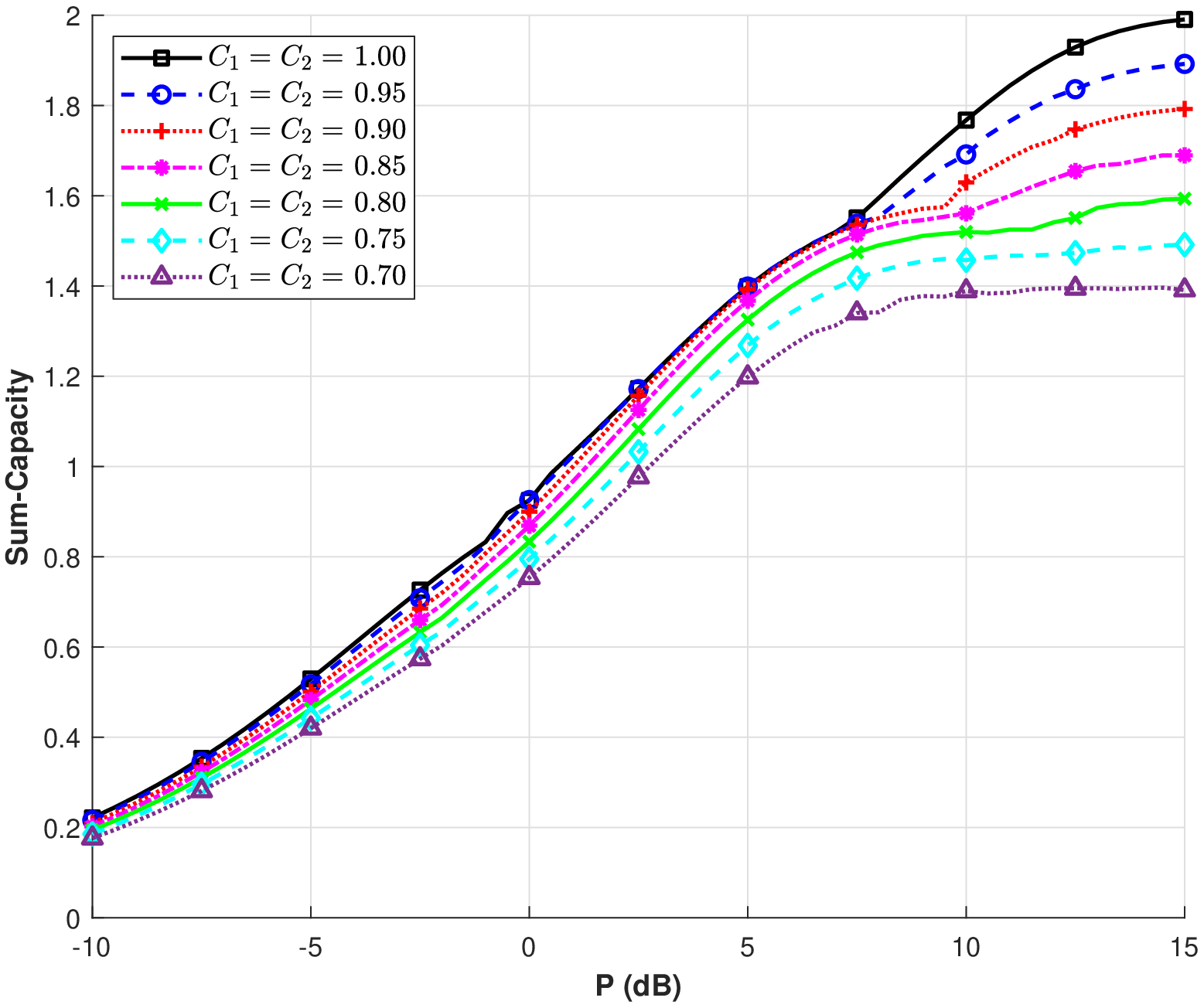}
	\caption{The maximum achievable sum-rate of the proposed coding scheme for the downlink C-RAN problem under different fronthaul capacity constraints.}
	\label{fig:cran_achievable_rates}
\end{figure}

\begin{figure}[t]
	\centering
	\hspace*{-1em}
	\includegraphics[width=\columnwidth]{./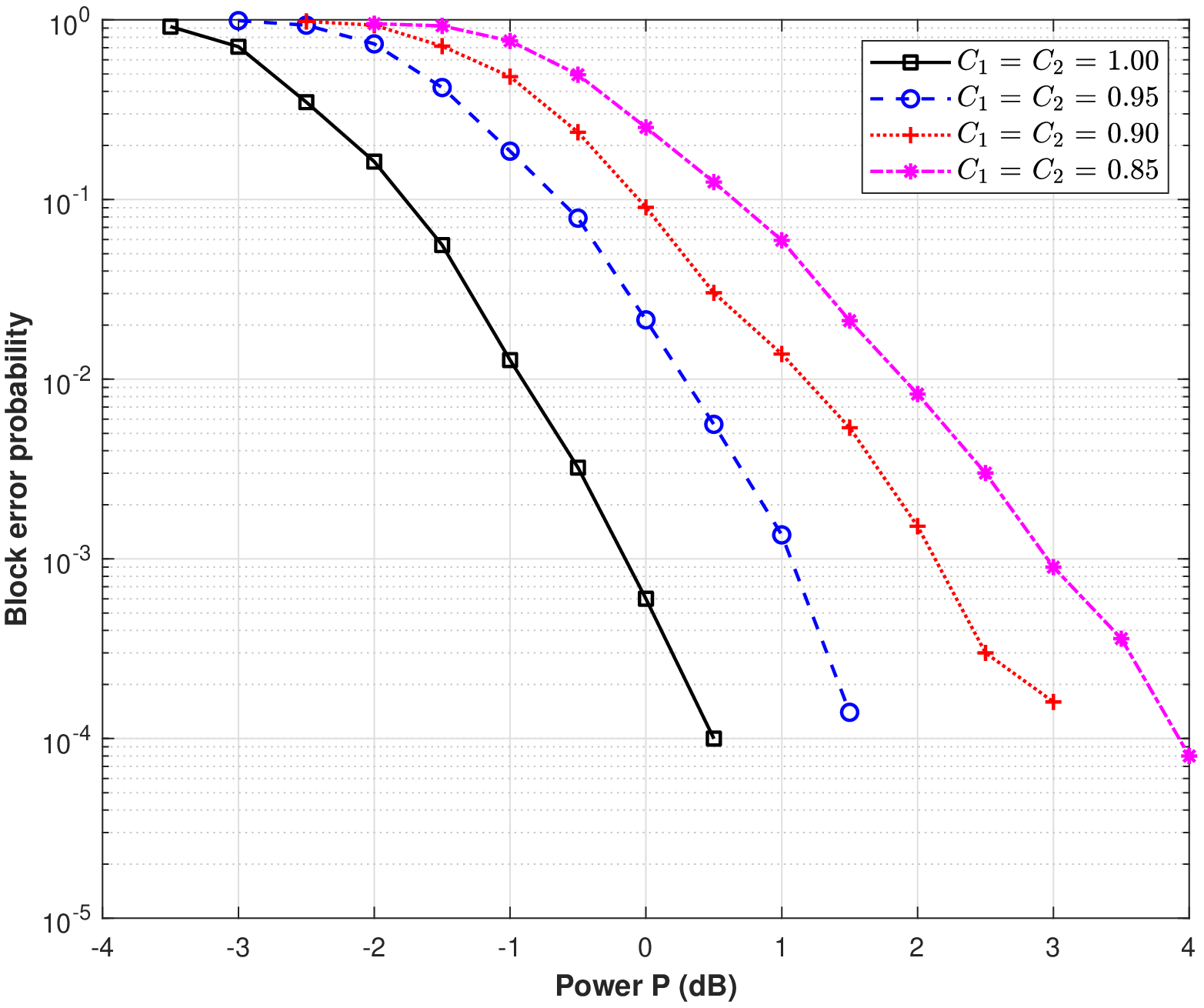}
	\caption{Simulation results of the downlink C-RAN coding scheme for a block length $n=1024$ and sum-rate $R_\mathrm{sum} = 0.75$ under different fronthaul capacity constraints.}
	\label{fig:cran_simulation}
\end{figure}

    \section{Coding for Uplink C-RAN} \label{sec:uplink-cran}
    In this appendix, we describe a coding scheme for the uplink C-RAN model starting from a multiple access channel code and a Berger--Tung code. We show that, if the constituent codes are rate-optimal, the coding scheme can achieve a corner point of the rate region achieved by the network compress-and-forward coding scheme with successive decoding~\cite{Zhou2016}.

\subsection{Problem Statement}
Consider the uplink of a C-RAN model with two users and two relays~\cite{Simeone2016}, as shown in Fig.~\ref{fig:uplink_cran_code}. Two users wish to communicate with a central processor (CP) through two relays that are connected to the CP through noiseless backhaul links of finite capacities $C_1$ and $C_2$. A memoryless channel $p(y_1,y_2 \cond x_1,x_2)$ is assumed between the users and the relays, with an input alphabet $\cX^2 = \{0,1\}^2$ and output alphabet $\cY_1 \times \cY_2$. An $(R_1,R_2,n)$ code for uplink C-RAN problem consists of encoders $g_j:[2^{nR_j}] \to \cX^n$ at the $j$th user for $j=1,2$ that map each message $M_j$ to a channel input sequence $X_j^n$, encoders $h_j: \cY_j^n \to [2^{nC_j}]$ at the $j$th relay for $j=1,2$ that map the received sequence $Y_j^n$ to an index $S_j$, and a decoder $\psi: [2^{nC_1}] \times [2^{nC_2}] \to [2^{nR_1}] \times [2^{nR_2}]$ at the CP that assign message estimates $(\hat{M}_1,\hat{M}_2)$ to the index pair $(S_1,S_2)$. The average probability of error of the code and achievability of a rate pair $(R_1, R_2)$ are defined as in Appendix~\ref{sec:cran}.

\begin{figure*}[t] {
		\centering
		\hspace*{10em}
		\def\svgscale{1.2}
		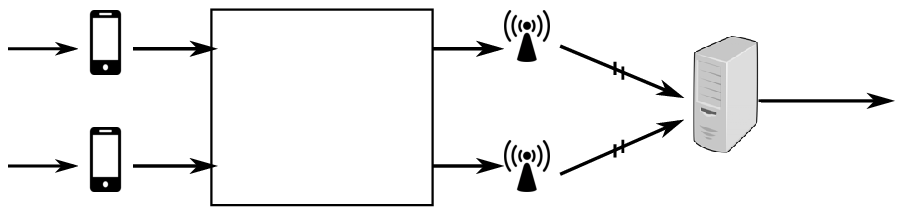
		\caption{Uplink C-RAN problem with two users and two relays.}
		\label{fig:uplink_cran_code}
	}
\end{figure*}

Coding schemes for the uplink C-RAN problem have been proposed in~\cite{Zhou2014, Sanderovich2009, Ganguly2021} based on the network compress-and-forward relaying scheme~\cite{Kramer2005}, the achievable rate region of which can be expressed as the rate tuples $(R_1,R_2)$ satisfying
\begin{equation} \label{eqn:uplink_cran_rate_region}
    \sum_{k\in \cK}R_k < \sum_{l\in \cL} \big(C_l - I(Y_l;\widehat{Y}_l \cond X_1,X_2)\big) + I(X_\cK;\widehat{Y}_{\cL^c} \cond X_{\cK^c}),
\end{equation}
for all $\cK \subseteq \{1,2\}$ and $\cL \subseteq \{1,2\}$ and some product distribution $p(x_1)p(x_2)p(\hat{y}_1 \cond y_1)p(\hat{y}_2 \cond y_2)$.

\subsection{Coding Scheme} \label{sec:cran_uplink_coding_scheme}
The coding scheme for the uplink C-RAN problem is constructed starting from a multiple access channel code and a Berger--Tung code. The coding scheme will target a given product distribution $p(x_1)p(x_2) p(\hat{y}_1 \cond y_1)p(\hat{y}_2 \cond y_2)$\footnote{For example, such a product distribution can be chosen to maximize the maximum achievable sum-rate $I(X_1,X_2;\widehat{Y}_1,\widehat{Y}_2)$.}. Therefore, the desired input-output joint distribution can be given by
\begin{equation*}
    \begin{aligned}
        &p(x_1,x_2,y_1,y_2,\hat{y}_1,\hat{y}_2) \\
        &= p(x_1)p(x_2) p(y_1,y_2 \cond x_1,x_2)p(\hat{y}_1 \cond y_1)p(\hat{y}_2 \cond y_2).
    \end{aligned}
\end{equation*}
The coding scheme for the uplink C-RAN problem uses the following Lego bricks.

\begin{lego}[\textbf{MAC} $\to$ \textbf{UL C-RAN}]
	an $(R_1,R_2,n)$ code $(f_1,f_2,\phi)$ for the multiple access channel $p(\hat{y}_1,\hat{y}_2 \cond x_1,x_2)$ which targets input distributions $p(x_1)p(x_2)$, such that, for $M_j \sim \Unif([2^{nR_j}])$, the channel input $X_j^n = f_j(M_j)$ satisfies
	\begin{equation} \label{eqn:asym_uplink_cran_1}
		\frac{1}{2}\sum_{x_j^n}\left| \P\{X_j^n=x_j^n\} - \prod_{i=1}^n p(x_{ji}) \right| \leq \delta_j^{\mathrm{MAC}},
	\end{equation}
	for $j = 1,2$ and some $\delta_1^{\mathrm{MAC}}$, $\delta_2^{\mathrm{MAC}} > 0$. Let $\eps$ be the average probability of error of the multiple access channel code when the channel input distributions are i.i.d. according to $p(x_1)p(x_2)$.
\end{lego}

\begin{lego}[\textbf{BT} $\to$ \textbf{UL C-RAN}]
	an $(R_3,R_4,n)$ Berger--Tung code $(f_3,f_4,\psi)$ for a $p(y_1,y_2)$-source that targets conditional distributions $p(\hat{y}_1 \cond y_1)p(\hat{y}_2 \cond y_2)$, such that, for $(Y_1^n,Y_2^n) \IID p(y_1,y_2)$, the sequences $(\widehat{Y}_1^n,\widehat{Y}_2^n) = \psi\big(g_3(Y_1^n), g_4(Y_2^n)\big)$ satisfy
	\begin{equation}\label{eqn:lossy_uplink_cran}
		\frac{1}{2}\hspace*{-0.1em}\sum_{y_j^n, \, \hat{y}_j^n}\hspace*{-0.1em} \left| \P\{Y_j^n\hspace*{-0.1em}=\hspace*{-0.1em}y_j^n,\widehat{Y}_j^n \hspace*{-0.1em}= \hspace*{-0.1em}\hat{y}_j^n\} \hspace*{-0.1em}-\hspace*{-0.1em} \prod_{i=1}^n  p(y_{ji})p(\hat{y}_{ji} \cond y_{ji})\right| \leq \delta_j^{\mathrm{BT}}
	\end{equation}
	for $j = 1,2$ and some $\delta_1^{\mathrm{BT}}$, $\delta_2^{\mathrm{BT}} > 0$. Furthermore, we assume that $R_3 < C_1$ and $R_4 < C_2$.
\end{lego}

\begin{remark}
    The multiple access channel code can be implemented using two asymmetric channel codes, as described in Appendix~\ref{sec:mac}, and the Berger--Tung code can be implemented using a lossy source code and a Wyner--Ziv code, as described in Appendix~\ref{sec:berger}.
\end{remark}

Fig.~\ref{fig:uplink_cran} shows the block diagram of the uplink C-RAN problem. The multiple access channel code is used to encode the user messages while shaping the channel input signals, whereas the Berger--Tung code is used to compress the received signals over the backhaul links. The central processor first decodes the compressed codewords using the Beger--Tung decoder, then decodes the user messages using the multiple access channel decoder. Following similar bounding techniques seen thus far, the average probability of error of the coding scheme can be bounded as
\[
\P\big\{ \{\widehat{M}_1 \neq M_1\} \cup \{\widehat{M}_2 \neq M_2\} \big\} \leq \eps + \delta_1^{\mathrm{MAC}} +\delta_2^{\mathrm{MAC}} + \delta_1^{\mathrm{BT}} + \delta_2^{\mathrm{BT}}.
\]
The user rate pair is given by $(R_1,R_2)$. Note that the conditions $R_3< C_1$ and $R_4< C_2$ are needed so that the compression rates of the coding scheme do not exceed the backhaul link capacities.

\begin{figure*}[t]
	\centering
	\hspace*{1em}
	\def\svgscale{1.25}
	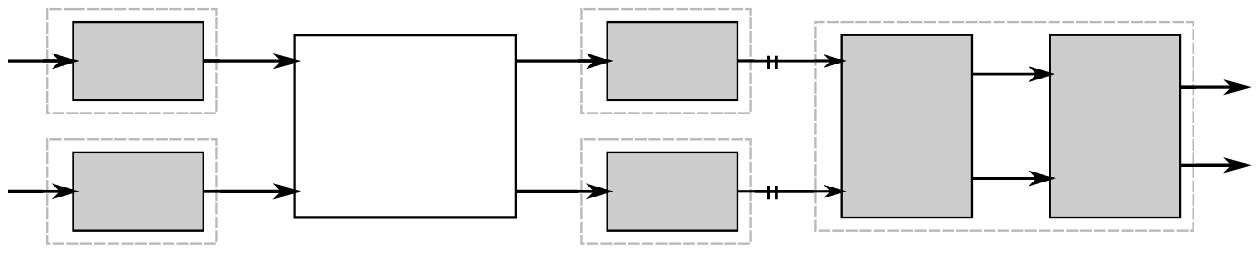
	\caption{Coding scheme for the uplink C-RAN problem using a multiple access channel code and a Berger--Tung code.}
	\label{fig:uplink_cran}
\end{figure*}

\begin{remark} \label{remark:uplink_cran_achievable_rates}
	For sufficiently large $n$, the rate pair $(R_1,R_2)$ can be made arbitrarily close to $\big(I(X_1;\widehat{Y}_1,\widehat{Y}_2),I(X_2;\widehat{Y}_1,\widehat{Y}_2 \cond X_1)\big)$ (e.g., using the multiple access channel coding scheme of Appendix~\ref{sec:mac_coding_scheme}). Similarly, for sufficiently large $n$, the compression rate pair $(R_3,R_4)$ can be chosen to be arbitrarily close to $\big(I(Y_1;\widehat{Y}_1), I(Y_2;\widehat{Y}_2) - I(\widehat{Y}_1;\widehat{Y}_2)\big)$ (e.g., using the Berger--Tung coding scheme of Appendix~\ref{sec:berger_coding_scheme}). It can be checked that these coding and compression rate pairs correspond to a corner point of the compress-and-forward achievable rate region given in~(\ref{eqn:uplink_cran_rate_region}). Note that the decoding at the central processor in Fig.~\ref{fig:uplink_cran} is done in the order $\widehat{Y}_1 - \widehat{Y}_2 - X_1 - X_2$. By exploiting other decoding orders, the coding scheme can achieve other corner points of the rate region~(\ref{eqn:uplink_cran_rate_region}).
\end{remark}

\begin{remark}
	A similar coding scheme can be constructed for the general $K$-user, $L$-relay uplink C-RAN problem using a code for a $K$-user multiple access channel and a Berger--Tung code for a distributed lossy compression problem with $L$ sources.
\end{remark}

\begin{remark}
    Note that the uplink C-RAN coding scheme can be implemented using point-to-point channel codes by constructing the multiple access channel and Berger--Tung codes using their constituent point-to-point channel codes, as described in Appendices~\ref{sec:mac} and~\ref{sec:berger} respectively. Nevertheless, note that our construction of a Berger--Tung code using point-to-point channel codes requires binary sources. Hence, the channel output alphabets $\cY_1$ and $\cY_2$ of the uplink C-RAN model should be binary in this case.
\end{remark}

\subsection{Simulation Results} \label{sec:uplink_cran_simulation}
By implementing the multiple access channel code and the Berger--Tung code using their constituent point-to-point channel codes, the coding scheme for uplink C-RAN can be simulated using point-to-point channel codes. We consider a two-user two-relay model with a binary-quantized Gaussian channel between the users and the relays. That is, the channel output at the relays can be expressed as
\begin{align*} \label{eqn:uplink_cran_model}
    \wtilde{Y}_1 &= X_1 + gX_2 + Z_1, \qquad Y_1 = q_1(\wtilde{Y}_1), \\
    \wtilde{Y}_2 &= X_2 + gX_1 + Z_2, \qquad Y_2 = q_2(\wtilde{Y}_2),
\end{align*}
where $g=0.9$, $(X_1, X_2) \in \{-\sqrt{P},\sqrt{P}\}^2$ are BPSK-modulated signals, $(Z_1, Z_2) \sim \cN({\bf 0},I)$ are independent Gaussian noise components, and $q_1(.)$ and $q_2(.)$ are binary quantizers. The use of binary quantizers is motivated by the case of severely capacity-limited backhaul links (e.g., when $C_1 \leq 1$ and $C_2\leq 1$). 

We first look at the maximum achievable sum-rate of the proposed coding scheme as a function of the power constraint $P$ and the backhaul capacity constraints $C_1$ and $C_2$. For the decoding order considered in Fig.~\ref{fig:uplink_cran} and the corresponding achievable rates (see Remark~\ref{remark:uplink_cran_achievable_rates}), the maximum achievable sum-rate can be expressed as
\begin{equation} \label{eqn:uplink_cran_optimization}
    C_{\mathrm{sum}} \triangleq \underset{\substack{p(x_1)p(x_2)p(\hat{y}_1 \cond y_1)p(\hat{y}_2 \cond y_2) \in \cD(C_1,C_2)}}{\max} \, I(X_1,X_2;\widehat{Y}_1,\widehat{Y}_2),
\end{equation}
where $\cD(C_1,C_2)$ is the set of all product distributions $p(x_1)p(x_2)p(\hat{y}_1 \cond y_1)p(\hat{y}_2 \cond y_2)$ satisfying $C_1>I(Y_1;\widehat{Y}_1)$ and $C_2>I(Y_2;\widehat{Y}_2)-I(\widehat{Y}_1;\widehat{Y}_2)$.
We use the genetic algorithm~\cite{Holland1992, Luke2013} as an efficient heuristic method to perform the optimization in (\ref{eqn:uplink_cran_optimization}). The quantizers $q_1(.)$ and $q_2(.)$ are optimized using the Lloyd--Max algorithm~\cite{Lloyd1982,Max1960}. That is, for any pair of input distributions $p(x_1)$ and $p(x_2)$, the distribution of $\wtilde{Y}_1$ (resp., $\wtilde{Y}_2$) is derived and used to compute the optimal partition points and quantization levels of $q_1(.)$ (resp., $q_2(.)$) using Lloyd--Max. Fig.~\ref{fig:uplink_cran_achievable_rates} shows the plot of $C_{\mathrm{sum}}$ as a function of the power constraint $P$ under different backhaul capacity constraints $C_1$ and $C_2$.

\begin{figure}[t]
	\centering
	\includegraphics[width=\columnwidth]{./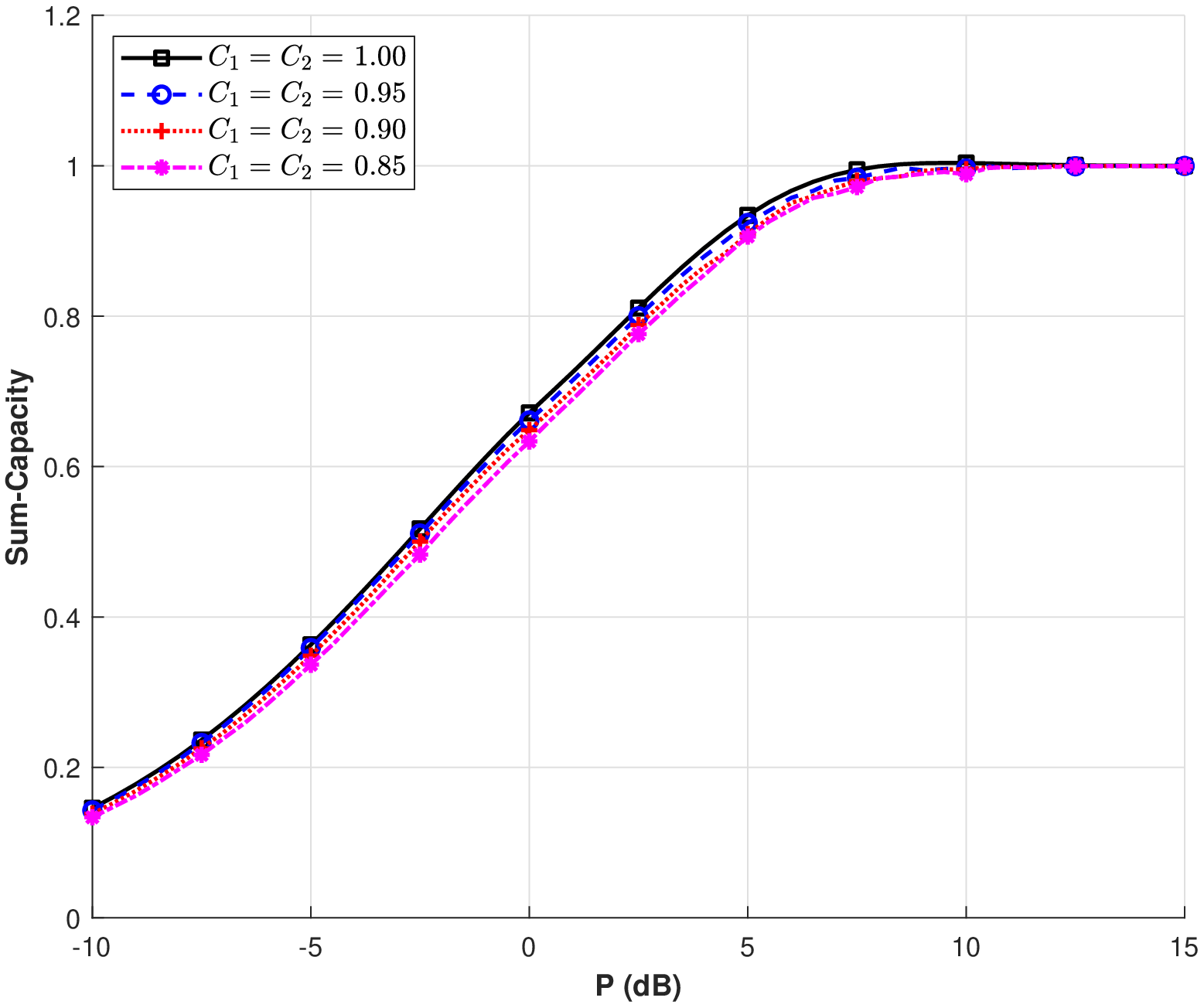}
	\caption{The maximum achievable sum-rate of the proposed coding scheme for the uplink C-RAN problem under different backhaul capacity constraints.}
	\label{fig:uplink_cran_achievable_rates}
\end{figure}

\begin{figure}[t]
	\centering
	\hspace*{-1em}
	\includegraphics[width=\columnwidth]{./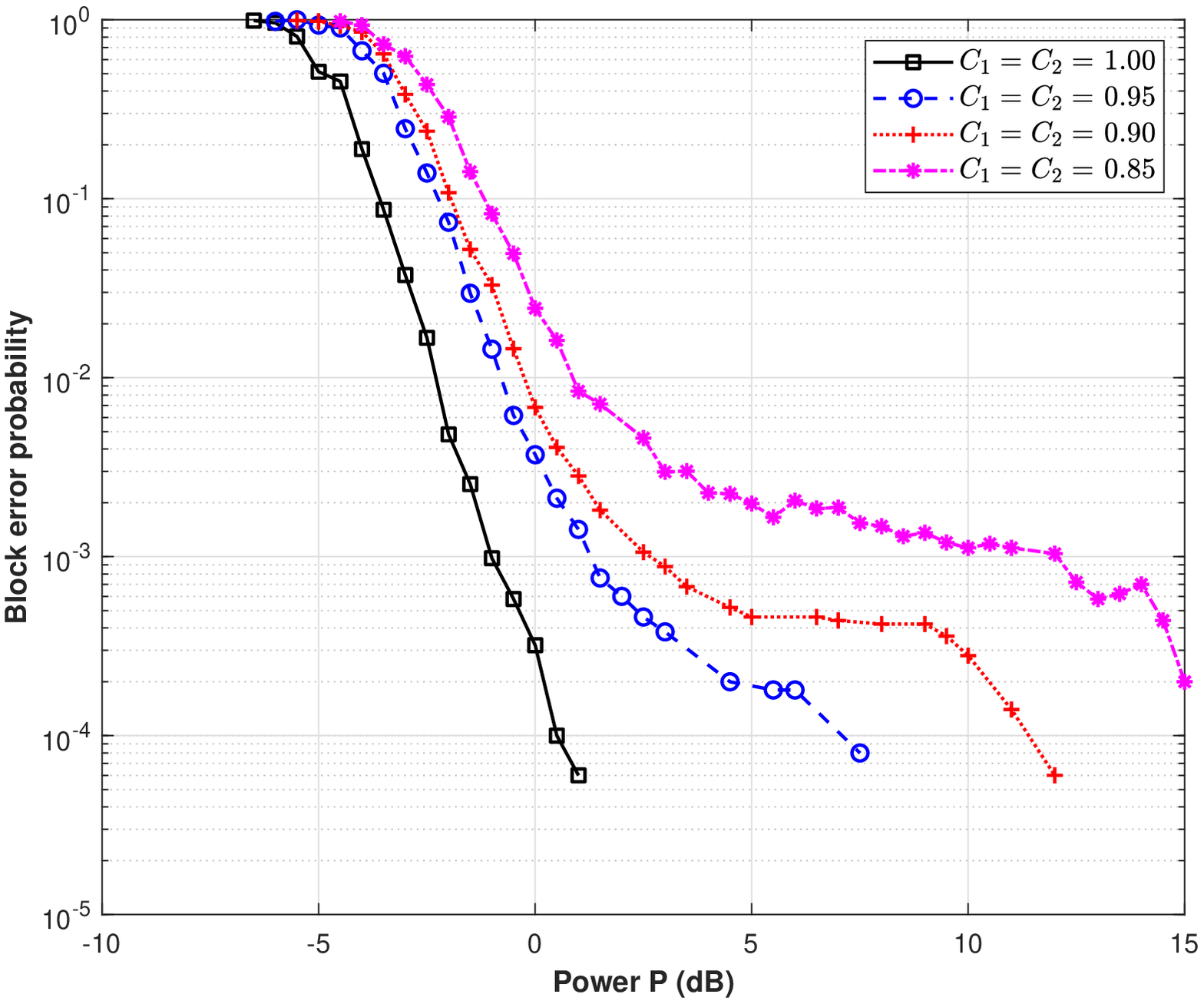}
	\caption{Simulation results of the uplink C-RAN coding scheme for a block length $n=1024$ and sum-rate $R_\mathrm{sum} = 0.25$ under different backhaul capacity constraints.}
	\label{fig:uplink_cran_simulation}
\end{figure}

Next, the proposed coding scheme for the uplink C-RAN problem is simulated for a block length $n=1024$ and a sum-rate $R_{\mathrm{sum}} = 0.25$ using polar codes with successive cancellation decoding as the constituent point-to-point channel codes. Note that the multiple access channel code can be implemented using two asymmetric channel codes, each of which can be implemented using a pair of point-to-point channel codes. Moreover, the Berger--Tung code can be implemented using a lossy source code and a Wyner--Ziv code, each of which can be implemented using a pair of point-to-point channel codes. Hence, the uplink C-RAN coding scheme can be constructed starting from eight polar codes. Let $(R_{11},R_{12},R_{21},R_{22}, R_{31}, R_{32}, R_{41}, R_{42})$ be the rates of the constituent polar codes, where $(R_{11},R_{12})$ and $(R_{21},R_{22})$ are the rates of the two polar codes needed to construct the first and second asymmetric channel code respectively, and $(R_{31},R_{32})$ and $(R_{41},R_{42})$ are the rates of the two polar codes needed to construct the lossy source code and Wyner--Ziv code respectively. The rates are chosen ``close'' to their theoretical limits, i.e., we set\footnote{These theoretical limits can be derived for the uplink C-RAN model as a consequence of Remarks~\ref{remark:lossy_asym_achievability}, \ref{remark:wyner_rates}, and~\ref{remark:asym_rates}.}
\begin{equation}\label{eqn:uplink_cran_rates}
    \begin{aligned}
        R_{11} &=1-H(X_1\cond \widehat{Y}_1,\widehat{Y}_2)-\gamma_r,\\
        R_{12} &= 1-H(X_1), \\
        R_{21} &=1-H(X_2\cond \widehat{Y}_1,\widehat{Y}_2,X_1)-\gamma_r,\\
        R_{22} &= 1-H(X_2), \\
        R_{31} &= 1-H(\widehat{Y}_1 \cond Y_1),\\
        R_{32} &= 1-H(\widehat{Y}_1)-\gamma_c,\\
        R_{41} &= 1-H(\widehat{Y}_2 \cond Y_2),\\
        R_{42} &= 1-H(\widehat{Y}_2 \cond \widehat{Y}_1)-\gamma_c,
    \end{aligned}
\end{equation}
\noindent where $\gamma_r > 0$ and $\gamma_c >0$ are ``back-off'' parameters from the theoretical limit that are used for the polar codes involved in error correction (i.e., not shaping). Thus, the sum-rate attained by the coding scheme is equal to 
\[
R_{11}-R_{12}+R_{21}-R_{22} = I(X_1,X_2;\widehat{Y}_1,\widehat{Y}_2) -2\gamma_r.
\]
Similar to the approach taken in Appendix~\ref{sec:marton_simulation}, the coding scheme targets the distributions $p(x_1)p(x_2)p(\hat{y}_1 \cond y_1)p(\hat{y}_2 \cond y_2)$ that maximize $C_{\mathrm{sum}}$ under the optimization problem of (\ref{eqn:uplink_cran_optimization}), where the power constraint $P$ is set to be equal to the power level $P^{*}$ at which $C_{\mathrm{sum}} = R_{\mathrm{sum}} + 2\gamma_r$.\footnote{As before, the power level $P^{*}$ can be estimated using the plot of Fig.~\ref{fig:uplink_cran_achievable_rates}.} This guarantees that the sum-rate of the coding scheme is equal to $R_{\mathrm{sum}}$. Under the target distributions, the rates of the constituent polar codes are set according to~(\ref{eqn:uplink_cran_rates}). Note that $\gamma_c$ should be chosen so that $R_{31}-R_{32} \leq C_1$ and $R_{41}-R_{42} \leq C_2$. In our simulations over the uplink C-RAN channel model, we used $\gamma_r = 1/8$ and $\gamma_c = 5/32$. For more details about the simulation setup, our code is available on GitHub~\cite{Github}. The block error rate performance of the uplink C-RAN coding scheme is shown in Fig.~\ref{fig:uplink_cran_simulation} for different backhaul capacity constraints $C_1$ and $C_2$. The results demonstrate the practicality of the proposed coding scheme over the uplink C-RAN channel model. Note that the observed error floor is due to binary quantization. 

\end{appendices}

\section*{Acknowledgement}
The authors would like to thank the anonymous reviewers for several valuable comments that helped improve the presentation of this paper. This work was supported in part by the Institute for Information \& Communication Technology Planning \& Evaluation (IITP) grant funded by the Korean government (MSIT) (No. 2018-0-01659, 5G Open Intelligence-Defined RAN (ID-RAN) Technique based on 5G New Radio), in part by the NSERC Discovery Grant No. RGPIN-2019-05448, and in part by the NSERC Collaborative Research and Development Grant CRDPJ 543676-19.

\bibliography{bibliography}

\begin{thebibliography}{10}
\providecommand{\url}[1]{#1}
\csname url@samestyle\endcsname
\providecommand{\newblock}{\relax}
\providecommand{\bibinfo}[2]{#2}
\providecommand{\BIBentrySTDinterwordspacing}{\spaceskip=0pt\relax}
\providecommand{\BIBentryALTinterwordstretchfactor}{4}
\providecommand{\BIBentryALTinterwordspacing}{\spaceskip=\fontdimen2\font plus
\BIBentryALTinterwordstretchfactor\fontdimen3\font minus
  \fontdimen4\font\relax}
\providecommand{\BIBforeignlanguage}[2]{{%
\expandafter\ifx\csname l@#1\endcsname\relax
\typeout{** WARNING: IEEEtran.bst: No hyphenation pattern has been}%
\typeout{** loaded for the language `#1'. Using the pattern for}%
\typeout{** the default language instead.}%
\else
\language=\csname l@#1\endcsname
\fi
#2}}
\providecommand{\BIBdecl}{\relax}
\BIBdecl

\bibitem{Gelfand1980}
S.~Gelfand and M.~Pinsker, ``Coding for channel with random parameters,'' in
  \emph{Probl. Control Inf. Theory}, 1980, pp. 19--31.

\bibitem{Marton1979}
K.~Marton, ``A coding theorem for the discrete memoryless broadcast channel,''
  \emph{{IEEE} Trans. Inf. Theory}, vol.~25, no.~3, pp. 306--311, 1979.

\bibitem{Cover1979}
T.~Cover and A.~Gamal, ``Capacity theorems for the relay channel,''
  \emph{{IEEE} Trans. Inf. Theory}, vol.~25, no.~5, pp. 572--584, 1979.

\bibitem{NIT}
A.~El~Gamal and Y.-H. Kim, \emph{Network Information Theory}.\hskip 1em plus
  0.5em minus 0.4em\relax Cambridge: Cambridge University Press, 2011.

\bibitem{Berrou1993}
C.~{Berrou}, A.~{Glavieux}, and P.~{Thitimajshima}, ``Near shannon limit
  error-correcting coding and decoding: {Turbo}-codes. 1,'' in \emph{Proc.
  {IEEE} Int. Conf. on Comm.}, vol.~2, 1993, pp. 1064--1070 vol.2.

\bibitem{Gallager1962}
R.~{Gallager}, ``Low-density parity-check codes,'' \emph{{IRE} Trans. Inf.
  Theory}, vol.~8, no.~1, pp. 21--28, 1962.

\bibitem{Urbanke2001}
T.~J. {Richardson} and R.~L. {Urbanke}, ``The capacity of low-density
  parity-check codes under message-passing decoding,'' \emph{{IEEE} Trans. Inf.
  Theory}, vol.~47, no.~2, pp. 599--618, 2001.

\bibitem{Urbanke2013}
S.~{Kudekar}, T.~{Richardson}, and R.~L. {Urbanke}, ``Spatially coupled
  ensembles universally achieve capacity under belief propagation,''
  \emph{{IEEE} Trans. Inf. Theory}, vol.~59, no.~12, pp. 7761--7813, 2013.

\bibitem{Arikan2009}
E.~Ar{\i}kan, ``Channel polarization: {A} method for constructing
  capacity-achieving codes for symmetric binary-input memoryless channels,''
  \emph{{IEEE} Trans. Inf. Theory}, vol.~55, no.~7, pp. 3051--3073, Jul. 2009.

\bibitem{Arikan2012}
------, ``Polar coding for the {Slepian}-{Wolf} problem based on monotone chain
  rules,'' in \emph{Proc. {IEEE} Int. Symp. Inf. Theory}, 2012, pp. 566--570.

\bibitem{Korada2010}
S.~B. {Korada} and R.~L. {Urbanke}, ``Polar codes are optimal for lossy source
  coding,'' \emph{{IEEE} Trans. Inf. Theory}, vol.~56, no.~4, pp. 1751--1768,
  2010.

\bibitem{Korada2009}
S.~B. Korada, ``Polar codes for channel and source coding,'' {Ph.D.} Thesis,
  {\'E}cole Polytechnique F{\'e}d{\'e}rale de Lausanne, Lausanne, Switzerland,
  2009.

\bibitem{Shi2015}
Q.~Shi, L.~Song, C.~Tian, J.~Chen, and S.~Dumitrescu, ``Polar codes for
  multiple descriptions,'' \emph{{IEEE} Trans. Inf. Theory}, vol.~61, no.~1,
  pp. 107--119, 2015.

\bibitem{Ghaddar2017}
A.~Bhatt, N.~Ghaddar, and L.~Wang, ``Polar coding for multiple descriptions
  using monotone chain rules,'' in \emph{Proc. 55th Ann. Allerton Conf. Comm.
  Control Comput.}, 2017, pp. 565--571.

\bibitem{Sasoglu2010}
E.~Şaşoğlu, E.~Telatar, and E.~Yeh, ``Polar codes for the two-user
  binary-input multiple-access channel,'' in \emph{Proc. {IEEE} Inf. Theory
  Workshop}, 2010, pp. 1--5.

\bibitem{Abbe2012}
E.~Abbe and E.~Telatar, ``Polar codes for the $m$-user multiple access
  channel,'' \emph{{IEEE} Trans. Inf. Theory}, vol.~58, no.~8, pp. 5437--5448,
  2012.

\bibitem{Mondelli2015}
M.~Mondelli, S.~H. Hassani, I.~Sason, and R.~L. Urbanke, ``Achieving
  {Marton}’s region for broadcast channels using polar codes,'' \emph{{IEEE}
  Trans. Inf. Theory}, vol.~61, no.~2, pp. 783--800, 2015.

\bibitem{Wang2014}
L.~Wang and E.~Şaşoğlu, ``Polar coding for interference networks,'' in
  \emph{Proc. {IEEE} Int. Symp. Inf. Theory}, 2014, pp. 311--315.

\bibitem{Serrano2012}
R.~Blasco-Serrano, R.~Thobaben, M.~Andersson, V.~Rathi, and M.~Skoglund,
  ``Polar codes for cooperative relaying,'' \emph{{IEEE} Trans.
  Communications}, vol.~60, no.~11, pp. 3263--3273, 2012.

\bibitem{Wang2015_2}
L.~Wang, ``Polar coding for relay channels,'' in \emph{Proc. {IEEE} Int. Symp.
  Inf. Theory}, 2015, pp. 1532--1536.

\bibitem{Matsunaga2003}
Y.~Matsunaga and H.~Yamamoto, ``A coding theorem for lossy data compression by
  {LDPC} codes,'' \emph{{IEEE} Trans. Inf. Theory}, vol.~49, no.~9, pp.
  2225--2229, 2003.

\bibitem{Muramatsu2010}
J.~Muramatsu and S.~Miyake, ``Hash property and coding theorems for sparse
  matrices and maximum-likelihood coding,'' \emph{{IEEE} Trans. Inf. Theory},
  vol.~56, no.~5, pp. 2143--2167, 2010.

\bibitem{Wainwright2010}
M.~J. Wainwright, E.~Maneva, and E.~Martinian, ``Lossy source compression using
  low-density generator matrix codes: Analysis and algorithms,'' \emph{{IEEE}
  Trans. Inf. Theory}, vol.~56, no.~3, pp. 1351--1368, 2010.

\bibitem{Wainwright2005}
M.~Wainwright and E.~Maneva, ``Lossy source encoding via message-passing and
  decimation over generalized codewords of {LDGM} codes,'' in \emph{Proc.
  {IEEE} Int. Symp. Inf. Theory}, 2005, pp. 1493--1497.

\bibitem{Aref2015}
V.~Aref, N.~Macris, and M.~Vuffray, ``Approaching the rate-distortion limit
  with spatial coupling, belief propagation, and decimation,'' \emph{{IEEE}
  Trans. Inf. Theory}, vol.~61, no.~7, pp. 3954--3979, 2015.

\bibitem{Kumar2014}
S.~Kumar, A.~Vem, K.~Narayanan, and H.~D. Pfister, ``Spatially-coupled codes
  for side-information problems,'' in \emph{Proc. {IEEE} Int. Symp. Inf.
  Theory}, 2014, pp. 516--520.

\bibitem{Costa1983}
M.~Costa, ``Writing on dirty paper (corresp.),'' \emph{{IEEE} Trans. Inf.
  Theory}, vol.~29, no.~3, pp. 439--441, 1983.

\bibitem{Erez2005}
U.~Erez, S.~Shamai, and R.~Zamir, ``Capacity and lattice strategies for
  canceling known interference,'' \emph{{IEEE} Trans. Inf. Theory}, vol.~51,
  no.~11, pp. 3820--3833, 2005.

\bibitem{Sommer2008}
N.~Sommer, M.~Feder, and O.~Shalvi, ``Low-density lattice codes,'' \emph{{IEEE}
  Trans. Inf. Theory}, vol.~54, no.~4, pp. 1561--1585, 2008.

\bibitem{Shalvi2011}
O.~Shalvi, N.~Sommer, and M.~Feder, ``Signal codes: {Convolutional} lattice
  codes,'' \emph{{IEEE} Trans. Inf. Theory}, vol.~57, no.~8, pp. 5203--5226,
  2011.

\bibitem{Pradhan2011}
A.~Padakandla and S.~S. Pradhan, ``Nested linear codes achieve {Marton}'s inner
  bound for general broadcast channels,'' in \emph{Proc. {IEEE} Int. Symp. Inf.
  Theory}, 2011, pp. 1554--1558.

\bibitem{Lele2020}
L.~Wang, Y.-H. Kim, C.-Y. Chen, H.~Park, and E.~Şaşoğlu, ``Sliding-window
  superposition coding: Two-user interference channels,'' \emph{{IEEE} Trans.
  Inf. Theory}, vol.~66, no.~6, pp. 3293--3316, 2020.

\bibitem{Wyner1974}
A.~D. Wyner, ``Recent results in the {S}hannon theory,'' \emph{{IEEE} Trans.
  Inf. Theory}, vol.~20, no.~1, pp. 2--10, Jan. 1974.

\bibitem{Ancheta1976}
T.~Ancheta, ``Syndrome-source-coding and its universal generalization,''
  \emph{{IEEE} Trans. Inf. Theory}, vol.~22, no.~4, pp. 432--436, 1976.

\bibitem{Yang2009}
J.~Chen, D.-k. He, A.~Jagmohan, L.~A. Lastras-Montano, and E.-h. Yang, ``On the
  linear codebook-level duality between {Slepian}–{Wolf} coding and channel
  coding,'' \emph{{IEEE} Trans. Inf. Theory}, vol.~55, no.~12, pp. 5575--5590,
  2009.

\bibitem{Lele2015}
L.~{Wang} and Y.~{Kim}, ``Linear code duality between channel coding and
  {Slepian}-{Wolf} coding,'' in \emph{Proc. 53rd Ann. Allerton Conf. Comm.
  Control Comput.}, 2015, pp. 147--152.

\bibitem{Mondelli2018}
M.~Mondelli, S.~H. Hassani, and R.~L. Urbanke, ``How to achieve the capacity of
  asymmetric channels,'' \emph{{IEEE} Trans. Inf. Theory}, vol.~64, no.~5, pp.
  3371--3393, 2018.

\bibitem{Muramatsu2014}
J.~Muramatsu, ``Channel coding and lossy source coding using a generator of
  constrained random numbers,'' \emph{{IEEE} Trans. Inf. Theory}, vol.~60,
  no.~5, pp. 2667--2686, 2014.

\bibitem{Muramatsu2019}
J.~Muramatsu and S.~Miyake, ``Channel code using constrained-random-number
  generator revisited,'' \emph{{IEEE} Trans. Inf. Theory}, vol.~65, no.~1, pp.
  500--510, 2019.

\bibitem{Ganguly2020}
S.~Ganguly, L.~Wang, and Y.-H. Kim, ``A functional construction of codes for
  multiple access and broadcast channels,'' in \emph{Proc. {IEEE} Int. Symp.
  Inf. Theory}, 2020, pp. 1581--1586.

\bibitem{Ghaddar2022}
N.~Ghaddar, S.~Ganguly, L.~Wang, and Y.-H. Kim, ``A {Lego}-brick approach to
  lossy source coding,'' in \emph{Proc. Canadian Workshop Inf. Theory}, 2022,
  pp. 45--50.

\bibitem{Ghaddar2021}
------, ``A {Lego}-brick approach to coding for asymmetric channels and
  channels with state,'' in \emph{Proc. {IEEE} Int. Symp. Inf. Theory}, 2021,
  pp. 1367--1372.

\bibitem{Slepian1973}
D.~{Slepian} and J.~{Wolf}, ``Noiseless coding of correlated information
  sources,'' \emph{{IEEE} Trans. Inf. Theory}, vol.~19, no.~4, pp. 471--480,
  1973.

\bibitem{Shannon1948}
C.~E. Shannon, ``A mathematical theory of communication,'' \emph{Bell Syst.
  Tech. J.}, vol.~27, no.~3, pp. 379--423, 27(4):623--656, 1948.

\bibitem{Shannon1959}
------, ``Coding theorems for a discrete source with a fidelity criterion,''
  \emph{IRE Nat. Conv. Rec}, vol.~4, no. 142-163, p.~1, 1959.

\bibitem{Wyner1976}
A.~Wyner and J.~Ziv, ``The rate-distortion function for source coding with side
  information at the decoder,'' \emph{{IEEE} Trans. Inf. Theory}, vol.~22,
  no.~1, pp. 1--10, 1976.

\bibitem{Shannon1961}
C.~E. Shannon, ``Two-way communication channels,'' in \emph{Proc. 4TH Berkeley
  Symp. Math. Statist. Probab.}, 1961, pp. 611--644.

\bibitem{Berger1978}
T.~Berger, ``Multiterminal source coding,'' in \emph{The Information Theory
  Approach to Communications}.\hskip 1em plus 0.5em minus 0.4em\relax
  Springer-Verlag, New York, 1978, pp. 171--231.

\bibitem{Witsenhausen1980}
H.~S. Witsenhausen, ``On source networks with minimal breakdown degradation,''
  \emph{Bell Syst. Tech. J.}, vol.~59, no.~6, pp. 1083--1087, 1980.

\bibitem{Simeone2016}
O.~Simeone, A.~Maeder, M.~Peng, O.~Sahin, and W.~Yu, ``Cloud radio access
  network: {Virtualizing} wireless access for dense heterogeneous systems,''
  \emph{J. Comm. Networks.}, vol.~18, no.~2, pp. 135--149, 2016.

\bibitem{Zamir2002}
R.~Zamir, S.~Shamai, and U.~Erez, ``Nested linear/lattice codes for structured
  multiterminal binning,'' \emph{{IEEE} Trans. Inf. Theory}, vol.~48, no.~6,
  pp. 1250--1276, 2002.

\bibitem{Berger1998}
T.~Berger and J.~Gibson, ``Lossy source coding,'' \emph{{IEEE} Trans. Inf.
  Theory}, vol.~44, no.~6, pp. 2693--2723, 1998.

\bibitem{3gpp.cdma}
3GPP, ``{Physical Layer Standard for cdma2000 Spread Spectrum Systems},'' {3rd
  Generation Partnership Project (3GPP)}, Technical Specification (TS) 3GPP2
  C.S0002-A, 2 2002, version 6.0.

\bibitem{3gpp.38.212}
------, ``{NR: Multiplexing and Channel Coding},'' {3rd Generation Partnership
  Project (3GPP)}, Technical Specification (TS) 38.212, 7 2020, version 16.2.0.

\bibitem{Gallager1968}
R.~Gallager, \emph{Information Theory and Reliable Communication}.\hskip 1em
  plus 0.5em minus 0.4em\relax New York: John Wiley and Sons, 1968.

\bibitem{Elias1955}
P.~Elias, ``Coding for noisy channels,'' \emph{IRE Conv. Rec.}, vol.~3, pp.
  37--46, 1955.

\bibitem{Bennett2002}
C.~Bennett, P.~Shor, J.~Smolin, and A.~Thapliyal, ``Entanglement-assisted
  capacity of a quantum channel and the reverse shannon theorem,'' \emph{{IEEE}
  Trans. Inf. Theory}, vol.~48, no.~10, pp. 2637--2655, 2002.

\bibitem{Cuff2013}
P.~Cuff, ``Distributed channel synthesis,'' \emph{{IEEE} Trans. Inf. Theory},
  vol.~59, no.~11, pp. 7071--7096, 2013.

\bibitem{Kudekar2017}
S.~Kudekar, S.~Kumar, M.~Mondelli, H.~D. Pfister, E.~Şaşoǧlu, and R.~L.
  Urbanke, ``{Reed}–{Muller} codes achieve capacity on erasure channels,''
  \emph{{IEEE} Trans. Inf. Theory}, vol.~63, no.~7, pp. 4298--4316, 2017.

\bibitem{Reeves2021}
\BIBentryALTinterwordspacing
G.~Reeves and H.~D. Pfister, ``{Reed}-{Muller} codes achieve capacity on {BMS}
  channels,'' \emph{CoRR}, 2021. [Online]. Available:
  \url{https://arxiv.org/abs/2110.14631}
\BIBentrySTDinterwordspacing

\bibitem{Abbe2023}
\BIBentryALTinterwordspacing
E.~Abbe and C.~Sandon, ``A proof that {Reed}-{Muller} codes achieve shannon
  capacity on symmetric channels,'' \emph{CoRR}, 2023. [Online]. Available:
  \url{https://arxiv.org/abs/2304.02509}
\BIBentrySTDinterwordspacing

\bibitem{Weiss1962}
E.~Weiss, ``Compression and coding (corresp.),'' \emph{{IRE} Trans. Inf.
  Theory}, vol.~8, pp. 256--257, 1962.

\bibitem{Github}
\BIBentryALTinterwordspacing
N.~Ghaddar, S.~Ganguly, L.~Wang, and Y.-H. Kim, ``{Lego-Brick Approach to
  Coding for Network Communication},'' 2022. [Online]. Available:
  \url{https://github.com/nadimgh/lego-brick}
\BIBentrySTDinterwordspacing

\bibitem{Ahlswede1971}
R.~Ahlswede, ``Multiway communication channels,'' \emph{Proc. 2nd Int. Symp.
  Inf. Theory}, pp. 23--52, 1971.

\bibitem{Slepian1973b}
D.~Slepian and J.~K. Wolf, ``A coding theorem for multiple access channels with
  correlated sources,'' \emph{Bell Syst. Tech. J.}, vol.~52, no.~7, pp.
  1037--1076, 1973.

\bibitem{Joham2005}
M.~Joham, W.~Utschick, and J.~Nossek, ``Linear transmit processing in {MIMO}
  communications systems,'' \emph{{IEEE} Trans. Sig. Processing}, vol.~53,
  no.~8, pp. 2700--2712, 2005.

\bibitem{Kennedy1995}
J.~Kennedy and R.~Eberhart, ``Particle swarm optimization,'' in \emph{Proc.
  Int. Conf. on Neural Networks}, vol.~4, 1995, pp. 1942--1948 vol.4.

\bibitem{Luke2013}
S.~Luke, \emph{Essentials of Metaheuristics}, 2nd~ed.\hskip 1em plus 0.5em
  minus 0.4em\relax Lulu, 2013.

\bibitem{Lee2006}
J.~Lee and N.~Jindal, ``Dirty paper coding vs. linear precoding for {MIMO}
  broadcast channels,'' in \emph{Proc. 40th Asilomar Conf. Sig. Syst. Comput.},
  2006, pp. 779--783.

\bibitem{Goldsmith2005}
N.~Jindal and A.~Goldsmith, ``Dirty-paper coding versus {TDMA} for {MIMO}
  broadcast channels,'' \emph{{IEEE} Trans. Inf. Theory}, vol.~51, no.~5, pp.
  1783--1794, 2005.

\bibitem{Tung1978}
S.-Y. Tung, ``Multiterminal source coding,'' {Ph.D.} Thesis, Cornell
  University, Ithaca, NY, 1978.

\bibitem{ElGamal1982}
A.~Gamal and T.~Cover, ``Achievable rates for multiple descriptions,''
  \emph{{IEEE} Trans. Inf. Theory}, vol.~28, no.~6, pp. 851--857, 1982.

\bibitem{Ganguly2021}
S.~Ganguly, S.-E. Hong, and Y.-H. Kim, ``On the capacity regions of cloud radio
  access networks with limited orthogonal fronthaul,'' \emph{{IEEE} Trans. Inf.
  Theory}, vol.~67, no.~5, pp. 2958--2988, 2021.

\bibitem{Johnson2019}
C.~Johnson, \emph{5G New Radio in Bullets}.\hskip 1em plus 0.5em minus
  0.4em\relax Independently Published, 2019.

\bibitem{3gpp.38.874}
3GPP, ``{NR: Study on Integrated Access and Backhaul},'' {3rd Generation
  Partnership Project (3GPP)}, Technical Specification (TS) 38.874, 11 2018,
  version 0.7.0.

\bibitem{Xiao2015}
X.~Yi and N.~Liu, ``An achievability scheme for downlink multicell processing
  with finite backhaul capacity: The general case,'' in \emph{{IEEE} Int. Conf.
  on Wireless Comm. and Sig. Processing}, 2015, pp. 1--5.

\bibitem{Park2014}
S.-H. Park, O.~Simeone, O.~Sahin, and S.~Shamai~Shitz, ``Fronthaul compression
  for cloud radio access networks: {Signal} processing advances inspired by
  network information theory,'' \emph{{IEEE} Sig. Processing Magazine},
  vol.~31, no.~6, pp. 69--79, 2014.

\bibitem{Patil2019}
P.~Patil and W.~Yu, ``Generalized compression strategy for the downlink cloud
  radio access network,'' \emph{{IEEE} Trans. Inf. Theory}, vol.~65, no.~10,
  pp. 6766--6780, 2019.

\bibitem{SungHoonLim2017}
S.~H. Lim, K.~T. Kim, and Y.-H. Kim, ``Distributed decode–forward for relay
  networks,'' \emph{{IEEE} Trans. Inf. Theory}, vol.~63, no.~7, pp. 4103--4118,
  2017.

\bibitem{Holland1992}
J.~H. Holland, \emph{Adaptation in Natural and Artificial Systems: An
  Introductory Analysis with Applications to Biology, Control and Artificial
  Intelligence}.\hskip 1em plus 0.5em minus 0.4em\relax Cambridge, MA, USA: MIT
  Press, 1992.

\bibitem{Zhou2016}
Y.~Zhou, Y.~Xu, W.~Yu, and J.~Chen, ``On the optimal fronthaul compression and
  decoding strategies for uplink cloud radio access networks,'' \emph{{IEEE}
  Trans. Inf. Theory}, vol.~62, no.~12, pp. 7402--7418, 2016.

\bibitem{Zhou2014}
Y.~Zhou and W.~Yu, ``Optimized backhaul compression for uplink cloud radio
  access network,'' \emph{{IEEE} Journ. Sel. Are. Comm.}, vol.~32, no.~6, pp.
  1295--1307, 2014.

\bibitem{Sanderovich2009}
A.~Sanderovich, O.~Somekh, H.~V. Poor, and S.~Shamai, ``Uplink macro diversity
  of limited backhaul cellular network,'' \emph{{IEEE} Trans. Inf. Theory},
  vol.~55, no.~8, pp. 3457--3478, 2009.

\bibitem{Kramer2005}
G.~Kramer, M.~Gastpar, and P.~Gupta, ``Cooperative strategies and capacity
  theorems for relay networks,'' \emph{{IEEE} Trans. Inf. Theory}, vol.~51,
  no.~9, pp. 3037--3063, 2005.

\bibitem{Lloyd1982}
S.~Lloyd, ``Least squares quantization in {PCM},'' \emph{{IEEE} Trans. Inf.
  Theory}, vol.~28, no.~2, pp. 129--137, 1982.

\bibitem{Max1960}
J.~Max, ``Quantizing for minimum distortion,'' \emph{{IRE} Trans. Inf. Theory},
  vol.~6, no.~1, pp. 7--12, 1960.

\end{thebibliography}
\bibliographystyle{IEEEtran}

\begin{IEEEbiographynophoto}{Nadim Ghaddar (S’16)}
received a B.E. degree in computer and communication engineering from the American University of Beirut, Lebanon, in 2014, an M.S. degree in communication systems from the {\'E}cole Polytechnique F{\'e}d{\'e}rale de Lausanne (EPFL), Switzerland, in 2016, and a Ph.D. degree in Communication Theory and Systems from the University of California San Diego, La Jolla, CA, USA, in 2022. He is currently a postdoctoral research fellow in the Department of Electrical and Computer Engineering at the University of Toronto. His research interests include modern coding theory, information theory, signal processing and wireless communication systems.
\end{IEEEbiographynophoto}

\begin{IEEEbiographynophoto}{Shouvik Ganguly (Member, IEEE)} 
received the B.Tech. degree in electrical engineering from IIT Kanpur, India in 2013 and the Ph.D. degree in electrical engineering from the University of California San Diego (UCSD) in 2020. From 2020—2022, he was with XCOM Labs, San Diego, CA, USA as a member of technical staff. In 2022, he joined Samsung Research America, Plano, TX, USA, where he is currently a staff research engineer. His research interests include network information theory and communication theory.
\end{IEEEbiographynophoto}

\begin{IEEEbiographynophoto}{Lele Wang}
is an Assistant Professor in the Department of Electrical and Computer Engineering at the University of British Columbia, Vancouver. Before joining UBC, she was an NSF Center for Science of Information postdoctoral fellow. From 2015 to 2017, she was a postdoctoral researcher at Stanford University and Tel Aviv University. She received her Ph.D. degree in Communication Theory and Systems at the University of California, San Diego, in 2015. She attended the Academic Talent Program and obtained a B.E. degree at Tsinghua University, China in 2009. Her research interests include information theory, coding theory, communication theory, statistical inference on graphs, high dimensional statistics, and generative models in machine learning. She is a recipient of the 2013 UCSD Shannon Memorial Fellowship, the 2013-2014 Qualcomm Innovation Fellowship, and the 2017 NSF Center for Science of Information Postdoctoral Fellowship. Her PhD thesis "Channel coding techniques for network communication" won the 2017 IEEE Information Theory Society Thomas M. Cover Dissertation Award. She serves as the Vice President of the Canadian Society of Information Theory 2022--2025, an associate editor for the IEEE Transactions on Communications 2023--2026, and a guest associate editor for the IEEE Journal on Selected Areas on Information Theory 2023.
\end{IEEEbiographynophoto}

\begin{IEEEbiographynophoto}{Young-Han Kim (S’99–M’06–SM’12–F’15)}
received his B.S. degree (Hons.) in electrical engineering from Seoul National University, Seoul, South Korea, in 1996, and M.S. degrees in electrical engineering and in statistics and Ph.D. degree in electrical engineering from Stanford University, Stanford, CA, USA, in 2001, 2006, and 2006, respectively. In 2006, he joined the University of California at San Diego, La Jolla, CA USA, where he is currently a Professor in the Department of Electrical and Computer Engineering. Since 2020, he has also been a founding CEO of Gauss Labs Inc., an industrial AI startup company. He has co-authored the book Network Information Theory (Cambridge University Press, 2011) and the monograph Fundamentals of Index Coding (Now Publishers, 2018). His current research interests include data science, machine learning, information theory, and their applications in manufacturing, microelectronics, communications, networking, cryptography, and bioinformatics. Prof. Kim was a recipient of the 2008 NSF Faculty Early Career Development Award, the 2009 US–Israel Binational Science Foundation Bergmann Memorial Award, the 2012 IEEE Information Theory Paper Award, and the 2015 IEEE Information Theory Society James L. Massey Research and Teaching Award for Young Scholars. He served as an Associate Editor of the IEEE Transactions on Information Theory and a Distinguished Lecturer for the IEEE Information Theory Society. He is a foreign member of the National Academy of Engineering of Korea.
\end{IEEEbiographynophoto}

\end{document}